\newtheorem{myDef}{Definition}
\newtheorem{myTheo}{Theorem}
\newtheorem{myLemma}{Lemma}
\newtheorem{exmp}{Example}
\newcommand{\Pset}{\mathcal{P}}
\begin{document}
\fancyhead{}
\title{Distributed Processing of {\em k} Shortest Path Queries over Dynamic Road Networks}

\affiliation{\large{\institution{Ziqiang Yu$^1$,\hspace{0.2em} Xiaohui Yu$^2$,\hspace{0.2em} Nick Koudas$^3$,\hspace{0.2em} Yang Liu$^4$,\hspace{0.2em} Yifan Li$^2$,\hspace{0.2em} Yueting Chen$^2$,\hspace{0.2em} Dingyu Yang$^5$}}}

 
\affiliation{\large{\institution{Yantai University$^1$ \hspace{0.2em} York University$^2$\hspace{0.2em} University of Toronto$^3$\hspace{0.2em} Wilfrid Laurier University$^4$\hspace{0.2em} Alibaba Group$^5$}}}

\begin{abstract}
The problem of identifying the {\em k}-shortest paths (KSPs for short) in a dynamic road network is essential to many location-based services. Road networks are dynamic in the sense that the weights of the edges in the corresponding graph constantly change over time, representing evolving traffic conditions. Very often such services have to process numerous KSP queries over large road networks at the same time, thus there is a pressing need to identify distributed solutions for this problem. However, most existing approaches are designed to identify KSPs on a static graph in a sequential manner (i.e., the $(i+1)^{th}$ shortest path is generated based on the $i^{th}$ shortest path), restricting their scalability and applicability in a distributed setting. We therefore propose KSP-DG, a distributed algorithm for identifying {\em k}-shortest paths in a dynamic graph. It is based on partitioning the entire graph into smaller subgraphs, and reduces the problem of determining KSPs into the computation of partial KSPs in relevant subgraphs, which can execute in parallel on a cluster of servers. A distributed two-level index called DTLP is developed to facilitate the efficient identification of relevant subgraphs. A salient feature of DTLP is that it indexes a set of virtual paths that are insensitive to varying traffic conditions, leading to very low maintenance cost in dynamic road networks. This is the first treatment of the problem of processing KSP queries over dynamic road networks. Extensive experiments conducted on real road networks confirm the superiority of our proposal over baseline methods.
\end{abstract}
\maketitle
\vspace{-0.5em}
\section{Introduction}\label{sec:intro}
In this work, we are concerned with identifying {\em k}-shortest paths (KSPs for short) over dynamic road networks. That is, for a given dynamic road network and a pair of origin and destination, identify the {\em k} shortest loop-less paths according to a predefined measure (e.g., travel time). The road network can be considered a graph where the intersections/endpoints are represented as vertices and roads as edges. It is dynamic in the sense that the travel time (or other similar measures such as congestion level) changes over time, corresponding to evolving weights of edges in the graph.  

Identifying KSPs in a dynamic road network is an essential building block in many location-based services: 

\begin{itemize}
\item Given a pair of origin and destination, most navigation services offer a number of candidate routes to their users, for which the shortest routes (in terms of either travel time or distance) are almost always generated along with other considerations such as avoiding toll roads or highways  \cite{li2014lpv,scano2015adaptations,liu2018finding}. As an example, Figure~\ref{google-map} shows the top-3  routes recommended by Google Maps for the trip from the Empire State Building to the Lincoln Center for the Performing Arts in New York City.  In practice, it is common for such navigation services to handle numerous concurrent queries simultaneously with a need to return the results immediately, in the presence of constantly evolving traffic conditions. 
\item In ride-sharing, services such as Uber and Lyft arrange one-time shared rides on very short notice  \cite{Ridesharing-vldb-2018,Ridesharing-Libin-2019}. When matching drivers and passengers, various factors are taken into account. For example, it is desirable to present the driver with a few alternative shortest routes from the origin to the destination to choose from. The aim is to present the driver with options to explore alternates between potential earnings (e.g., by picking up more passengers) and associated delays. With millions of users and thousands of rides to be arranged at any time, it is critical for  service providers to process $k$ shortest path queries efficiently. 
\end{itemize}

\captionsetup{font=scriptsize}
\begin{figure}[htbp]
\centering
\includegraphics[width=0.33\textwidth]{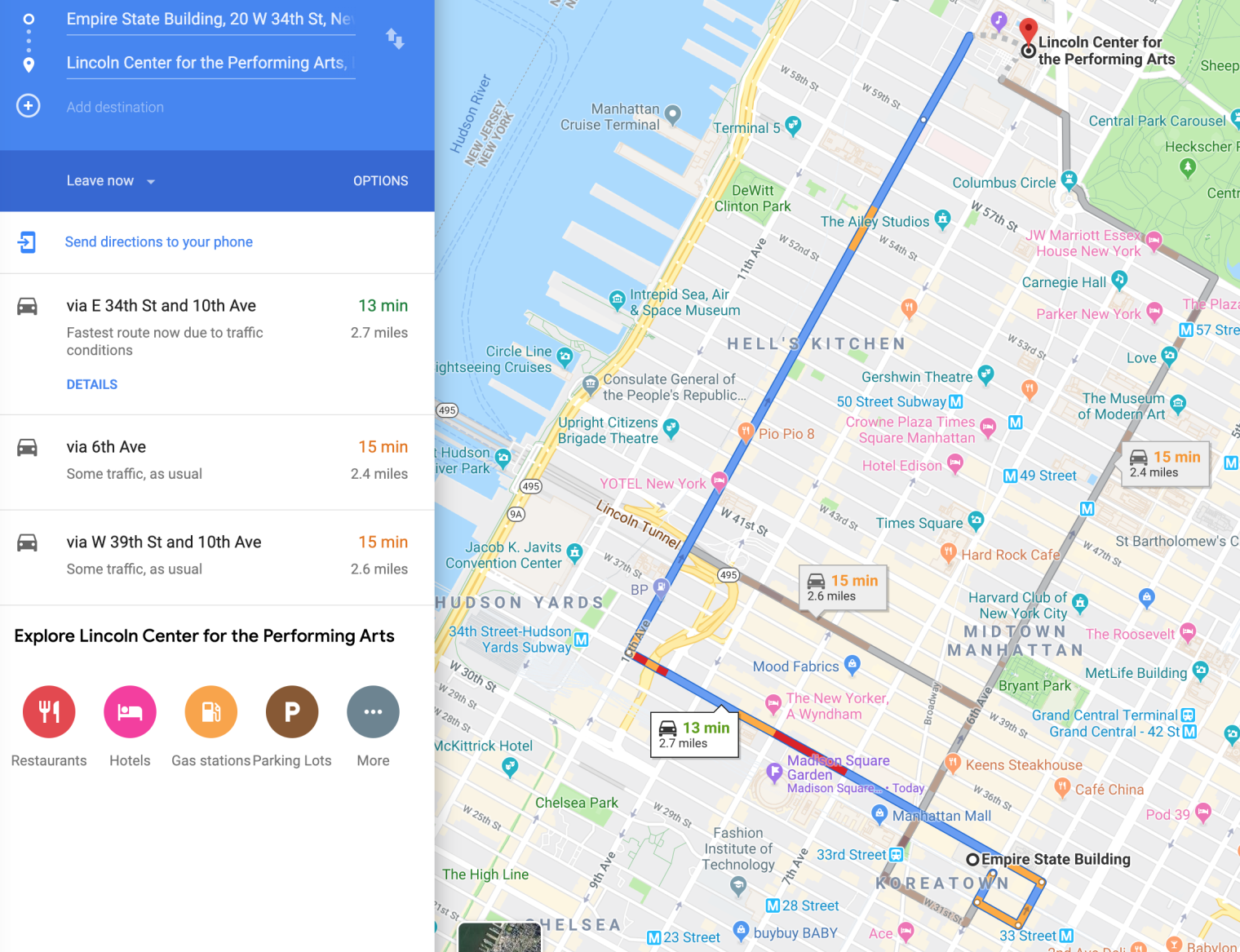}
\caption{Three Routes Recommended by Google Maps for a Trip from the Empire State Building to the Lincoln Center for the Performing Arts In NYC }\label{google-map}
\vspace{-2.35em}
\end{figure}

Most of the existing work to address the KSP problem in a road network (or more generally, in a graph) assumes a centralized approach \cite{liu2018finding,yen1971finding,katoh1982efficient,eppstein1998finding,hershberger2007finding,gao2010fast,gao2012holistic}, and is incapable of handling large volumes of concurrent queries over a dynamic graph for two main reasons. First, the query processing strategies employed are sequential, namely they generate the  $(i+1)^{th}$ shortest path based on the $i^{th}$ shortest path, which limits their scalability with respect to the number of concurrent queries. Second, some previous approaches \cite{eppstein1998finding,gao2010fast,gao2012holistic} require building a heavy-weight path index for every query, which is not only costly but also unwieldy as the index becomes invalid once the edge weights in the graph change.

Distributed algorithms have recently been developed to compute shortest paths in static graphs \cite{chandy1982distributed, DistributedSP-Baruch-1989,Elkin2017Distributed,Ghaffari2017Improved, aridhi2015mapreduce,li2017distributed, qiu2018parapll}, but directly adopting them to process KSP queries in dynamic road networks is still an uphill battle. First, they usually require building indexes beforehand to enhance search efficiency, which is impractical for dynamic graphs as the indexes will quickly become obsolete. Second, most of them identify the shortest path only, which cannot be trivially extended to determine {\em k} shortest paths in a distributed setting. A distributed algorithm CANDS \cite{yang2014cands} is proposed to find the single shortest path in a dynamic graph. Again, it is non-trivial to extend it to the case of KSPs. Moreover, the indexed shortest paths in CANDs change continuously and require frequent re-computation in a dynamic graph, which can be quite expensive. 

In order to tackle the aforementioned problems, we propose a distributed algorithm that can be deployed on a cluster of servers to handle KSP queries over large dynamic graphs. 

The primary design goal is to support large volumes of concurrent queries. At a very high level, our solution adopts a divide-and-conquer strategy: the large graph is partitioned into multiple subgraphs maintained on different servers; given a query, we first compute in parallel partial shortest paths in the subgraphs on different servers, which are subsequently merged to construct the {\em k} shortest paths. 

Although similar strategies have been adopted for distributed query evaluation on graphs \cite{Fan2012,yang2014cands}, the processing of KSP queries presents some unique challenges: (1) It is non-trivial to compute the KSPs from partial results on subgraphs. In particular it is challenging to identify the subgraphs containing the required partial shortest paths as well as determining the combination of such partial paths. 
(2) It is vital but difficult to build an effective index for identifying KSPs that is easy to maintain in the presence of constantly changing weights. Directly indexing the shortest paths between pre-selected vertices, a strategy widely used in existing work, would be too expensive in a dynamic graph due to the requirement to handle  frequent updates of edge weights. 

With these challenges in mind, we first propose a Distributed Two-Level Path (DTLP) index. This index is based on partitioning any graph $G$ into multiple subgraphs, where two subgraphs may overlap at a small number of vertices called {\em boundary vertices} but have no common edges. For any two boundary vertices in the same subgraph, we compute specific paths between them called {\em bounding paths}. The boundary vertices and the bounding paths in all subgraphs form the first level of DTLP, which provides a lower bound of the shortest distance between a pair of boundary vertices. These bounding paths do not change due to varying weights, making the index easily maintainable. The second level, is a skeleton graph $G_\lambda$, in which the vertices correspond to the boundary vertices of all subgraphs, and there exists an edge between a pair of vertices in $G_\lambda$ if and only if there is a set of bounding paths between the corresponding vertices in the original graph, with the weight of the edge computed based on that set of bounding paths. Graph $G_\lambda$ serves the purpose of supplying an approximate search direction for identifying KSPs.

{We propose an iterative algorithm, KSP-DG, to identify KSPs in  Dynamic Graphs based on DTLP}, which follows a "filter-and-refine" strategy. At each iteration of the algorithm, we first use the skeleton graph $G_\lambda$ in the filter step to compute a shortest path in $G_\lambda$ that has not been examined before. The vertices along this path correspond to the boundary vertices in $G$, and the subgraphs of $G$ covering these boundary vertices are selected for further examination. In the refine step, $k$ shortest paths between the boundary vertices are generated from each subgraph, which are combined to form complete paths in $G$. These paths are then used to update the list of shortest paths in $G$ that have been obtained so far. This process terminates when there is no more change in the list, and the final result is guaranteed to be the KSPs. The algorithm is distributed by design, in that the partitioning of the graph allows the refine step to run in parallel over the cluster, and the small footprint of the skeleton graph $G_\lambda$ lends itself well to be replicated to any node in the cluster where it is needed for generating the shortest paths in $G_\lambda$. 

Experiments are conducted on datasets generated based on three real road networks in three metropolitan centers to (1) compare the performance of the proposed method against that of baseline methods, and (2) conduct a sensitivity analysis of the proposed method with respect to varying parameters of interest (e.g., number of servers in the cluster, data characteristics). Our results demonstrate orders of magnitude performance improvement over other applicable approaches.

 It is worth noting that although the techniques developed here are motivated by identifying KSPs in road networks, they are potentially applicable in other applications involving KSPs on graphs with evolving edge weights\cite{kou2016social}. For example, in a sensor network, one may want to identify a set of shortest paths with the lowest energy cost from the source to the destination and dynamically re-route these paths in a probabilistic fashion to even out the power usage on different nodes in the network \cite{shah2002energy,zhang2013finite,meng2016consensus,li2017variance}.

In summary, we make the following contributions. 
\begin{itemize}
\item We study the problem of identifying KSPs in dynamic road networks, which is an important function of many location-based services. To the best of our knowledge, our work is the first to investigate this problem and to propose a suite of solutions.

\item We devise DTLP, a two-level path index suitable for deployment in a distributed setting to support the processing of KSP queries over dynamic graphs.

\item We propose KSP-DG for processing KSP queries based on DTLP. Diverging from existing centralized methods, KSP-DG decomposes the problem of identifying KSPs in the entire graph into searching for partial KSPs in different subgraphs in parallel, which can be implemented in a distributed fashion on a cluster of servers.

\item We conduct extensive experiments on real road networks to evaluate the performance of the proposed approach, confirming its effectiveness and superiority over other approaches across a variety of settings.

\end{itemize}
The rest of the paper is organized as follows. Section \ref{sec:overview} defines the problem of finding KSPs. Section \ref{sec:dtlp} presents the DTLP index, and KSP-DG is discussed in Section \ref{sec:KSP-DG}. Section \ref{sec:exp} experimentally evaluates the performance of our proposal. Section \ref{sec:related-work} discusses  related work, and Section \ref{sec:con} concludes this paper.

\section{Problem Definition}\label{sec:overview} 
In this section, we define terminology for our ensuing discussion and formally present the KSPs identification problem.  

\newcommand{\V}{\mathcal{V}}
\newcommand{\E}{\mathcal{E}}
\newcommand{\W}{\mathcal{W}}
\begin{myDef}[ Dynamic undirected graph] A dynamic graph $G$ = ($\mathcal{V}$, $\mathcal{E}$, $\mathcal{W}$) consists of (1) a finite set of vertices $\V$, (2) a set of edges $\E \subseteq \V\times \V$, where $e_{i,j}$ ($e_{i,j}\in \E$) denotes an edge between vertices $v_i$ and $v_j$, and (3) a set of non-negative weights $\W$, where $w_{i,j}\in \W$ is the weight of edge $e_{i,j}$, which may change by a negative or non-negative value $\Delta w$ at any time point.
\end{myDef}
A dynamic road network as discussed before can be considered a special case of a dynamic undirected graph. As such, in what follows, we present the terminologies and solutions in the context of graphs instead of road networks, in view of their potential applicability in other scenarios involving dynamic undirected graphs.

\vspace{-0.05in}
\begin{myDef}
[Subgraph] A graph ${SG}$=($\V'$, $\E'$, $\W'$) is a subgraph of the graph $G$= ($\V$, $\E$, $\W$) iff
\begin{itemize}
    \item $\V'\subseteq \V$,
    \item $\E'\subseteq \E$ $\land (e_{m, n}\in \E'\to v_m, v_n\in \V')$, 
    \item $\W'\subseteq \W$ $\land (w_{m, n}\in \W'\to e_{m, n}\in \E')$.
\end{itemize}
\end{myDef}

\vspace{-0.07in}
\begin{myDef}[Path, Distance of Path] Path $P(s, t)$ from one vertex $v_s$  (the source vertex) to another vertex $v_t$ (the destination vertex) in graph $G$ is a sequence of vertices $\langle \texttt{v}_0$= $v_s$,$\cdots$ $\texttt{v}_l$,$\cdots$,$\texttt{v}_n$ =$v_t\rangle$ such that $\forall l\in [1,n]$, $e_{i,j}\in \E$ if $\texttt{v}_{l-1}=v_i$ and $\texttt{v}_l=v_j$. 
In this work, we only consider simple paths, i.e., paths with no repeat vertices.
The distance of $P(s, t)$ is defined as $D(P(s, t))$=$\sum\limits_{i=1}^n w_{i-1,i}$.
\end{myDef}

\begin{myDef}[{\em k}-shortest path query (KSP query)] For a given pair of source and destination vertices, $v_s$ and $v_t$, the $k$-shortest path query, $q(v_s, v_t)$, identifies the set of $k$ paths from $v_s$ to $v_t$ in graph $G$, $\Pset_{s,t}=\lbrace {P_1}(s,t), \ldots, P_k(s,t) \rbrace$, such that $D(P_i(s, t)) \le D(P_{i+1}(s, t)) (i\in[1, k-1]$), and $\forall P(s,t)\notin \Pset_{s,t}$,\ $D(P(s,t)) \ge D(P_k(s,t))$. 
\end{myDef}

To ensure timely processing of the queries, we assume that query processing takes place in main memory. Moreover, as graph $G$ constantly evolves as queries arrive,  we use a buffer $G_{curr}$ to model the current version of $G$ to ensure unambiguous semantics of the query answers. This buffer is updated continuously and asynchronously as the graph changes. At fixed time intervals, a snapshot $G_{curr}$ is taken, and the answer to an incoming KSP query is processed against the most recent snapshot. 
Each query answer has a timestamp indicating the moment at which the answer is exact.

\section{Distributed Two-Level Path Index}\label{sec:dtlp}
A naive approach to process KSP queries is to compute from scratch the shortest paths on graph $G$ directly each time a query arrives. In practice, however, both the size and the dynamic nature of the road network, as well the vast volume of concurrent queries, make it infeasible to process queries this way. For this reason we consider an index structure that could scale and speed up the processing. In this section, we discuss the desiderata for the index, and present our proposal, the Distributed Two-Level Path (DTLP) index . 

\subsection{Desiderata}
Our problem setting necessitates the following desirable properties in an index to support processing KSP queries. 

{\bf (1) Low maintenance cost for dynamic graphs.} Some existing index structures for processing shortest path queries maintain the shortest path between each pair of vertices in the graph (or pairs of selected ``landmark" vertices). However, when the edge weights of the graph constantly change, this becomes extremely expensive to maintain. As such, one of our requirements is that the index structure must be easy to maintain for dynamic graphs and imposes as little overhead as possible when the graph evolves. 

{\bf (2) Suitable for deployment in a distributed setting.} As the size of the graph and the number of concurrent queries keep increasing, a distributed index becomes a more attractive solution than a centralized one due to its ability to scale out. Since the graph $G$ itself may have to be partitioned and stored across the cluster, we should be able to maintain  parts of the index on different nodes in a cluster corresponding to  subgraphs of $G$. 

{\bf (3) Supporting distributed algorithms.} With the graph partitioned and subgraphs stored across the cluster, query processing takes place on different nodes in parallel. It is thus necessary for the index to assist the identification of relevant subgraphs that may contain pieces of the shortest paths to avoid examination of all subgraphs. Such an index should provide sufficient pruning power, and at the same time guarantee the correctness of the query result. 

\subsection{Overview of DTLP}

In view of the above requirements, we devise an index called DTLP that facilitates distributed KSP query processing over dynamic graphs. Based on a partitioning of $G$, DTLP has a two-level structure: the first level indexes each subgraph by maintaining a list of bounding paths (Section \ref{sec:bound-path}) between any pair of boundary vertices (Section \ref{sec:bv}). This provides the basis for computing the lower bounds of the shortest distances between the boundary vertices. The second level keeps a skeleton graph (Section \ref{sec:skeleton-graph}) with all boundary vertices in all subgraphs, and it is computed based on the bounding paths identified in the first level. Figure~\ref{fig.dtlp} illustrates the structure of the DTLP index.

\captionsetup{font=scriptsize}
\begin{figure}[!h]
\centering
\includegraphics[width=0.35\textwidth]{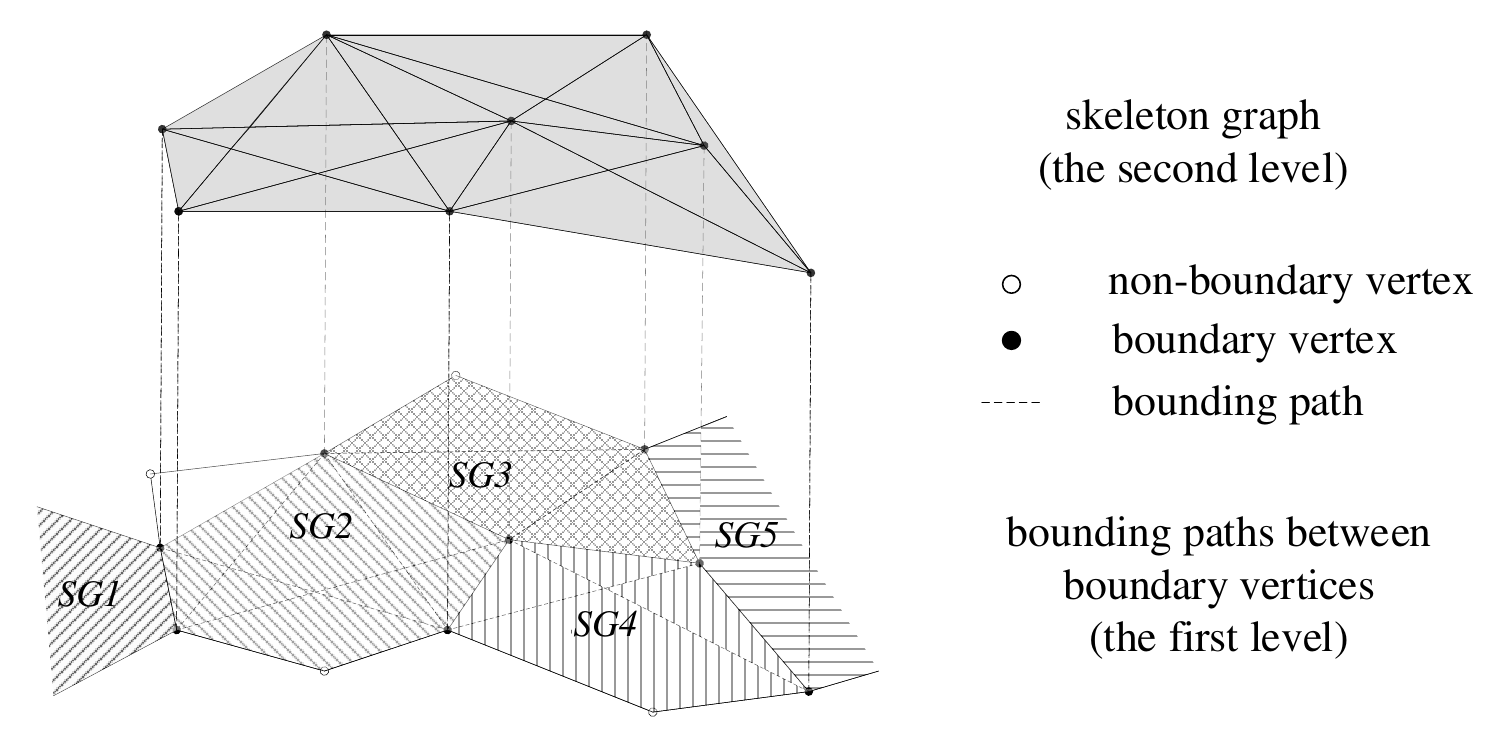}
\caption{Schematic Diagram of DTLP }\label{fig.dtlp}
\vspace{-0.1in}
\end{figure}

\subsection{Graph Partitioning and Boundary Vertices} \label{sec:bv}
The construction of DTLP starts with partitioning graph {\em G} into multiple subgraphs such that each subgraph has at most $z$ vertices. Different subgraphs may share vertices but not edges. For graph partitioning, we start from any vertex and traverse the graph $G$ using a breadth-first strategy to generate the subgraphs. The set of subgraphs is denoted as $\mathcal{S}$=$\lbrace{SG}_{1}$, $\cdots$, $SG_{n}\rbrace$ such that (1)  $\V_1\cup\cdots\cup \V_n=\V$; (2)  ${\E}_1\cup\cdots\cup {\E}_n=\E$; (3) ${\W}_1\cup\cdots\cup {\W}_n=\W$, where $n$ is the number of subgraphs and $SG_i = \lbrace\V_i, \E_i, \W_i\rbrace$ $(i\in[1,n])$.  

The vertices shared by two or more subgraphs are called boundary vertices which are formally defined as follows.
\begin{myDef}[Boundary vertex]\label{def:boundary-vertex} A vertex $v\in\V$ is a boundary vertex if and only if $\exists SG_i, SG_j$ such that $v\in \V_i\cap\V_j$ $(i\neq j, i,j\in[1,n])$.
\end{myDef}

Evidently, any path from a non-boundary vertex in $SG_i$ to a non-boundary vertex in $SG_j$ must pass through one or more boundary vertices, as those boundary vertices are the only "contact vertices" between subgraphs.
\begin{exmp}
 Figure~\ref{global-graph} gives an example of graph $G$ that is partitioned into four subgraphs in Figure~\ref{fig:sub-par}, where the threshold $z$ (maximum number of vertices in a subgraph) is set to 6. The boundary vertices in each subgraph are shaded.
\end{exmp}

\captionsetup{font=scriptsize}
\begin{figure}[htbp]
\centering
\includegraphics[width=0.32\textwidth]{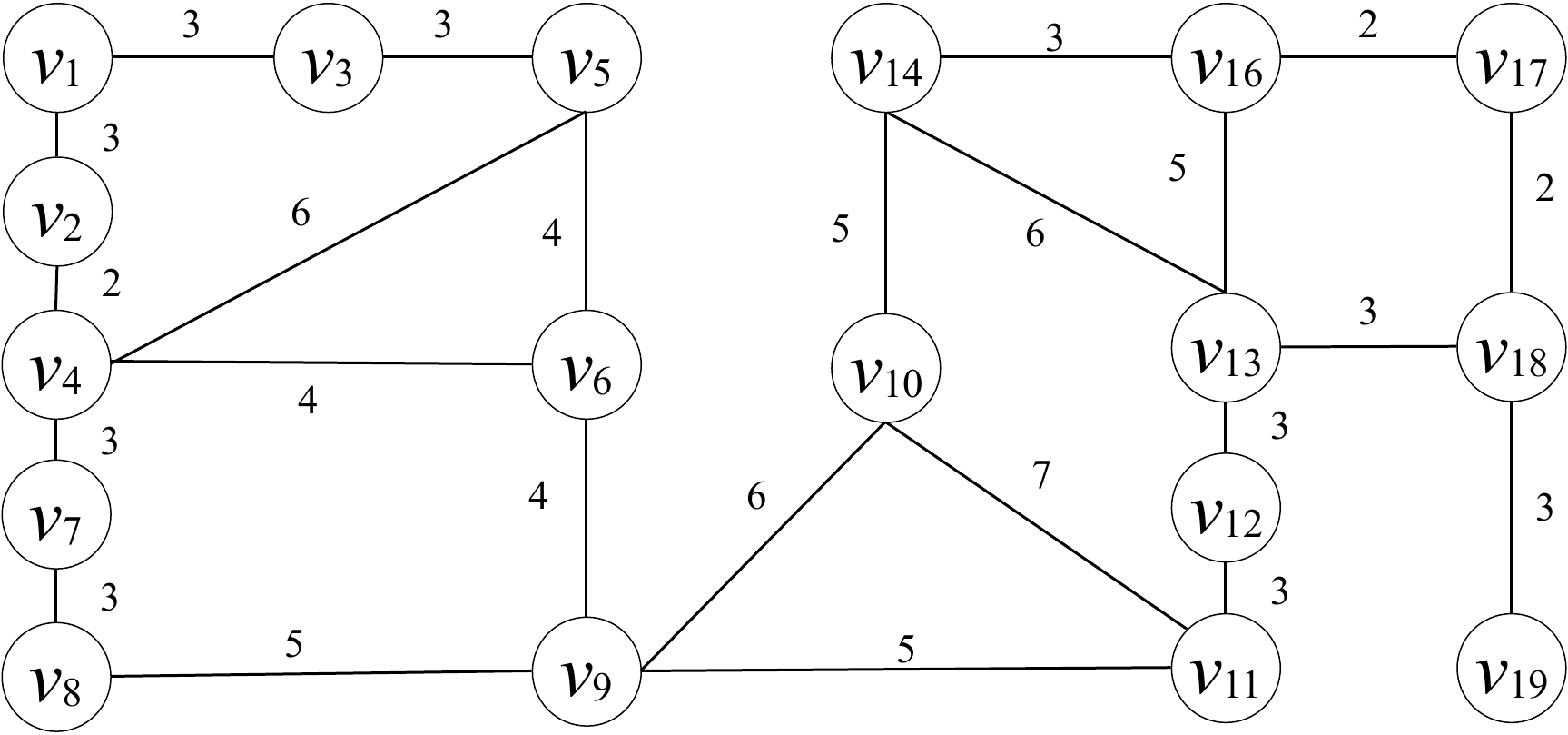}
\caption{Graph $G$}\label{global-graph}
\vspace{-0.3in}
\end{figure}

\begin{figure}[!htpb]
\subfloat[{$SG_1$}]{\label{Fig.sub-par.1}
\includegraphics[width=0.0995\textwidth]{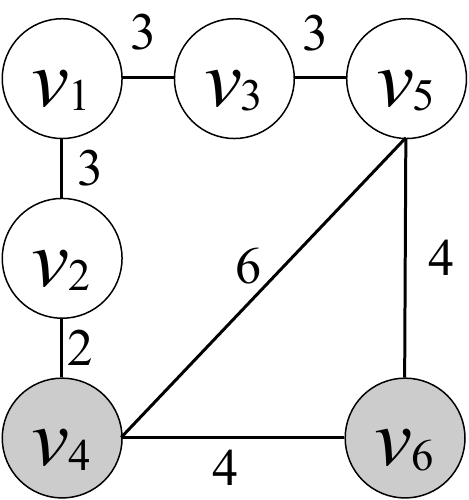}}
\quad
\subfloat[{$SG_2$}]{\label{Fig.sub-par.2}
\includegraphics[width=0.11\textwidth]{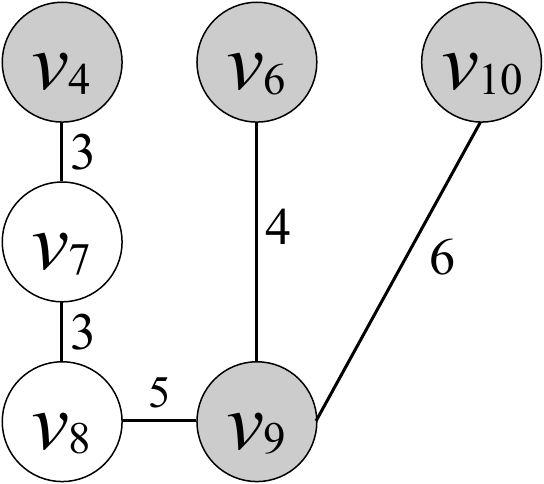}}
\hspace{0.1in}
\subfloat[{$SG_3$}]{\label{Fig.sub-par.3}
\includegraphics[width=0.102\textwidth]{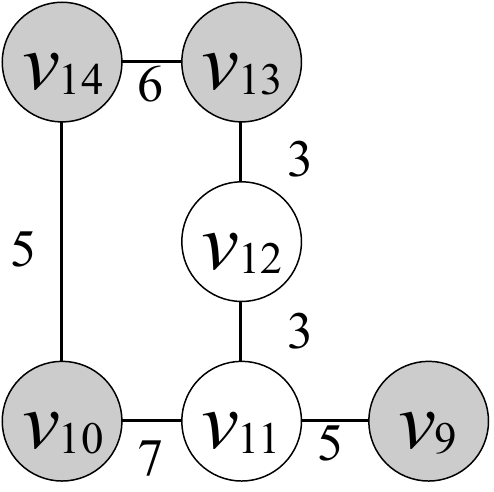}}
\subfloat[{$SG_4$}]{\label{Fig.sub-par.4}
\includegraphics[width=0.1\textwidth]{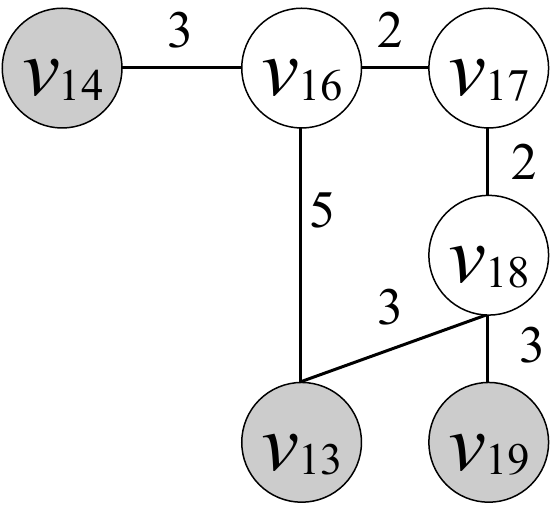}}
\caption{Subgraphs of $G$}\label{fig:sub-par}
\vspace{-0.1in}
\end{figure}

\vspace{-0.3cm}
\subsection{Bounding Paths}\label{sec:bound-path}

Once graph partitions are in place, we move on to identify for each pair of boundary vertices in a subgraph a set of bounding paths.  Bounding paths are specific paths in the subgraph serving as a reference to establish the bound distance, a stable lower bound of the shortest distance between the respective boundary vertices. The bounding paths do not change when the edge weights in the graph change, but the bound distances have to be updated to reflect the new information on edge weights. Those bounding paths will in turn be used to construct the skeleton graph.

As a first attempt, we consider utilizing the path(s) with the fewest edges between two boundary vertices as the bounding path(s). Suppose path $P(i, j)$ with $m$ edges is the path with the least number of edges between two boundary vertices $v_i$ and $v_j$ in subgraph $SG$. We consider it a bounding path, and the corresponding bound distance, denoted by $BD(P(i, j))$, is computed as the sum of the $m$ smallest edge weights in $SG$. It is easy to show that $BD(P(i, j))$ is not greater than the shortest distance between $v_i$ and $v_j$ in $SG$ and thus can be used as a lower bound of the shortest distance.

\begin{exmp}
 In Figure~\ref{Fig.sub.1}, ${P_1}(13, 14)$=$\langle v_{13}$, $v_{16}$, $v_{14}\rangle$ is a bounding path with two edges $(m=2)$ between $v_{13}$ and $v_{14}$ in $SG_4$. As the two smallest edge weights in $SG_4$ are both 2, we have $BD({P_1}(13, 14))=2+2=4$ and $D(P_1(13, 14))=5+3=8$ respectively. If $SG_4$ later changes to $SG'_4$ in Figure~\ref{Fig.sub.2}, the two smallest weights in $SG'_4$ both become 1. Correspondingly, the updated bounding distance is $BD({P_1}(13, 14))=1+1=2$, with the shortest distance between $v_{13}$ and $v_{14}$ being $6$.
 \vspace{-0.5cm}
\end{exmp}

\begin{figure}[!htpb]
\subfloat[{$SG_4$}]{\label{Fig.sub.1}
\includegraphics[width=0.115\textwidth]{Figures/sg4.pdf}}
\quad
\subfloat[{$SG'_4$}]{\label{Fig.sub.2}
\includegraphics[width=0.12\textwidth]{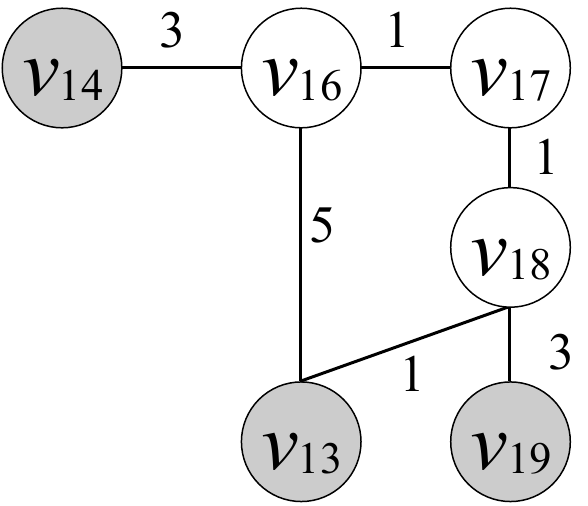}}
\caption{Subgraphs $SG_4$ and $SG'_4$}
\vspace{-0.1in}
\end{figure}

The above example demonstrates that the bound distance may be too loose, i.e., there may be a large discrepancy between the bound and the actual shortest distances. We thus seek to further improve this bound in two ways. 

First, we identify more than one bounding path between a pair of boundary vertices. For example, we can take the path  ${P_1}'(13, 14)$=$\langle v_{13}$, $v_{18}$, $v_{17}$, $v_{16}$, $v_{14}\rangle$, which is the path with the second smallest number of edges (at 4) between $v_{13}$ and $v_{14}$ in $SG_4$, to be another bounding path. Its corresponding bound distance in $SG'_4$ is 6, reducing the gap between the bound distance and the shortest distance.

Second, we decompose the edge weight to finer granularity to provide better resolution when computing the bound distance. 
In particular, we assume that each edge $e_{i,j}$ consists of $w^0_{i,j}$ {\em virtual fragments} ({\em vfrags} for short), where $w^0_{i,j}$ is the initial weight of $e_{i,j}$, i.e., the weight of $e_{i,j}$ at the beginning of the DTLP construction. Such vfrags are additive, i.e., the number of vfrags on a path, $\phi(P(i,j))$, is the sum of the number of vfrags on each edge in the path. 
We call the weight of each vfrag in $e_{i,j}$ the {\em unit weight}, defined as ${w_{i,j}}/{w^0_{i,j}}$, where $w_{i, j}$ is the current weight of edge $e_{i,j}$, which varies over time. Note that for any edge, the number of vfrags always remains the same, but the unit weight will change with varying weight. For example, in
Figure~\ref{Fig.sub.1}, initially the unit weight of all edges is $1$, and the initial weight of $e_{13,18}$ is $w^0_{13,18}=3$. When $SG_4$ changes to $SG'_4$ (Figure~\ref{Fig.sub.2}), $w^0_{13,18}$ is still $3$, but the unit weight on $e_{13,18}$ has changed to $\frac{w_{13,18}}{w^0_{13,18}} = \frac{1}{3}$.

With the two improvements above, we compute the bounding path and the bound distance as follows. 
For two boundary vertices $v_i$ and $v_j$, given a configurable parameter $\xi$, we compute a set  $\mathcal{B}_{i,j}$ consisting of at most $\xi$ bounding paths that contain the least number of vfrags (where bounding paths containing the same number of vfrags are counted as only one path). Formally,  $\forall P(i,j)\notin \mathcal{B}_{i,j}$ and $\forall {P'_l}(i,j)\in \mathcal{ B}_{i,j}$, $\phi(P(i,j))>\phi({P'_l}(i,j))$. 
Here, ${P'_l}(i,j)$ ($l\in [1, |\mathcal{ B}_{i,j}|]$) is called a bounding path, and ${P'_1}(i,j)$  represents the bounding path with the fewest vfrags. 

\vspace{-0.2cm}
\begin{exmp} \label{exmp-bound-path}
For boundary vertices $v_{13}$ and $v_{14}$ in $SG_4$ in Figure~\ref{Fig.sub.1}, ${P'_1}(13,14)$=$\langle v_{13},v_{16},v_{14}\rangle$ and ${P'_2}(13,14)$=$\langle v_{13},v_{18},v_{17}$, $v_{16},v_{14}\rangle $ are the bounding paths if $\xi=2$. If $\xi$ is set to 1, only ${P'_1}(13,14)$ is the bounding path.
\end{exmp}

For a bounding path ${{P_l}'}(i,j)$ from $v_i$ to $v_j$ in subgraph $SG$, we can compute a corresponding bound distance, which is the sum of the $\phi({{P_l}'}(i,j))$ smallest unit weights in $SG$, denoted by $BD({{P_l}'}(i,j))$. 

\vspace{-0.2cm}
\begin{exmp}
For the bounding path ${P'_1}(13,14)$ in {\em Example \ref{exmp-bound-path}}, $\phi({P'_1}(13,14))=8$ and all unit weights in $SG_4$ are 1 initially, so $BD({P'_1}(13,14))=8$. When $SG_4$ change to $SG'_4$, the unit weights in $SG'_4$ are updated to $(1/3, 3)$, $(1/2, 4)$, $(1, 8)$, and $(2, 3)$, meaning there are 3 vfrags of unit weight $1/3$, 4 vfrags of unit weight $1/2$, and so on. Thus, $BD({P'_1}(13,14))$ can be computed using the 8 smallest unit weights, consisting of 3 unit weights of $1/3$, 4 unit weights of $1/2$, and 1 unit weight of $1$, with the result being $3\times \frac{1}{3}+4\times \frac{1}{2}+1\times 1=4$.
\end{exmp}

Introducing vfrags is helpful to tighten the bound on distances. The number of vfrags and the number of edges in a bounding path both remain the same as weights evolve. Moreover, the variation in the unit weights of virtual fragments are in general much smaller than that in the weights of edges. Thus, the bounding distance determined based on the smallest unit weights is usually much tighter than that based on the smallest edge weights.

Bounding paths can be computed in the initial graph offline using any off-the-shelf shortest path algorithm such as Dijkstra's algorithm. They do not have to be recomputed as graphs change since they remain the same regardless of the changing weights.

\subsection{Lower Bound Distances}
After identifying the set of bounding paths between $v_i$ and $v_j$ in subgraph $SG$, we aim to determine the bounding path that would impose as tight a lower bound on the shortest distance as possible.  This translates to locating the maximal bound distance corresponding to this set of bounding paths. It will be referred to as the {\em lower bound distance} between $v_i$ and $v_j$, and the corresponding bounding path is called the {\em lower bounding path}.

To reach a tighter lower bound, recall that for each bounding path ${p_l}'(i, j)$ between $v_i$ and $v_j$ in $SG$, its bound distance computed using vfrags is never greater than its actual distance. Consequently, if this actual distance is not greater than the bound distance of another bounding path ${p_g}'(i, j)$, then the actual distance of ${p_l}'(i, j)$ is not greater than that of ${p_g}'(i, j)$. Exploiting this relationship allows us to identify the shortest path between $v_i$ and $v_j$ among the set of bounding paths in many cases. If this fails, however, we can still utilize the bounding path with the maximal bound distance as the lower bounding path. We  detail this approach below.

\begin{myDef}
[Lower Bounding Path]\label{def:lower-bound-path} Let $\mathcal{B}_{i,j}$ be a set of bounding paths between $v_i$ and $v_j$ in $SG$ and ${P_1}(i,j)$ be the shortest path connecting these two vertices in the original graph $G$. A bounding path ${P'_b}(i, j)$ is the lower bounding path between $v_i$ and $v_j$ if it satisfies one of the following conditions:

(1) $D({P'_b}(i, j))=D({P_1}(i, j))$; or

(2) the bound distance corresponding to ${P'_b}(i, j)$ is maximal among all bounding paths in $\mathcal{B}_{i,j}$. That is, $\forall P'(i, j)\in \lbrace\mathcal{B}_{i,j}\setminus \lbrace {P'_b}(i, j)\rbrace\rbrace$, $BD({P'_b}(i, j))\geq BD(P'(i, j))$.
\end{myDef}

\begin{myDef}[Lower Bound Distance]
Following Definition \ref{def:lower-bound-path}, the lower bound distance between $v_i$ and $v_j$, denoted by $LBD(i,j)$, equals to $D({P_1}(i, j))$ if condition (1) is established, or $BD({P'_b}(i, j))$ if condition (2) is met.
\end{myDef}

Since any pair of boundary vertices $v_i$ and $v_j$ can co-occur in more than one subgraph, there may be multiple lower bound distances associated with them, each for one subgraph. We call the least of these lower bound distances the {\em minimum lower bound distance}, denoted by $MBD(i,j)$.

Next, we introduce the following theorem to help identify the lower bound distance for a pair of boundary vertices. 

\begin{myTheo}\label{theorem-bound-path}
Let $\mathcal{B}_{i,j}$=$\lbrace{P'_l(i,j)}\rbrace$ ($l\in[1,r]$, $r=|\mathcal{B}_{i,j}|$) be the set of bounding paths connecting $v_i$ and $v_j$ in subgraph $SG$, and let $P'_u(i,j)$ ($u\in[1,r]$) be the path whose actual distance in $SG$ is the shortest within $\mathcal{B}_{i,j}$. Assuming that the paths in $\mathcal{B}_{i,j}$ are sorted in ascending order based on their bound distances, we can make the following claims: 
\begin{enumerate}
\item If $BD(P'_l(i,j))\leq D(P'_u(i,j))$ and $BD(P'_{l+1}$ $(i,j))$ $\geq$\ $D(P'_u(i,j))$ ($l+1\in[1,r]$), then $P'_u(i,j)$ is the shortest path between $v_i$ and $v_j$ in $SG$. In this case, $P'_u(i,j)$ is also the lower bounding path. 
\item If $BD(P'_r(i,j))<D(P'_u(i,j))$, then $BD(P'_r(i,j))$ must be less than the shortest distance between $v_i$ and $v_j$ in $SG_i$ and $P'_r(i,j)$ is the lower bounding path. 
\end{enumerate}
\end{myTheo}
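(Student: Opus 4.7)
The plan is to exploit two structural facts. First, for every path $Q$ between $v_i$ and $v_j$ in $SG$, we have $D(Q) \geq BD(Q)$, with $BD(Q)$ denoting the sum of the $\phi(Q)$ smallest unit weights in $SG$ (the same formula that defines bound distances for bounding paths). Second, by construction of $\mathcal{B}_{i,j}$ any path $Q \notin \mathcal{B}_{i,j}$ satisfies $\phi(Q) > \phi(P'_m)$ for every $m \in [1,r]$, and since the ``sum of the $x$ smallest unit weights'' is monotone non-decreasing in $x$, sorting $\mathcal{B}_{i,j}$ in ascending order by bound distance is equivalent to sorting in ascending order by vfrag count; in particular $\phi(P'_1) < \phi(P'_2) < \cdots < \phi(P'_r)$ and $BD(P'_a) \leq BD(P'_b)$ whenever $\phi(P'_a) \leq \phi(P'_b)$.

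For Part 1, I would fix any path $Q$ from $v_i$ to $v_j$ in $SG$ and argue $D(Q) \geq D(P'_u)$ by cases. If $Q \in \mathcal{B}_{i,j}$, this is immediate from the defining choice of $P'_u$ as the shortest bounding path. If $Q \notin \mathcal{B}_{i,j}$, then $\phi(Q) > \phi(P'_r) \geq \phi(P'_{l+1})$, so $BD(Q) \geq BD(P'_{l+1})$ by monotonicity, and chaining $D(Q) \geq BD(Q) \geq BD(P'_{l+1}) \geq D(P'_u)$ (the last step being the hypothesis) closes the case. Therefore $P'_u$ is the shortest path in $SG$ between $v_i$ and $v_j$. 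Since its distance then equals the shortest $v_i$-$v_j$ distance attainable by any relevant path, $P'_u$ satisfies condition~(1) of Definition~\ref{def:lower-bound-path} and is the lower bounding path.

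For Part 2, let $P^*$ be any shortest path from $v_i$ to $v_j$ in $SG$. If $P^* \in \mathcal{B}_{i,j}$, then $D(P^*) \geq D(P'_u) > BD(P'_r)$ directly by the hypothesis. If $P^* \notin \mathcal{B}_{i,j}$, then $\phi(P^*) > \phi(P'_r)$, so $D(P^*) \geq BD(P^*) \geq BD(P'_r)$ by monotonicity, which strengthens to $D(P^*) > BD(P'_r)$ as soon as at least one unit weight beyond the $\phi(P'_r)$-th smallest is positive. Either way, $BD(P'_r)$ is strictly less than the shortest distance in $SG$, and since $P'_r$ has the maximal bound distance within $\mathcal{B}_{i,j}$ by the sorted order, it satisfies condition~(2) of Definition~\ref{def:lower-bound-path} and is therefore the lower bounding path.

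The main obstacle I anticipate is the strict-inequality step in the second case of Part 2: monotonicity alone delivers only $D(P^*) \geq BD(P'_r)$, and strictness relies on at least one positive unit weight beyond the $\phi(P'_r)$-th smallest. I plan to handle this with a brief degenerate-case argument: if all such ``extra'' unit weights vanish, then any non-bounding path has the same bound distance as $P'_r$, which can only occur when the shortest distance in $SG$ is already realized within $\mathcal{B}_{i,j}$, reducing to the first case. A minor secondary technicality is ensuring that the indices $l+1$ in Part 1 and $r$ in Part 2 are valid, which follows from the construction of $\mathcal{B}_{i,j}$ whenever $v_i$ and $v_j$ are connected inside $SG$.
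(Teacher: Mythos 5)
Your proposal is correct and follows essentially the same route as the paper's own proof: both parts rest on the chain $D(Q)\ge BD(Q)\ge BD(P'_r)$ for any path $Q\notin\mathcal{B}_{i,j}$, combined with the hypothesis relating $BD(P'_{l+1})$ (resp.\ $BD(P'_r)$) to $D(P'_u)$ --- the paper merely casts Part~1 as a proof by contradiction where you argue directly. Your explicit handling of the strict inequality in Part~2 (requiring a positive unit weight beyond the $\phi(P'_r)$-th smallest) addresses a point the paper silently elides, which is a refinement rather than a divergence.
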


\begin{proof}
 To show the first claim holds, suppose the shortest path from $v_i$ to $v_j$ is not $P'_u(i,j)$ but another path denoted by $P'_f(i,j)$ that is not covered by $\mathcal{B}_{i,j}$. Based on this assumption, it can be inferred that $\phi(P'_f(i,j))>\phi(P'_r(i,j))$, so $BD(P'_f(i,j))>BD(P'_r(i,j))$. As the actual distance of a path is no less than its bound distance,  we have $D(P'_f(i,j))>BD(P'_r(i,j))\geq BD(P'_{l+1}(i,j))$. Also, because $BD(P'_{l+1}(i,j))>D(P'_u(i,j))$, we have $D(P'_f(i,j))>D(P'_u(i,j))$ which is a contradiction.

The second claim is true if $P'_u(i,j)$ is the shortest path from $v_i$ to $v_j$ in $SG$; otherwise, we infer that the shortest path from $v_i$ to $v_j$ is not in $\mathcal{B}_{i,j}$. In this case, the bound distance of this shortest path must be greater than $BD(P'_r(i,j))$, so its actual distance is also greater than $BD$ $(P'_r(i,j))$.
\vspace{-0.15in}
\end{proof}

\begin{figure}[htbp]
\centering
\subfloat[]{\label{Fig.the.1}
\includegraphics[width=0.2\textwidth]{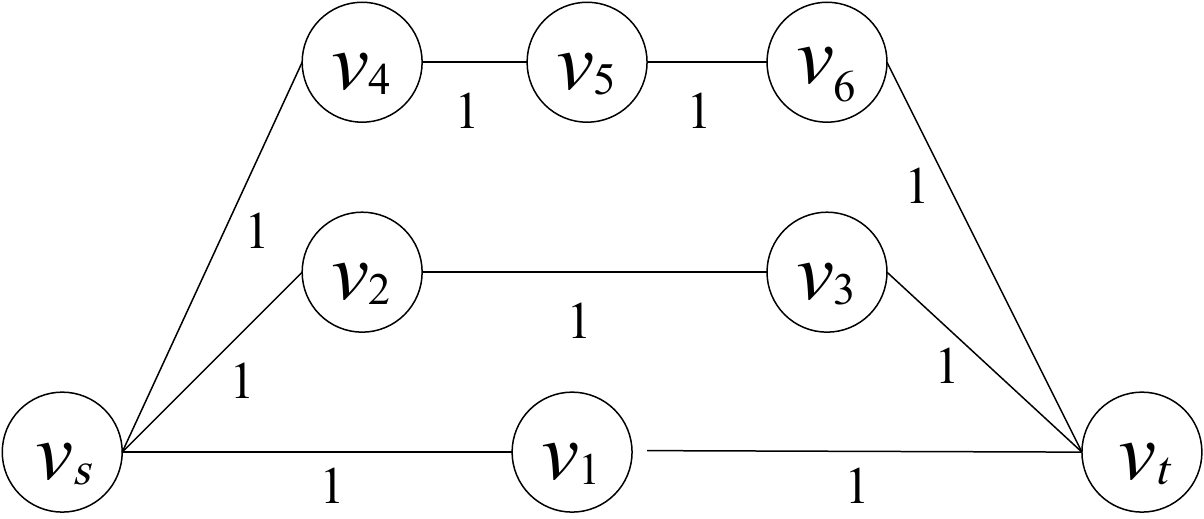}}
 \hspace{0.05in}
\subfloat[]{\label{Fig.the.2}
\includegraphics[width=0.2\textwidth]{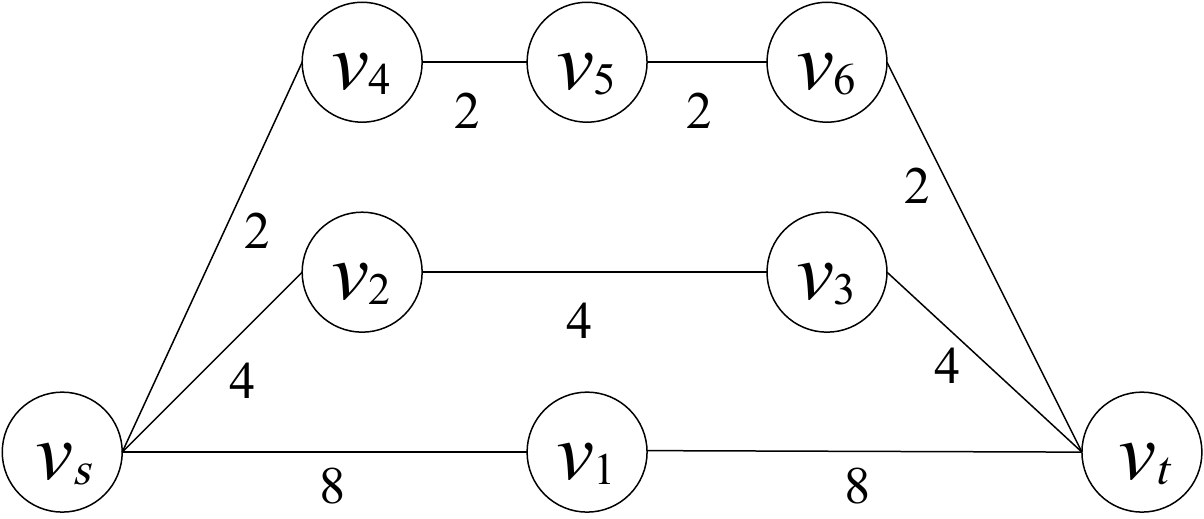}}
\quad
\subfloat[]{\label{Fig.the.3}
\includegraphics[width=0.2\textwidth]{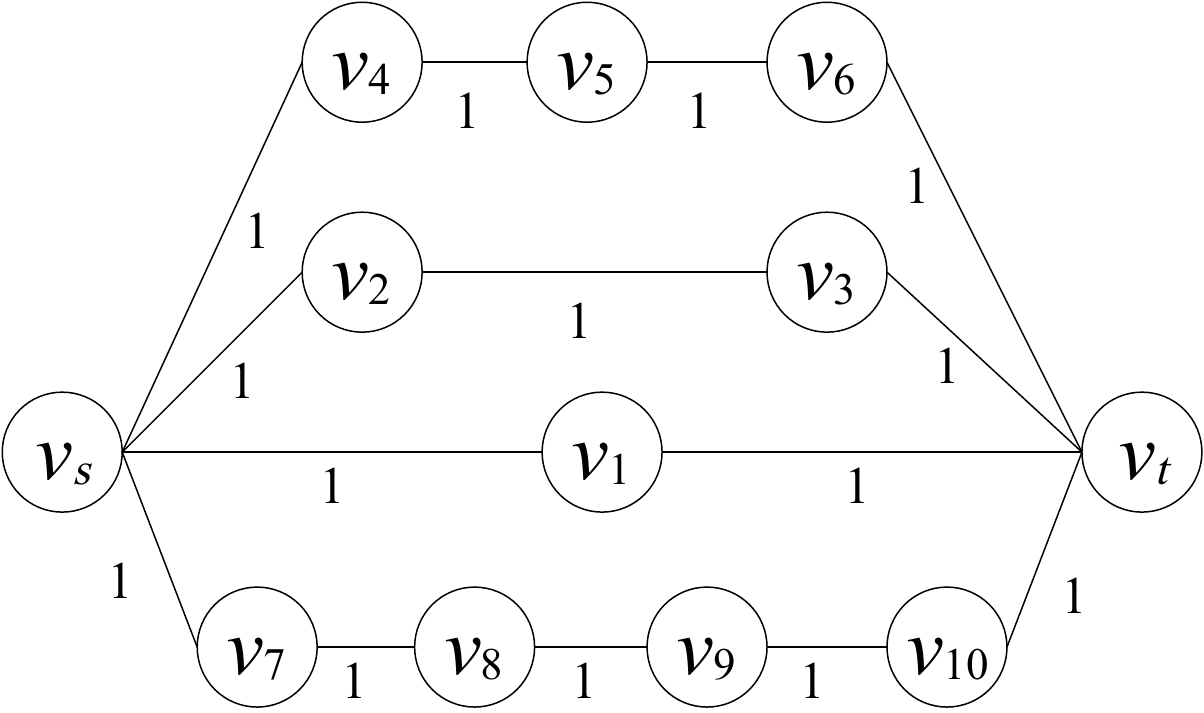}}
 \hspace{0.05in}
\subfloat[]{\label{Fig.the.4}
\includegraphics[width=0.2\textwidth]{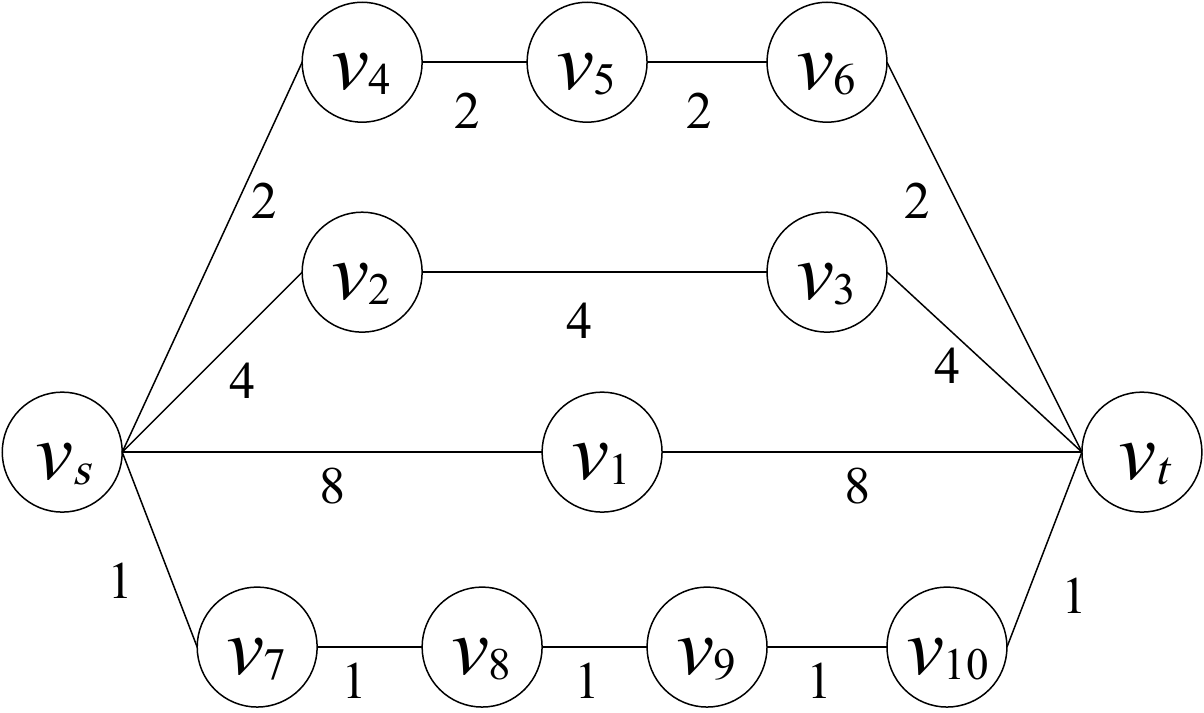}}
\caption{Example for Theorem \ref{theorem-bound-path} \label{fig:theorem-1}}
\vspace{-0.18in}
\end{figure}

\begin{exmp}
  This example illustrates the two cases in Theorem \ref{theorem-bound-path}. For the graph  in Figure~\ref{Fig.the.1}, $P'_1(s,t)$=$\langle v_s, v_1, v_t\rangle$, $P'_2(s,t)$ = $\langle v_s,v_2,v_3,$ $v_t\rangle$, and $P'_3(s,t)$=$\langle v_s,v_4,v_5, v_6, v_t\rangle$ are bounding paths between $v_s$ and $v_t$ when $\xi=3$. If the weights change as shown in Figure~\ref{Fig.the.2}, $P'_3(s,t)$ becomes the bounding path with the shortest distance and the unit weights in this graph are updated to (2, 4), (4, 3), and (8, 2). As $BD(P'_1(s,t))=4$, $BD(P'_2(s,t))=6$, and $BD(P'_3(s,t))=8$, the bound distance of $BD$$(P'_3(s,t))$ equals to its actual distance. Hence, $P'_3(s,t)$ is the shortest path from $v_s$ to $v_t$, which follows the first claim in Theorem \ref{theorem-bound-path}.

  For the scenario shown as Figure~\ref{Fig.the.3}, $P'_1(s,t)$, $P'_2(s,t)$, and $P'_3(s,t)$ are still the bounding paths connecting $v_s$ and $v_t$. When the graph in Figure~\ref{Fig.the.3} changes to the one in Figure~\ref{Fig.the.4}, there are five vfrags with unit weight 1. In this case, $BD(P'_1(s,t))=2$, $BD(P'_2(s,t))=3$, and $BD(P'_3(s,t))=4$. Because $D(P'_3(s,t))>BD(P'_3(s,t))$, it is uncertain whether $P'_3(s,t)$ is the shortest path from $v_s$ to $v_t$, but one can guarantee that the shortest distance between $v_s$ and $v_t$ is greater than $BD(P'_3(s,t))$, conforming to the second claim in Theorem \ref{theorem-bound-path}.
\end{exmp}

Based on Theorem \ref{theorem-bound-path}, we can identify the lower bounding paths and the lower bound distances between any pair of boundary vertices in each subgraph, and then the minimum lower bound distances between these boundary vertices can be determined. 

\begin{figure}[htbp]
\centering
\includegraphics[width=0.2\textwidth]{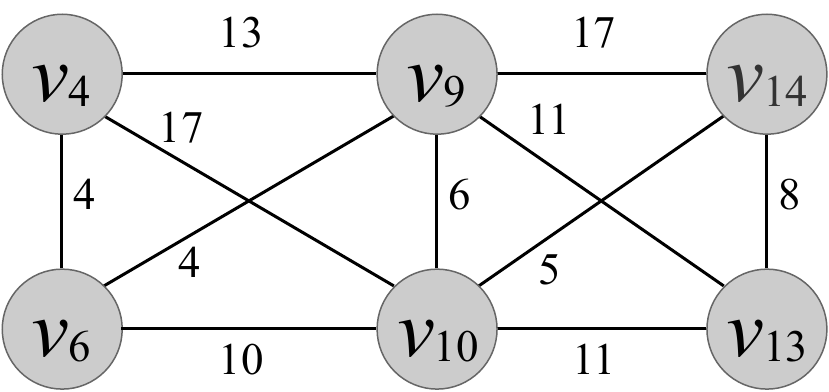}
\caption{Skeleton Graph $G_\lambda$}
\label{fig.skeleton-graph}
\vspace{-0.27in}
\end{figure}
\subsection{Skeleton Graph}\label{sec:skeleton-graph}
The skeleton graph $G_\lambda$ contains all boundary vertices of all subgraphs. Any pair of boundary vertices $v_i$ and $v_j$ within the same subgraph is connected by edge $e'_{i,j}$ with its weight being the minimum lower bound distance between $v_i$ and $v_j$ ($MBD(i,j)$). The rationale behind  introducing the skeleton graph is that KSPs between two vertices in the original graph $G$ possibly pass through the same sequence of boundary vertices as their shortest paths in $G_\lambda$. Thus, $G_\lambda$ can provide an approximate search guideline to identify the different subgraphs in finding KSPs between a pair of vertices in $G$. Figure~\ref{fig.skeleton-graph} shows a skeleton graph $G_\lambda$ corresponding to the graph $G$ in Figure~\ref{global-graph}.

\begin{figure}[htbp]
\centering
\includegraphics[width=0.35\textwidth]{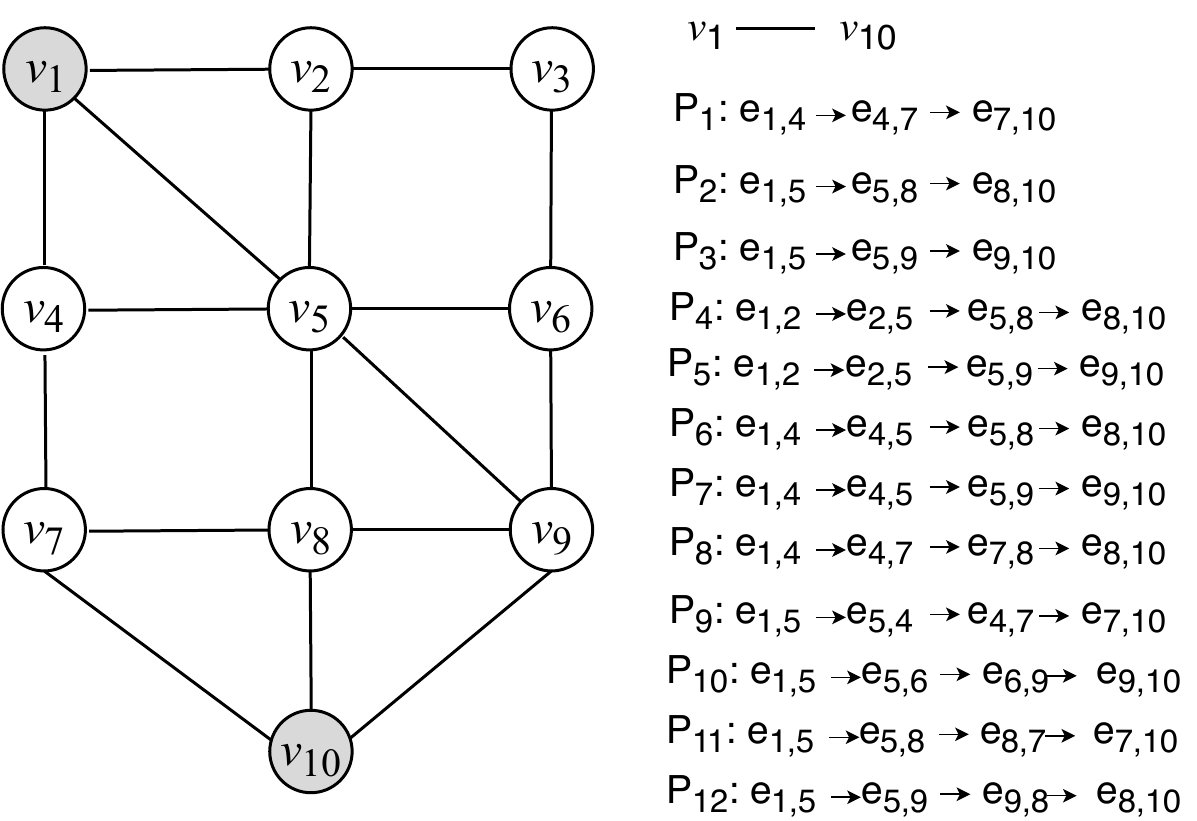}
\caption{Bound paths between boundary vertices}
\label{subgraph-bound-paths}
\end{figure}

\vspace{-1em}
\subsection{Maintenance of Bounding Paths}
In DTLP, the minimum lower bound distances between boundary vertices (i.e., weights of edges in the skeleton graph) computed based on bounding paths are vital to the discovery of KSPs, however they change frequently as edge weights vary. In order to update the minimum lower bound distances efficiently, it is imperative to design a data structure that can effectively manage the large number of bounding paths in each subgraph. For this purpose, we propose an Edge-Path Index (EP-Index for short) that supports maintenance of the bounding paths.

The EP-Index is a map consisting of {\em key}-{\em value} pairs: every key is an edge $e_{i,j}$, and the corresponding value is a list (denoted by $\mathcal{BP}_{i,j}$) where each element consists of a unique bounding path passing through $e_{i,j}$ along with the distance of this path. Fig.\ref{subgraph-bound-paths} gives a subgraph with boundary vertices $v_1$ and $v_{10}$, and the bound paths between $v_1$ and $v_{10}$, where every edge is supposed to have one virtual-fragment. Figure \ref{inverted-EP-index} shows the EP-Index matching the bound paths in Figure \ref{subgraph-bound-paths}.

When the weight of $e_{i,j}$ changes by $\Delta w$, the distance of each bounding path in $\mathcal{BP}_{i,j}$ is updated correspondingly by $\Delta w$. The bound distances of these bounding paths can be easily obtained using the updated unit weights in the corresponding subgraph, after which the lower bound distance between any two boundary vertices in the subgraph can be computed.  Algorithms \ref{al:build-dtlp} and  \ref{al:update-dtlp} show the pseudocode for building and updating the DTLP index respectively.

\begin{algorithm}[htbp]
\begin{small}
\begin{algorithmic}[1]
\REQUIRE
 $\mathcal S$=$\lbrace SG_1, \cdots, SG_n\rbrace$;\\
\ENSURE DTLP (EP-Index and the skeleton graph $G_\lambda$);
\STATE EP-Index=$\phi$, $G_\lambda=\phi$;
\STATE $\mathcal{I}_i = $ the set of boundary vertices of $SG_i$;
\FOR{$SG_i$ : $\mathcal S$}
\FOR{($v_a$,$v_b$): $\mathcal{I}_i$}
\STATE Compute $\mathcal{B}_{a,b}$; //bounding paths between $v_a$ and $v_b$
\STATE Compute the lower bounding distance $LBD(v_a, v_b)$;\\
\STATE Add ($v_a$, $v_b$, $LBD(v_a, v_b)$) into $G_\lambda$;
\STATE Add (($v_a$, $v_b$), $\lbrace PIDs\rbrace$) into $\mathcal{P}_i$; //{\footnotesize $\lbrace PIDs\rbrace$ is IDs of paths in $\mathcal{B}_{a,b}$, $\mathcal{P}_i$ is the set of such tuples associated with $SG_i$}
\STATE Add $\mathcal{B}_{a,b}$ into EP-Index <$e_{i,j}$, $\lbrace P'_l(x,y)\rbrace$>;
\ENDFOR
\ENDFOR
 \caption{Building DTLP}\label{al:build-dtlp}
 \end{algorithmic}
 \end{small}
\end{algorithm}

\begin{algorithm}[htbp]
\begin{small}
\begin{algorithmic}[1]
\REQUIRE
 DTLP, $\Delta w_{i,j}$, $SG_a\in {\mathcal{S}}$; //{\footnotesize $\Delta w_{i,j}$: weight change of {$e_{i,j}$ in $SG_a$}}
\ENSURE Updated DTLP;
\STATE $\mathcal{BP}_{i,j}$=EP-Index.$get(e_{i,j})$; //bounding paths covering $e_{i,j}$;
\FOR{$P'_l(x,y)$ : $\mathcal{BP}_{i,j}$}
\STATE $D(P'_l(x,y))$=$D(P'_l(x,y))$+$\Delta w_{i,j}$;
\STATE Update $BD(P'_l(x,y))$ based on weights in $SG_a$;
\STATE Identify $\mathcal{B}_{x,y}$ based on $\mathcal{P}_i$;
\STATE Update $LBD(v_x, v_y)$ based on $\mathcal{B}_{x,y}$;
\ENDFOR
\STATE Update the skeleton graph $G_\lambda$;
 \caption{Updating DTLP}\label{al:update-dtlp}
 \end{algorithmic}
 \end{small}
\end{algorithm}

{EP-Index} supports the maintenance of bounding paths, however its storage efficiency is less than ideal as the bounding paths are duplicated in the list $\mathcal{BP}_{i,j}$ for different edges. Suppose $N_b$ is the number of boundary vertices in subgraph $SG$, and every bounding path on average has $n_e$ edges. Then the number of elements in the EP-Index is $\frac{N_b\times(N_b-1)}{2}\cdot \xi\cdot n_e$ (where $\xi$ is the number of bounding paths between any pair of boundary vertices in $SG$, as defined in Section~\ref{sec:bound-path}), which implies that the EP-Index is usually much larger than the size of its corresponding subgraph.

\begin{figure}[htbp]
\centering
\includegraphics[width=0.48\textwidth]{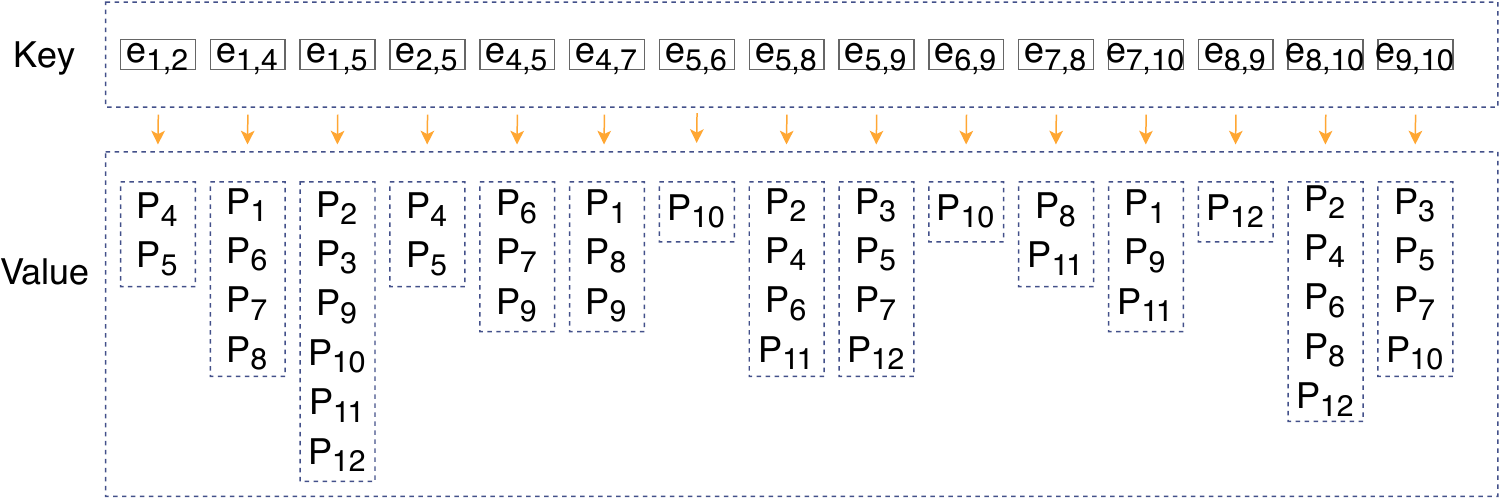}
\caption{EP-Index}
\label{inverted-EP-index}
\end{figure}

\vspace{-1em}
\section{MFP-tree}\label{sec:maintain-dtlp}
Since EP-Index is probably vary large, we devise a compression approach that utilizes locality-sensitive hashing (LSH) to partition the EP-Index into different groups, as well as a modified version of the FP-tree\cite{han2000mining} (called the MFP-tree) to compact bounding paths within each group. This optimization can help compress a large proportion of the duplicate bounding paths in EP-Index. 

\subsection{Partitioning { EP-Index}}
Suppose $\mathcal{P}_i$ and $\mathcal{P}_j$ are two sets of paths corresponding to two different edges. Then their ideal compressing ratio is $\frac{|\mathcal{P}_i\cap \mathcal{P}_j|}{|\mathcal{P}_i\cup \mathcal{P}_j|}$, which is also their Jaccard Similarity. Since the edges in the same group are required to have more common paths, their path sets are expected to have a higher Jaccard similarity. We thus employ LSH to hash the edges with path sets having high Jaccard similarity into the same group.

\begin{itemize}
\item Transferring {EP-Index} into a new matrix called {PE-Matrix}, where each path becomes a row and edges in {EP-Index} are viewed as columns. Figure~\ref{PE-matrix} gives the {PE-Matrix} of {EP-Index} in Fig.\ref{inverted-EP-index}. In {PE-Matrix}, if path $p_i$ is in the path set of $e_{i,j}$, the corresponding position is set to 1 and 0 otherwise.
\item Generating the signature matrix of {PE-Matrix} based on the MinHash strategy. Given two sets $A$ and $B$, MinHash is to estimate their Jaccard similarity quickly, without explicitly computing the intersection and union. In particular, MinHash employs $h$ hash functions to compute the signature for each column (edge) of {PE-Matrix}, which produces a signature matrix denoted by {Sig-Matrix}. This matrix is of size $m\times h$, where $m$ is the number of edges and $h$ is the number of hash functions.

\item Employing LSH to hash columns of {Sig-Matrix} into different groups. First, the rows of the signature matrix is partitioned into $b$ bands. In every band, the sequence of $\frac{h}{b}$ integers in every column is used as a hash number to hash the columns into a separate group array, where $\frac{h}{b}$ is set an integer. The two columns (edges) hashed into the same group are identical in at least one band and are very likely to share more common paths than edges in different groups. 
\end{itemize}

\begin{exmp}
{For the PE-Matrix in Figure~\ref{PE-matrix}, we suppose its signature matrix can be calculated by two hash functions $h_1$ and $h_2$ as shown in Figure~\ref{sig-matrix}, where the edges with the same color can be partitioned into the same group.} 
\end{exmp}

\begin{figure}[htbp]
\centering
\includegraphics[width=0.47\textwidth]{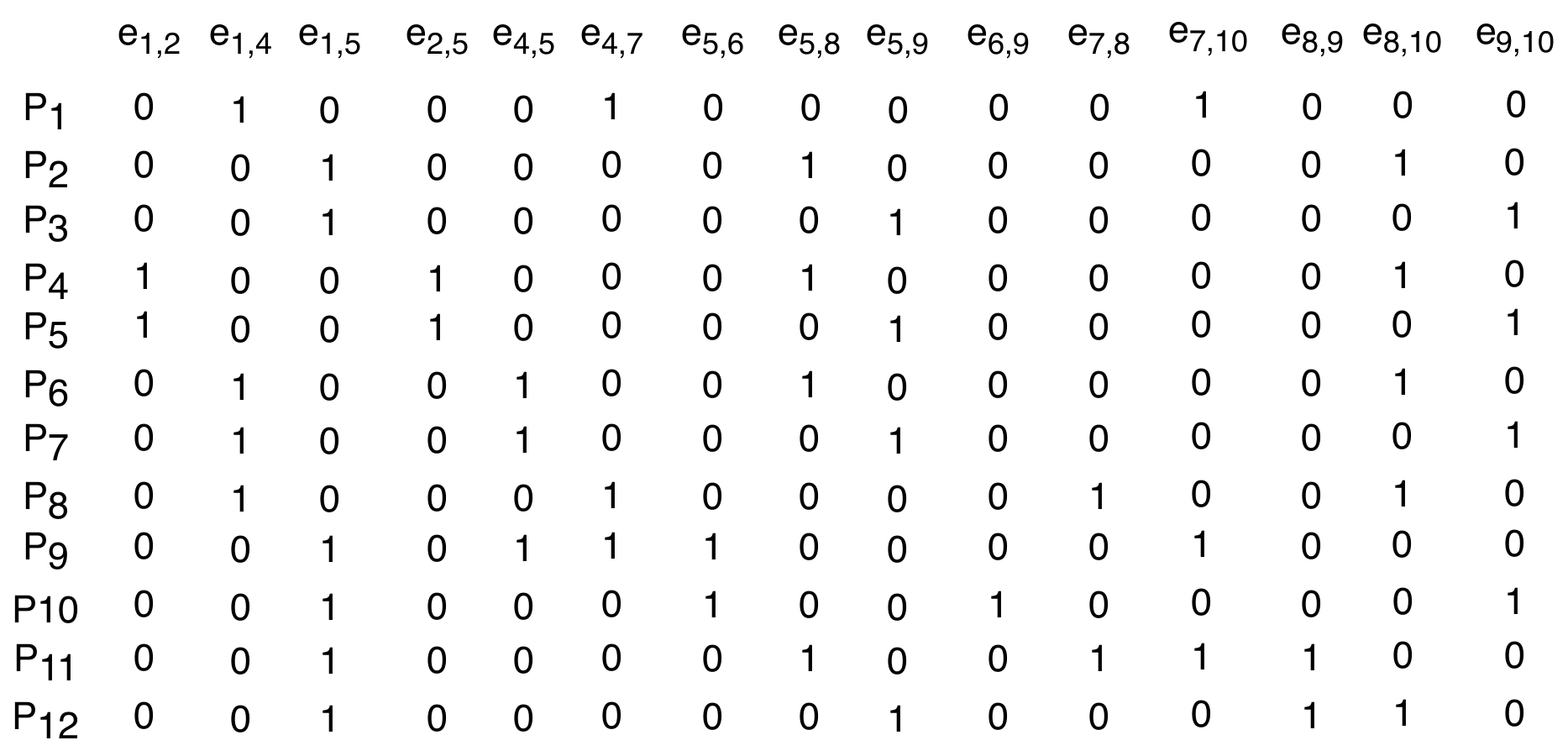}
\caption{{PE-Matrix}}
\label{PE-matrix}
\end{figure}

\vspace{-0.8em}
\begin{figure}[htbp]
\centering
\includegraphics[width=0.47\textwidth]{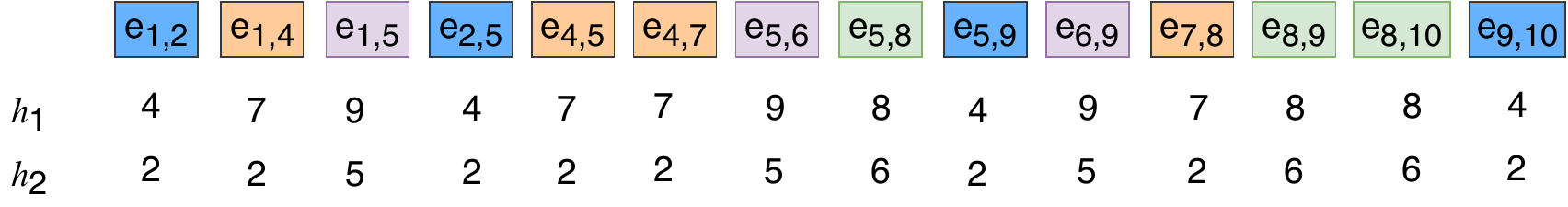}
\caption{Sig-Matrix}
\label{sig-matrix}
\end{figure}

\vspace{-0.5em}
\subsection{Compressing paths with MFP-tree}
Once the path sets are partitioned into different groups, MFP-tree is designed to compress the paths within each group. For any edge $e_{i,j}$ in a group $\mathcal{G}_i$, $\mathcal{P}_{i,j}$ represents the set of paths covering $e_{i,j}$. The paths in $\mathcal{P}_{i,j}$ are sorted based on their number of occurrences in all path sets in descending order.

MFP-tree is initialized to contain only one root node, and then every edge and its path set are consecutively added into the MFP-tree. Suppose the path set $\mathcal{P}_{i,j}$ of edge $e_{i,j}$ has been ranked as $\lbrace p_0, \cdots, p_l\rbrace$, then $\mathcal{P}_{i,j}$ and $e_{i,j}$ will be combined as a sequence of nodes $\mathcal{S}=\lbrace p_0, \cdots, p_l, e_{i,j}\rbrace$, where $p_i$ ($i\in [0, l]$) and $e_{i,j}$ are respectively called the normal node and the tail node. The insertion procedure is as follows.
\begin{itemize}
  \item The longest matching prefix ($pre$) of $\mathcal{S}$ in MFP-tree is first identified. Please note that $pre$ does not need to start from the root but possibly from any node, which is different from the FP-tree.
  \item If $pre$ exists, the remaining part of $\mathcal{S}$ will be directly appended to $pre$. Because $pre$ probably exists in different branches of the MFP-tree, the first being found will be picked. Otherwise, $\mathcal{S}$ will be inserted at the root.
  \item After $\mathcal{S}$ has been added into the MFP-tree, we let the tail node $e_{i,j}$ keeps $|\mathcal{P}_{i,j}|$ (the size of $\mathcal{P}_{i,j}$) that can help us identify the set $\mathcal{P}_{i,j}$ in MFP-tree.
\end{itemize}

\begin{figure}[htbp]
\centering
\subfloat[MFP-tree 1]{\label{Fig.subtree.1}
\includegraphics[width=0.2\textwidth]{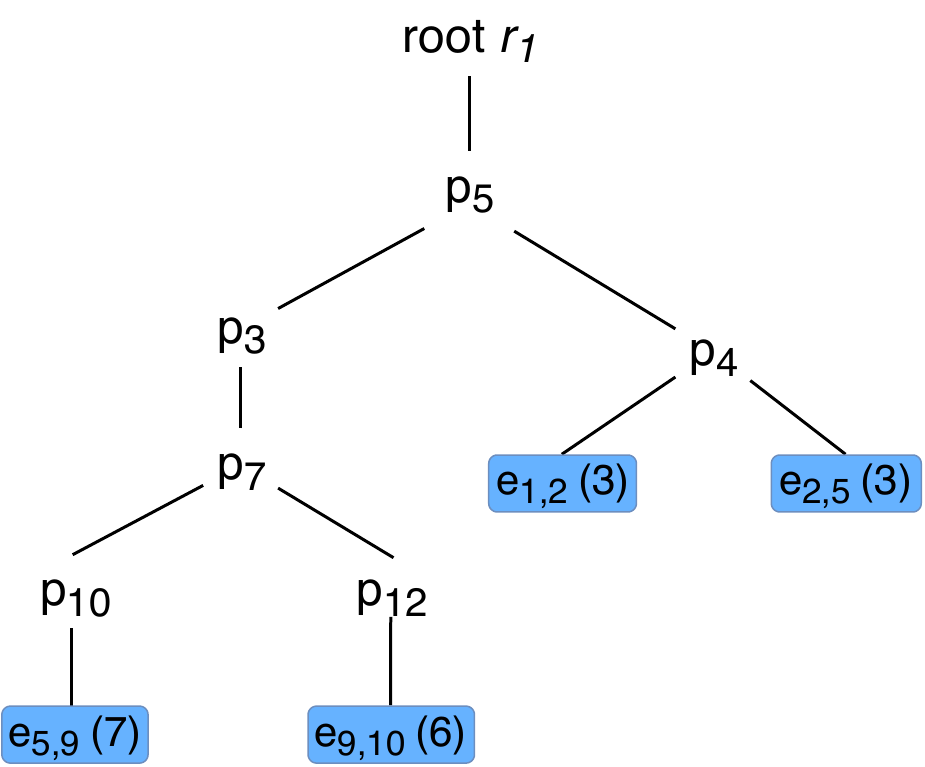}}
\hspace{0.5em}
\subfloat[MFP-tree 2]{\label{Fig.subtree.2}
\includegraphics[width=0.2\textwidth]{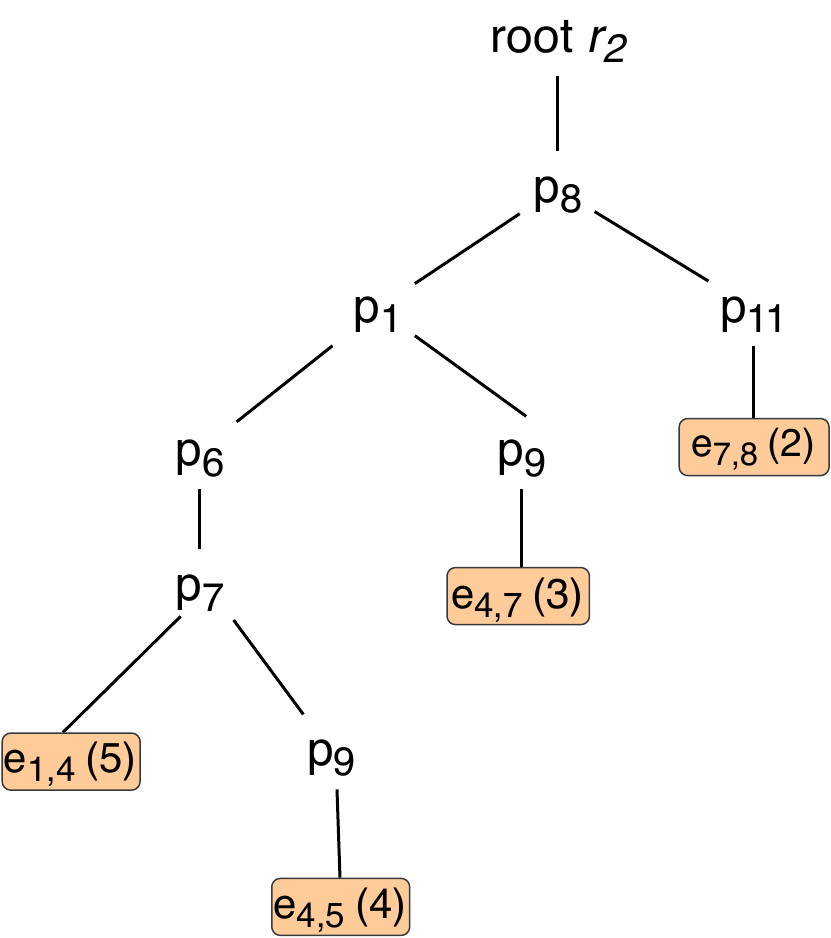}}
\caption{MFP-trees}
\label{Fig-MFP-trees}
\end{figure}

\vspace{-0.5em}
\begin{figure}[htbp]
\centering
\includegraphics[width=0.4\textwidth]{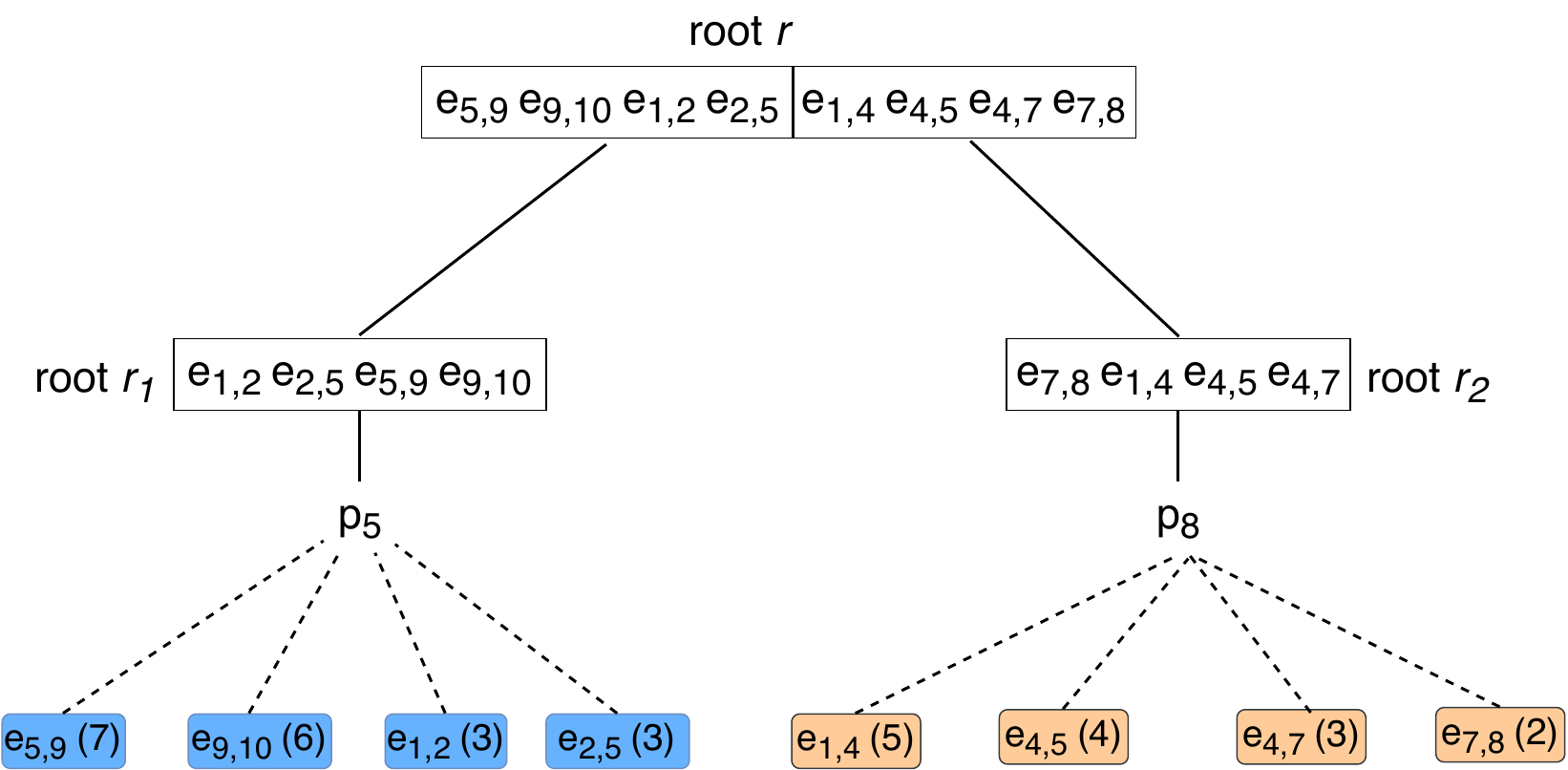}
\caption{Merged MFP-tree $T_e$}
\label{Merged-MFP-tree}
\end{figure}

After all path sets of edges in a group are processed, a matching MFP-tree will be built. In this tree, the root is empty and every leaf is an edge in this group. As a  subgraph probably corresponds to multiple MFP-trees, these MFP-trees will be merged together, denoted by $T_e$ with an empty root $r$. The children of the root $r$ are the roots of merged MFP-trees, which record the edges in the corresponding merged trees.

\begin{exmp}
{ Figure \ref{Fig-MFP-trees} gives two MFP-trees that correspond to the blue and orange groups of edges in Figure \ref{sig-matrix} respectively. Taking the blue group as an example, $e_{5,9}$ should be first inserted into the empty root of $T_1$. Since the path set of $e_{9,10}$ has a prefix ({$P_{33}$, $P_{44}$}) in $T_1$, the remaining part is directly appended to $P_{44}$. Other branches can also be processed in the same fashion. These two MFP-trees are further merged into a bigger MFP-tree shown in Figure~\ref{Merged-MFP-tree}.} 
\end{exmp}

The distances of bound paths can be easily updated in MFP-tree. Suppose the weight of $e_{i,j}$ changes with $\Delta w$, the subtree covering $e_{i,j}$ and the leaf matching $e_{i,j}$ in a MFP-tree can be identified quickly. Since $e_{i,j}$ records the size of $\mathcal{P}_{i,j}$ ($|\mathcal{P}_{i,j}|$), the nodes visited by traversing up $|\mathcal{P}_{i,j}|$ steps are all bound paths containing $e_{i,j}$. For example, the number 4 in $e_{9,10}$ in Figure \ref{Fig-MFP-trees} means four elements containing $e_{9,10}$ can be found by traversing up 4 steps from $e_{9,10}$. Next, we update the real distances of these bound paths with $\Delta w$, and then recompute their lower bounding distances with a constant time cost. 

\section{KSP-DG algorithm} \label{sec:KSP-DG}
In this section, we first discuss the theoretical basis of KSP-DG (Section~\ref{subsec:foundation}), and then present the detailed algorithm (Section~\ref{subsec:KSP-DG}). Finally, we prove its correctness (Section~\ref{subsec:correctness}) and analyze its time complexity (Section~\ref{subsec:complexity}). For clarity, we use ${P_i}(s,t)$ and ${P^{\lambda}_j}(s,t)$ to denote the $i^{th}$ and the $j^{th}$ shortest path in the original graph $G$ and the skeleton graph $G_\lambda$ respectively. For a given query $q(v_s, v_t)$, we assume $v_s$ and $ v_t$ are both boundary vertices in $G$ for ease of presentation, which means both of them are in $G_{\lambda}$. Cases where $v_s$ and $ v_t$ are not boundary vertices in $G$ are discussed in Section~\ref{subsec:nonboundary}. 

\subsection{Underpinnings of {KSP-DG}}\label{subsec:foundation}

For a given query $q(v_s,v_t)$, the basic idea of KSP-DG is to use the shortest paths between $v_s$ and $v_t$ in $G_\lambda$ one by one (in increasing order of distance) as a {\em reference path} to identify the corresponding shortest paths in $G$ that have the same sequence of boundary vertices. This iterative process continues until the KSPs for the query are found. For this to succeed, one key observation is that the path between $v_s$ and $v_t$ in $G_\lambda$ is not longer than the path linking $v_s$ and $v_t$ in $G$ with the same sequence of boundary vertices. which is formally presented as follows.\begin{myLemma}\label{lem:path-border-vertex}
 Given two boundary vertices $v_i$ and $v_j$ in  $G_\lambda$, if the shortest path  between $v_i$ and $v_j$ in $G_\lambda$, ${P^{\lambda}_1}(i,j)$, contains only $v_i$ and $v_j$, then $D({P^{\lambda}_1}(i,j)) \le D({P_1}(i,j))$, where $D({P^{\lambda}_1}(i,j))$ and $D({P_1}(i,j))$ are the shortest distances between $v_i$ to $v_j$ in $G_\lambda$ and $G$ respectively.
\end{myLemma}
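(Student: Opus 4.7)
The plan is to translate the hypothesis about $G_\lambda$ into a statement about $G$ via the boundary-vertex structure. Because $P^\lambda_1(i,j) = \langle v_i, v_j\rangle$ uses only the two endpoints, it is a single edge of $G_\lambda$, and by construction of the skeleton graph (Section~\ref{sec:skeleton-graph}) its weight equals the minimum lower bound distance $MBD(i,j)$ between $v_i$ and $v_j$. Thus $D(P^\lambda_1(i,j)) = MBD(i,j)$, and it suffices to prove $MBD(i,j) \leq D(P_1(i,j))$.

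To establish this, I would decompose the shortest path $P_1(i,j)$ in $G$ according to the boundary vertices it meets. Let $v_i = u_0, u_1, \ldots, u_m = v_j$ be the ordered sequence of boundary vertices encountered along $P_1(i,j)$. By construction, the subpath from $u_l$ to $u_{l+1}$ has no other boundary vertex in its interior; since every non-boundary vertex belongs to a unique subgraph and subgraphs share no edges (Section~\ref{sec:bv}), this subpath must lie entirely within some subgraph $SG_l$ containing both $u_l$ and $u_{l+1}$. Hence its length is at least the shortest distance between $u_l$ and $u_{l+1}$ inside $SG_l$, which by Theorem~\ref{theorem-bound-path} is itself at least $LBD_{SG_l}(u_l,u_{l+1})$, and therefore at least $MBD(u_l,u_{l+1})$.

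Summing these inequalities over $l = 0, \ldots, m-1$ gives
\[
D(P_1(i,j)) \;\geq\; \sum_{l=0}^{m-1} MBD(u_l,u_{l+1}),
\]
and the right-hand side is exactly the length in $G_\lambda$ of the boundary-vertex path $\langle u_0, u_1, \ldots, u_m\rangle$, which is a valid $v_i$-to-$v_j$ path in $G_\lambda$ and therefore cannot be shorter than $P^\lambda_1(i,j)$. Chaining these steps yields $D(P^\lambda_1(i,j)) = MBD(i,j) \leq D(P_1(i,j))$. The main obstacle I anticipate is the locality argument in the second paragraph: one must verify rigorously that consecutive boundary vertices on $P_1(i,j)$ are co-located in a common subgraph, which hinges on the edge-disjointness of the partition and on non-boundary vertices sitting in only one subgraph; the degenerate case $m=1$ (no intermediate boundary vertex) is handled directly by a single application of the $LBD$ bound, without any appeal to the skeleton-graph minimality hypothesis.
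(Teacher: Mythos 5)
Your proof is correct in substance, but it takes a genuinely different route from the paper's. The paper's own proof of this lemma is a two-line local argument (phrased as a contradiction): under the hypothesis, ${P^{\lambda}_1}(i,j)$ is the single edge $(v_i,v_j)$ of $G_\lambda$, whose weight is by construction $MBD(i,j)$, and the paper then asserts directly from the definition of the (minimum) lower bound distance that $MBD(i,j)\le D(P_1(i,j))$ --- no decomposition of $P_1(i,j)$ is performed at all. What you have written instead is essentially the paper's proof of Theorem~\ref{the:shotest-path-compare}: decompose the global shortest path along its boundary vertices, bound each segment below by the corresponding $MBD$ via co-location in a common subgraph, sum, and compare against the shortest path in $G_\lambda$. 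As you yourself observe, this argument never uses the hypothesis that ${P^{\lambda}_1}(i,j)$ contains only $v_i$ and $v_j$, so you have in effect proved the stronger unconditional statement and collapsed Lemma~\ref{lem:path-border-vertex} and Theorem~\ref{the:shotest-path-compare} into one. The trade-off: the paper's version is shorter but leans entirely on the unargued claim $MBD(i,j)\le D(P_1(i,j))$, which is delicate precisely when the global shortest path between two co-located boundary vertices exits their common subgraph (the within-subgraph machinery of Theorem~\ref{theorem-bound-path} only bounds the shortest distance \emph{inside} a subgraph); your per-segment argument needs only those within-subgraph bounds plus the minimality of the shortest path in $G_\lambda$, so it is arguably on firmer footing, at the cost of being longer and of rendering the lemma's hypothesis redundant. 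One small point to tighten: when you sum the $MBD$ terms you should note explicitly that consecutive boundary vertices $u_l,u_{l+1}$ being co-located in $SG_l$ is exactly what guarantees the edge $(u_l,u_{l+1})$ exists in $G_\lambda$ with weight $MBD(u_l,u_{l+1})$, so that $\langle u_0,\ldots,u_m\rangle$ really is a path of $G_\lambda$; you establish the co-location but do not draw this consequence before using it.
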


\begin{proof}
Assume for the sake of contradiction that $D({P^{\lambda}_1}(i,j))$ $>$ $D({P_1}(i,j))$. By definition, the weight of edge ($v_i$, $v_j$) in $G_\lambda$ is the minimum lower bound distance between $v_i$ and $v_j$, which is not greater than $D({P_1}(i,j))$. Therefore, if $D({P^{\lambda}_1}(i,j))$ $>$ $D({P_1}(i,j))$, then there must exist one or more vertices between $v_i$ and $v_j$ in ${P^{\lambda}_1}(i,j)$, which contradicts the initial assumption.
\end{proof}

\begin{myTheo}\label{the:shotest-path-compare}
 $\forall v_s, v_t \in G_\lambda$, $D({P^{\lambda}_1}(s,t)) \leq D(P_1(s,t))$.
\end{myTheo}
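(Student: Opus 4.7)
The plan is to reduce the general statement to Lemma \ref{lem:path-border-vertex} by decomposing $P_1(s,t)$ along the boundary vertices it visits, so that each piece lies entirely inside one subgraph and therefore corresponds to an edge in $G_\lambda$.

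First I would walk along $P_1(s,t)$ in $G$ and mark every boundary vertex it encounters, writing the resulting subsequence as $v_s = v_{b_0}, v_{b_1}, \ldots, v_{b_m} = v_t$. Because two subgraphs share no edges, any transition from one subgraph to a different one must occur at a boundary vertex; hence every consecutive segment $P_1(s,t)[v_{b_i} \to v_{b_{i+1}}]$ is contained in a single subgraph $SG_{a_i}$, and inside $SG_{a_i}$ it contains no intermediate boundary vertex. In particular, $v_{b_i}$ and $v_{b_{i+1}}$ are both boundary vertices of the same subgraph, so by construction of the skeleton graph there is an edge $e'_{b_i,b_{i+1}}$ in $G_\lambda$ whose weight equals the minimum lower bound distance $MBD(b_i, b_{i+1})$.

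Next I would bound each piece. By the definition of $MBD$ and the fact that the lower bound distance in any subgraph never exceeds the true shortest distance in that subgraph (this is the very property established for bounding paths, and it is exactly the single-edge content of Lemma \ref{lem:path-border-vertex} applied to neighboring boundary vertices in $SG_{a_i}$), we have
\begin{equation*}
MBD(b_i, b_{i+1}) \;\le\; D\!\bigl(P_1(s,t)[v_{b_i}\to v_{b_{i+1}}]\bigr).
\end{equation*}
Concatenating the edges $e'_{b_0,b_1}, e'_{b_1,b_2}, \ldots, e'_{b_{m-1},b_m}$ gives a legitimate path $\widetilde{P}(s,t)$ in $G_\lambda$, and summing the inequality over $i = 0, \ldots, m-1$ yields $D(\widetilde{P}(s,t)) \le D(P_1(s,t))$. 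Since $P^{\lambda}_1(s,t)$ is by definition the shortest path from $v_s$ to $v_t$ in $G_\lambda$, we conclude $D(P^{\lambda}_1(s,t)) \le D(\widetilde{P}(s,t)) \le D(P_1(s,t))$.

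The main obstacle, and the only subtle step, is justifying that every segment $P_1(s,t)[v_{b_i}\to v_{b_{i+1}}]$ really lies inside a single subgraph so that it corresponds to an edge of $G_\lambda$. I would handle this by invoking the observation following Definition \ref{def:boundary-vertex} (any path crossing between subgraphs must pass through a boundary vertex): since $v_{b_i}, v_{b_{i+1}}$ are consecutive boundary vertices along $P_1(s,t)$, the segment between them contains only non-boundary interior vertices, and non-boundary vertices belong to exactly one subgraph, which forces the whole segment (and its endpoints) to live in a common subgraph. Once this is in hand, the proof is simply a telescoping sum built on top of Lemma \ref{lem:path-border-vertex}.
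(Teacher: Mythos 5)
Your proposal is correct and follows essentially the same route as the paper's proof: both construct the path in $G_\lambda$ that traverses the same sequence of boundary vertices as $P_1(s,t)$, bound each of its edges by the corresponding segment of $P_1(s,t)$ via Lemma~\ref{lem:path-border-vertex} (i.e., the fact that the edge weight $MBD(b_i,b_{i+1})$ lower-bounds the shortest distance between those boundary vertices), sum, and finish by the minimality of $P^{\lambda}_1(s,t)$ in $G_\lambda$. Your write-up is in fact more careful than the paper's, since you explicitly justify why each segment between consecutive boundary vertices lies in a single subgraph and hence corresponds to an edge of $G_\lambda$, a point the paper takes for granted.
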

\begin{proof}
 For the shortest path $P_1(s,t)$ between $v_s$ and $v_t$ in $G$,  there must be a corresponding path in graph $G_\lambda$ with the same sequence of boundary vertices as those present in $P_1(s,t)$; let us denote this path by ${P^{\lambda}_f}(s,t)$. For any pair of adjacent vertices $v_i$ and $v_j$ in this sequence of boundary vertices (i.e., there exists an edge $(v_i, v_j)$ in ${P^{\lambda}_f}(s,t)$), it follows from Lemma~\ref{lem:path-border-vertex} that the distance between $v_i$ and $v_j$ in ${P^{\lambda}_f}(i,j)$ cannot be greater than the distance connecting $v_i$ and $v_{j}$ in $P_1(i,j)$. Hence,  $D({P^{\lambda}_f}(s,t))\leq D(P_1(s,t))$. If ${P^{\lambda}_f}(s,t)$ is the shortest path between $v_s$ and $v_t$ in $G^{\lambda}$, then we immediately have $D({P^{\lambda}_1}(s,t)) \leq D(P_1(s,t))$; otherwise, there must exist a shortest path ${P^{\lambda}_1}(s,t)$ such that $D({P^{\lambda}_1}(s,t))\leq D({P^{\lambda}_f}(s,t))$. As $D({P^{\lambda}_f}(s,t))\leq D(P_1(s,t))$, we have $D({P^{\lambda}_1}(s,t))$ $\leq$ $ D(P_1(s,t))$.
\end{proof}

\subsection{KSP-DG Algorithm in Detail}\label{subsec:KSP-DG}
KSP-DG is designed to run in a distributed cluster with the {\em master}-{\em worker} model. The master node maintains the original graph $G$, receives the weight updates for edges in $G$, and serves as the entry point of KSP queries. DTLP, as the fundamental index structure, is distributed across workers. In particular, the subgraphs resulting from partitioning $G$ are allocated to different workers on a many-to-one basis based on their load. Each worker maintains the assigned subgraphs as well as the bounding paths between boundary vertices (i.e., the first level of the DTLP index) within each subgraph. Moreover, a copy of the skeleton graph $G_\lambda$ (the second level of the DTLP index) is also kept on each worker. 

Using the DLTP index, KSP-DG adopts a filter-and-refine strategy to iteratively find KSPs for the query $q(v_s,v_t)$. Each iteration of KSP-DG consists of two steps: \textit{filter} and \textit{refine}. Without loss of generality, we describe the two steps for the $i^{th}$ iteration.

 In the filter step, we compute the $i^{th}$ shortest path between $v_s$ and $v_t$ in the skeleton graph $G_\lambda$, and use it as the reference path. This path corresponds to a sequence of boundary vertices, denoted by $\mathcal{R}$, between $v_s$ and $v_t$ in $G$. This step can be executed on one of the workers responsible for processing this query.

In the refine step, we aim to compute the corresponding {\em k} shortest paths connecting $v_s$ and $v_t$ in $G$ that traverse the same sequence of boundary vertices as those present in the reference path. Since any two adjacent boundary vertices in $\mathcal{R}$ must belong to the same subgraph, we can identify partial {\em k} shortest paths for each pair of adjacent boundary vertices in $\mathcal{R}$ from the corresponding subgraphs individually. This operation can be carried out by the workers maintaining those respective subgraphs in parallel. Next, the generated partial {\em k} shortest paths are reported back by these workers to the worker responsible for this query, which will then merge the partial {\em k} shortest paths received to form $k$ complete shortest paths, which all share the same sequence of boundary vertices as the reference path.  

In each iteration, the generated {\em k} shortest paths corresponding to the reference path, called the {\em candidate KSPs}, are used to update a list $\mathcal{L}$  of the shortest paths that have been obtained so far. $\mathcal{L}$ keeps only {\em k} shortest paths found so far in ascending order of distance. After using the generated candidate KSPs to update $\mathcal{L}$ in the $i^{th}$ iteration, if the distance of the $k^{th}$ path in $\mathcal{L}$ is not greater than that of the reference path generated in the $(i+1)^{th}$ iteration, the algorithm terminates, and the paths in $\mathcal{L}$ are the final answer, i.e., the KSPs between $v_s$ and $v_t$ in $G$. Otherwise, the iteration continues.

Algorithm \ref{al:KSP-DG} gives the pseudo-code of {KSP-DG}. It first initializes the parameters in Line 1, and then executes the filter and refine steps in each iteration in Lines 2-12, where $P^{\lambda}_i(s,t)$ denotes the $i^{th}$ reference path. The function {\em candidateKSP} is to identify the candidate KSPs for a given reference path, and its pseudo-code is separately shown in Algorithm~\ref{al:canKSP}. Line 6 in Algorithm~\ref{al:canKSP} uses Yen's Algorithm \cite{yen1971finding} to compute the $k$-shortest paths in subgraph SG between the $j^{th}$ and $(j+1)^{th}$ vertices of the reference path. The cost of running {\em candidateKSP} dominates the total cost of each iteration in KSP-DG, and it can be further optimized. In particular, two neighboring reference paths $P^{\lambda}_i(s,t)$ and $P^{\lambda}_{i+1}(s,t)$ usually share many common pairs of boundary vertices. Once the partial {\em k} shortest paths between these pairs of boundary vertices are identified while dealing with $P^{\lambda}_i(s,t)$, we do not need to compute them again for $P^{\lambda}_{i+1}(s,t)$, which can often assist in reducing the cost of {\em candidateKSP} significantly.

\begin{algorithm}[htbp]
\begin{algorithmic}[1]
\REQUIRE
 $G_\lambda$, $q (v_s, v_t)$, $\mathcal{S}$=$\lbrace{SG}_{1}$, $\cdots$, $SG_{n}\rbrace$;\\
\ENSURE
KSPs from $v_s$ to $v_t$ in $G$;
\STATE $\mathcal{L}=\phi$; $Dist=\infty$; $i=1$;
 \WHILE{$\mathcal{L}=\phi$ $||$ $Dist\leq D({P^{\lambda}_{i+1}}(s,t))$}
  \STATE ${\mathcal{C}}= ${\em candidateKSP} ($\mathcal{S}$, ${P^{\lambda}_i}(s,t)$); 
  \STATE Add ${\mathcal{C}}$ into ${\mathcal{L}}$;\\
  \IF{$|{\mathcal{L}}|>k$}
   \STATE Keep the {\em k} shortest paths in $\mathcal{L}$ and remove others;
   \ENDIF
  \STATE {\em Dist} = the distance of the $k^{th}$ path in $\mathcal{L}$;
  \STATE $i++$;
 \ENDWHILE
 \STATE return $\mathcal{L}$;
 \caption{KSP-DG}\label{al:KSP-DG}
 \end{algorithmic}
\end{algorithm}

\begin{algorithm}[htbp]
\begin{algorithmic}[1]
\REQUIRE
 $\mathcal{S}$=$\lbrace{SG}_{1}$, $\cdots$, $SG_{n}\rbrace$, $P^{\lambda}_i(s,t)$;\\
\ENSURE
Candidate KSPs from $v_s$ to $v_t$ in $G$;
\STATE Set $\mathcal{C}=\phi$; Set $\mathcal{Y}=\phi$; $j$=1;
 \WHILE{$j< |P^{\lambda}_i(s,t)|$} 
  \STATE Identify $\mathcal{U}$, the set of subgraphs containing the $j^{th}$ and $(j+1)^{th}$ vertices in $P^{\lambda}_i(s,t)$, $\texttt{v}_j$ and $\texttt{v}_{j+1}$; 
  \STATE $\mathcal{Y}=\phi$;
  \FORALL{subgraph $SG$ $\in$ $\mathcal{U}$}
  \STATE $\mathcal{Y}$=$\mathcal{Y}$ $\cup$ {\em Yen}($\texttt{v}_j$,$\texttt{v}_{j+1}$,$SG$);
  \ENDFOR
  \STATE Keep only $k$ shortest paths in $\mathcal{Y}$;
  \STATE $\mathcal{C}$=$\mathcal{C}$ $\Join$ $\mathcal{Y}$;
  \STATE Keep only $k$ shortest paths in $\mathcal{C}$;
  \STATE $j++$;
 \ENDWHILE
 \STATE Return $\mathcal{C}$;
 \caption{\em candidateKSP}\label{al:canKSP}
\end{algorithmic}
\end{algorithm}

\vspace{-0.5em}
\begin{exmp}
Suppose $v_4$ and $v_{13}$ in Figure~\ref{global-graph} are the source and destination vertices respectively, and $k=2$. In the first iteration, KSP-DG identifies the first reference path from $v_4$ to $v_{13}$ in $G_\lambda$ (shown as Figure~\ref{fig.skeleton-graph}) as ${P^{\lambda}_1(4,13)}=\langle v_4, v_6, v_9, v_{13}\rangle$ with distance 19. Next, KSP-DG computes $k=2$ shortest paths between any two adjacent boundary vertices as shown in the following table, where the third column shows the subgraphs involved. Then, {KSP-DG} joins the partial shortest paths to generate $k=2$ candidate shortest paths from $v_4$ to $v_{13}$, denoted by ${P_{1}}= \langle v_4, v_6, v_9, v_{11}, v_{12}, v_{13}\rangle$ with distance 19 and ${P_{2}}=\langle v_4, v_5, v_6, v_9, v_{11}, v_{12}, v_{13}\rangle$ with distance 25. So, $\mathcal{L}=\lbrace {P_{1}}, {P_{2}}\rbrace$.

\vspace{-0.05in}
\smallskip\noindent
\resizebox{\linewidth}{!}{
\begin{tabular}{|c|c|c|}
\hline
adjacent boundary vertices & partial shortest paths & involved subgraphs\\
\hline
$(v_4$, $v_6)$& $\langle v_4, v_5, v_6\rangle$, $\langle v_4, v_6\rangle$ & $SG_1$, $SG_2$\\
\hline
$(v_6$, $v_9)$& $\langle v_6, v_9\rangle$ & $SG_2$ \\
\hline
$(v_9$, $v_{13})$& \makecell[cl]{$\langle v_9, v_{11},  v_{12}, v_{13}\rangle$\\ $\langle v_9,  v_{11}, v_{10}, v_{14}, v_{13}\rangle$}& {$SG_3$}\\
\hline
\end{tabular}
}
\smallskip

Since the second reference path is ${P^{\lambda}_2}(4,13)$=$\langle v_4, v_9, v_{13}\rangle$ with distance 22, and $D(P_2)>D({P^{\lambda}_2}(4, 13))$, KSP-DG continues onto the second iteration, where it calculates candidate KSPs w.r.t. ${P^{\lambda}_2}(4, 13)$, denoted by $P_{3}$=$\langle v_4, v_7, v_8, v_9, v_{11}, v_{12}, v_{13}\rangle$ with distance 22, and $P_{4}$=$\langle v_4, v_7, v_8, v_9, v_{11},$ $v_{10}, v_{14}, v_{13}\rangle$ with distance 34. Then $\mathcal{L}$ is updated to $\lbrace P_{1}, P_{3}\rbrace$. After identifying the third reference path $P^{\lambda}_3(4, 13)$=$\langle v_4, v_6, v_{10}, v_{13}\rangle$ with distance 25, it is safe for KSP-DG to conclude that $P_{1}$ and $P_{3}$ are the two shortest paths from $v_4$ to $v_{13}$, as $D(P_{3})<Dist(P^{\lambda}_{3}(4,13))$.
\end{exmp}

{\subsection{Optimization on computing partial KSPs}
In the refine step of KSP-DG, a basic operation extensively conducted is to find {KSP}s for a pair of boundary vertices on a scratch of a subgraph. Yen's algorithm is designed to find {KSP}s, but its low efficiency seriously restrict the performance of KSP-DG. We thus propose a Progressive Yen's (PYen) algorithm to speed up the identification of partial KSPs between boundary vertices within each subgraph.

\subsubsection{Yen's algorithm}
Yen's algorithm is inefficiency due to its centralized search paradigm. In particular, for given source and terminal vertices $v_s$ and $v_t$, it iteratively identifies the $(i+1)^{th}$ shortst path $P_{i+1}(s,t)$  based on the $i^{th}$ shortst path $P_{i}(s,t)$  ($1\leq i<k$). In $i^{th}$ iteration, it sequentially computes the deviation path corresponding to each vertex $v_l$ (except for $v_t$) in $P_{i}(s,t)$, where the deviation path corresponding to $v_l$ consists of two components: a base partial path following $P_{i}(s,t)$ from $v_s$ to $v_l$ and a spur partial path from $v_l$ to $v_t$ that does not contain $e_{l,l+1}$, the edge connecting $v_l$ and $v_{l+1}$ that are two adjacent vertices in ${P_i}(s,t)$. The spur partial path can be computed by searching the shortest path from $v_l$ to $v_t$ in $G'$ on the condition that the weight of the edge $e_{l,l+1}$ is set infinity. Once the spur path is determined, the weight of $e_{l,l+1}$ is reverted to the original value. When all deviation paths of ${P_i}(s,t)$ have been obtained, these deviation paths combining with the previously identified deviation paths of other shortest paths form a candidate deviation path set. From the set, Yen's algorithm picks up the deviation path with minimum weight as the next shortest path $P_{i+1}(s,t)$. Since each iteration only produces one new shortest path, Yen's algorithm needs to conduct $k$ iterations to find KSPs.

\subsubsection{PYen algorithm}
PYen algorithm optimizes Yen's algorithm by parallelizing the deviation paths for the shortest path, early pruning the unnecessary deviation paths, and reusing the identified spur partial paths.


{\bf Identifying deviation paths in parallel.} To speed up the identification of deviation paths of the $i^{th}$ shortest path $P_{i}(s,t)$ ($1\leq i< k$), PYen algorithm initializes an individual instance to compute the deviation path corresponding to each deviation vertex in $P_{i}(s,t)$. Let $n_p=|P_{i}(s,t)|$ denotes the number of vertices in $P_{i}(s,t)$, then there exist $n_p-1$ deviation paths to be generated for $P_{i}(s,t)$. Compared with $\sum_{l=1}^{|P_{i}(s,t)|}T_l$, the cost of computing deviation paths by Yen's algorithm , the parallization of PYen algorithm decreases the cost to $Max\lbrace T_l\rbrace$ ($l\in\lbrace 1,\cdots, |P_{i}(s,t)|-1\rbrace$), where $T_l$ represents the cost of computing the deviation path corresponding to a deviation vertex $v_l$. 

{\bf Early pruning unnecessary deviation paths.} In the above process, PYen algorithm introduces a pruning strategy to avoid generating the redundant unnecessary deviation paths. Suppose $P_i(s,t)$ ($1\leq i< k$) shortest path has been identified, there exist remaining $(k-i)$ paths to be identified. At this moment, not all deviation paths of $P_i(s,t)$ but only the $(k-i)$ shortest ones can be the potential $P_{i+1}(s,t)$ and need to be identified for computing $P_{i+1}(s,t)$. For this objective, PYen uses a priority queue $\mathcal{Q}$ with size $(k-i)$ to keep the current shortest $(k-i)$ deviation paths of $P_i(s,t)$ identified so far. For any deviation vertex $v_l$ whose corresponding deviation path has not been identified, the corresponding search instance will detect whether the {\em stop condition} is satisfied to early terminate exploring the spur partial path from $v_l$ to $v_t$ when traversing each vertex, where the {\em stop condition} refers to the sum of lengths of the base partial path from $v_s$ to $v_l$ and the current spur partial path starting from $v_l$ being explored is not smaller than the length of the $(k-i)^{th}$ deviation path in $\mathcal{Q}$. According to the stop condition, plenty of deviation paths can be early pruned before completely identified.

{\bf Reusing identified spur partial paths.} To avoid repetitive generation of the spur partial paths for the same deviation vertices in the different shortest paths, PYen algorithm indexes the identified spur partial paths from different deviation vertices to the terminal vertex. It encapsulates a deviation vertex and its corresponding spur partial path as a pair of $\langle key, value\rangle$, where the keys (i.e., deviation vertices) are ranked in ascending order with respect to the IDs. For a deviation vertex to be processed, PYen algorithm can first efficiently detects whether the deviation vertex existed in the index, and then decides to if initialize a search instance to compute the corresponding spur partial path. In this way, PYen algorithm can avoid repetitively producing the same spur partial paths when computing KSPs between two boundary vertices. Note that the indexed spur partial paths are not permanently kept but discarded once the KSPs between the boundary vertices are identified, because these spur partial paths will become invalid as edge weights evolve and are no longer valuable for the future generation of deviation paths in later iterations of KSP-DG algorithm.   
}

{\subsection{Optimization on computing partial KSPs}
In the refine step of KSP-DG, a basic operation extensively conducted is to find {KSP}s for a pair of boundary vertices on a scratch of a subgraph. Yen's algorithm is designed to find {KSP}s, but its low efficiency cannot satisfy the requirement of  KSP-DG on the high performance. We thus propose a Progressive Yen's (PYen) algorithm to speed up the identification of partial KSPs between boundary vertices within each subgraph.

\subsubsection{Yen's algorithm}
Yen's algorithm is inefficiency due to its centralized search paradigm. In particular, for given source and terminal vertices $v_s$ and $v_t$, it iteratively identifies the $(i+1)^{th}$ shortst path $P_{i+1}(s,t)$  based on the $i^{th}$ shortst path $P_{i}(s,t)$  ($1\leq i<k$). In $i^{th}$ iteration, it sequentially computes the deviation path corresponding to each vertex $v_l$ (except for $v_t$) in $P_{i}(s,t)$, where the deviation path corresponding to $v_l$ consists of two components: a base partial path following $P_{i}(s,t)$ from $v_s$ to $v_l$ and a spur partial path from $v_l$ to $v_t$ that does not contain $e_{l,l+1}$, the edge connecting $v_l$ and $v_{l+1}$ that are two adjacent vertices in ${P_i}(s,t)$. The spur partial path can be computed by searching the shortest path from $v_l$ to $v_t$ in $G'$ on the condition that the weight of the edge $e_{l,l+1}$ is set infinity. Once the spur path is determined, the weight of $e_{l,l+1}$ is reverted to the original value. When all deviation paths of ${P_i}(s,t)$ have been obtained, these deviation paths combining with the previously identified deviation paths of other shortest paths form a candidate deviation path set. From the set, Yen's algorithm picks up the deviation path with minimum weight as the next shortest path $P_{i+1}(s,t)$. Since each iteration only produces one new shortest path, Yen's algorithm needs to conduct $k$ iterations to find KSPs.

\subsubsection{PYen algorithm}
PYen algorithm optimizes Yen's algorithm from the following aspects.


{\bf Identifying deviation paths in parallel.} To speed up the identification of deviation paths of the $i^{th}$ shortest path $P_{i}(s,t)$ ($1\leq i< k$), PYen algorithm initializes an individual instance to compute the deviation path corresponding to each deviation vertex in $P_{i}(s,t)$. Let $n_p=|P_{i}(s,t)|$ denotes the number of vertices in $P_{i}(s,t)$, then there exist $n_p-1$ deviation paths to be generated for $P_{i}(s,t)$. Compared with $\sum_{l=1}^{|P_{i}(s,t)|}T_l$, the cost of computing deviation paths by Yen's algorithm , the parallization of PYen algorithm decreases the cost to $Max\lbrace T_l\rbrace$ ($l\in\lbrace 1,\cdots, |P_{i}(s,t)|-1\rbrace$), where $T_l$ represents the cost of computing the deviation path corresponding to a deviation vertex $v_l$. 

In the above process, PYen algorithm introduces a pruning strategy to avoid generating the redundant unnecessary deviation paths. Suppose $P_i(s,t)$ ($1\leq i< k$) shortest path has been identified, there exist remaining $(k-i)$ paths to be identified. At this moment, not all deviation paths of $P_i(s,t)$ but only the $(k-i)$ shortest ones can be the potential $P_{i+1}(s,t)$ and need to be identified for computing $P_{i+1}(s,t)$. For this objective, PYen uses a priority queue $\mathcal{Q}$ with size $(k-i)$ to keep the current shortest $(k-i)$ deviation paths of $P_i(s,t)$ identified so far. For any deviation vertex $v_l$ whose corresponding deviation path has not been identified, the corresponding search instance will detect whether the {\em stop condition} is satisfied to early terminate exploring the spur partial path from $v_l$ to $v_t$ when traversing each vertex, where the {\em stop condition} refers to the sum of lengths of the base partial path from $v_s$ to $v_l$ and the current spur partial path starting from $v_l$ being explored is not smaller than the length of the $(k-i)^{th}$ deviation path in $\mathcal{Q}$. According to the stop condition, plenty of deviation paths can be early pruned before completely identified.

Moreover, to avoid repetitive generation of the spur partial paths for the same deviation vertices in the different shortest paths, PYen algorithm indexes the identified spur partial paths from different deviation vertices to the terminal vertex. It encapsulates a deviation vertex and its corresponding spur partial path as a pair of $\langle key, value\rangle$, where the keys (i.e., deviation vertices) are ranked in ascending order with respect to the IDs. For a deviation vertex to be processed, PYen algorithm can first efficiently detects whether the deviation vertex existed in the index, and then decides to if initialize a search instance to compute the corresponding spur partial path. In this way, PYen algorithm can avoid repetitively producing the same spur partial paths when computing KSPs between two boundary vertices. Note that the indexed spur partial paths are not permanently kept but discarded once the KSPs between the boundary vertices are identified, because these spur partial paths will become invalid as edge weights evolve and are no longer valuable for the future generation of deviation paths in later iterations of KSP-DG algorithm.   
}

\subsection{Discussions}\label{subsec:nonboundary}
{\bf Non-boundary Vertices as Source or Destination.} In the preceding discussions, we have assumed that both $v_s$ and $v_t$ in the given query $q(v_s, v_t)$ are boundary vertices. We now show how {KSP-DG} works when $v_s$ and/or $v_t$ are not boundary vertices. Without loss of generality, we consider the case where $v_s$ in subgraph $SG_x$ and $v_t$ in subgraph $SG_y$ are not boundary vertices. To tackle this case, we first make a tactical maneuver by treating $v_s$ and $v_t$ as boundary vertices and adding them into the skeleton graph $G_\lambda$. In particular, we connect $v_s$ with an edge to every boundary vertex $v_i$ in $SG_x$, and the weight of the edge is set to be the minimum lower bound distance between $v_s$ and $v_i$; $v_t$ is processed in the same fashion. Next, we add $v_s$ and $v_t$ to $G_\lambda$, followed by the process of KSP-DG shown in Algorithm~\ref{al:KSP-DG}, just as if $v_s$ and $v_t$ were boundary vertices.

{\bf Finding KSPs in directed graphs.} 
With some minor modification to the DTLP index, KSP-DG can be used to support KSP queries in directed graphs as well. In particular, instead of computing one lower bounding path for a pair of boundary vertices $v_i$ and $v_j$, we maintain two bounding paths (and their corresponding distances), one for each direction (from $v_i$ to $v_j$ or the opposite). Accordingly, the skeleton graph is now a directed graph with two edges between each pair of adjacent vertices, one for each direction. KSP-DG can then be run based on this modified DTLP index to find KSPs.

\subsection{Correctness of KSP-DG}\label{subsec:correctness}
The {KSP-DG} algorithm is provably correct, as shown below.

\begin{myLemma}\label{lem-oriented}
Let ${P^{\lambda}_i}(s,t)$ be the $i^{th}$ reference path from $v_s$ to $v_t$ and $\mathcal{C} _i$ be the set of candidate KSPs w.r.t. ${P^{\lambda}_i}(s,t)$. We have $\forall P_i(s,t)\in \mathcal{C} _i$, $D(P^{\lambda}_i(s,t))$ $\leq$ $D(P_i(s,t))$.
\end{myLemma}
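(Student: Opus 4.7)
The plan is to decompose the inequality edge-by-edge along the reference path. Write $P^{\lambda}_i(s,t)=\langle \texttt{v}_0, \texttt{v}_1, \ldots, \texttt{v}_n\rangle$ with $\texttt{v}_0=v_s$ and $\texttt{v}_n=v_t$, so that the reference path is the concatenation of skeleton edges $(\texttt{v}_j, \texttt{v}_{j+1})$. By construction of $G_\lambda$, the weight of each such edge is the minimum lower bound distance $MBD(\texttt{v}_j, \texttt{v}_{j+1})$, hence $D(P^{\lambda}_i(s,t))=\sum_{j=0}^{n-1} MBD(\texttt{v}_j, \texttt{v}_{j+1})$.

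Next I would unfold a candidate path $P_i(s,t)\in \mathcal{C}_i$ using the logic of Algorithm~\ref{al:canKSP}: by definition of ``candidate KSP w.r.t.\ $P^{\lambda}_i(s,t)$'', the path $P_i(s,t)$ visits exactly the same sequence of boundary vertices $\texttt{v}_0,\ldots, \texttt{v}_n$, and between consecutive boundary vertices $\texttt{v}_j$ and $\texttt{v}_{j+1}$ it consists of a simple sub-path $Q_j$ that lies entirely in some subgraph $SG$ containing both $\texttt{v}_j$ and $\texttt{v}_{j+1}$. Thus $D(P_i(s,t))=\sum_{j=0}^{n-1} D(Q_j)$.

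It then suffices to prove the edge-level inequality $MBD(\texttt{v}_j, \texttt{v}_{j+1}) \leq D(Q_j)$ for every $j$. This follows from chaining the definitions given in Section~\ref{sec:dtlp}: $D(Q_j)$ is at least the shortest distance between $\texttt{v}_j$ and $\texttt{v}_{j+1}$ inside $SG$; that shortest distance is in turn at least the lower bound distance $LBD(\texttt{v}_j, \texttt{v}_{j+1})$ in $SG$, because every bounding path's bound distance is a lower bound on the actual shortest distance (Theorem~\ref{theorem-bound-path}); and $MBD(\texttt{v}_j, \texttt{v}_{j+1})$ is by definition the minimum such $LBD$ over all subgraphs containing both vertices, so it is no larger than the one for $SG$. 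Summing the edge-level inequalities over $j=0,\ldots,n-1$ yields $D(P^{\lambda}_i(s,t))\leq D(P_i(s,t))$, which is the claim.

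The main obstacle I expect is bookkeeping at the edge level rather than any deep argument: in particular one must be careful that the sub-path $Q_j$ produced by {\em candidateKSP} really is constrained to a single subgraph that contains both $\texttt{v}_j$ and $\texttt{v}_{j+1}$ (so that the $LBD$ defined for that subgraph applies), and that this $LBD$ dominates $MBD$ rather than the other way around. Once this alignment between ``which subgraph hosts $Q_j$'' and ``which $LBD$ is invoked'' is made explicit, the rest of the argument reduces to term-by-term comparison of two sums.
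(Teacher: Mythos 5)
Your proof is correct and takes essentially the same route as the paper: both decompose the inequality term-by-term over consecutive boundary vertices of the reference path and sum the per-edge bounds. The only difference is that the paper discharges the per-edge step by citing Lemma~\ref{lem:path-border-vertex}, whereas you unfold the chain $MBD \leq LBD \leq$ (shortest distance in the hosting subgraph) $\leq D(Q_j)$ directly from the definitions, which is, if anything, a slightly more careful rendering of the same argument.
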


\begin{proof}
Let $\mathcal{S}_i$ denote the sequence of boundary vertices on ${P^{\lambda}_i}(s,t)$ in graph $G$, where $v_s$ and $v_t$ are viewed as boundary vertices. Since $P_i(s,t)\in \mathcal{C}_i$, the sequence of boundary vertices on $P_i(s,t)$ is the same as $\mathcal{S}_i$ in graph $G$. For any two adjacent boundary vertices $v_j$ and $v_{j+1}$ in $\mathcal{S}_i$, based on Lemma\ref{lem:path-border-vertex}, we infer that the shortest distance between $v_j$ and $v_{j+1}$ in skeleton graph $G_\lambda$ is not greater than their shortest distance in graph $G$. Accumulating the shortest distances of all pairs of adjacent vertices in $\mathcal{S}_i$ on $G_\lambda$ and $G$,  we have $D(P^{\lambda}_i(s,t))$ $\leq$ $D(P_i(s,t))$.
\end{proof}

\begin{myTheo}\label{correctness-KSP-DG-I}
In the $i^{th}$ ($i\geq 1$) iteration of {KSP-DG}, if $D(P_k)\leq D({P^\lambda_{i+1}}(s,t))$, where $P_{k}$ is the $k^{th}$ path in $\mathcal{L}$, then the paths in $\mathcal{L}$ are KSPs from $v_s$ to $v_t$ in $G$, .
\end{myTheo}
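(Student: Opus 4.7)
The plan is to argue by contradiction using two ingredients already at hand: the monotone enumeration of reference paths in $G_\lambda$ in nondecreasing order of distance, and Lemma~\ref{lem-oriented}, which says that any candidate KSP corresponding to a reference path is at least as long as that reference path. Suppose, toward a contradiction, that the conclusion fails. Then there exists a path $P^*$ between $v_s$ and $v_t$ in $G$ that is among the true top-$k$ shortest paths yet is absent from $\mathcal{L}$; in particular $D(P^*) < D(P_k)$.

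First I would associate $P^*$ with a corresponding path $P^{\lambda *}(s,t)$ in the skeleton graph $G_\lambda$ whose vertex sequence is exactly the sequence of boundary vertices traversed by $P^*$ (treating $v_s$ and $v_t$ as boundary vertices as described in Section~\ref{subsec:nonboundary} when needed). Applying Lemma~\ref{lem:path-border-vertex} edge by edge along this sequence, exactly as in the proof of Theorem~\ref{the:shotest-path-compare}, yields $D(P^{\lambda *}(s,t)) \le D(P^*)$. Combining with the hypothesis gives the chain
\[
D(P^{\lambda *}(s,t)) \;\le\; D(P^*) \;<\; D(P_k) \;\le\; D(P^\lambda_{i+1}(s,t)).
\]

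Next I would exploit the monotone enumeration of reference paths: since $D(P^{\lambda *}(s,t)) < D(P^\lambda_{i+1}(s,t))$, the path $P^{\lambda *}(s,t)$ must already have appeared as some $P^\lambda_{j}(s,t)$ with $j \le i$, so iteration $j$ has already been executed. I would then invoke the specification of \emph{candidateKSP} (Algorithm~\ref{al:canKSP}): the returned set $\mathcal{C}_j$ contains the $k$ shortest paths of $G$ whose boundary vertex sequence matches that of $P^\lambda_j(s,t)$; since $P^*$ sits among the globally top-$k$, it is \emph{a fortiori} among the top-$k$ with its own boundary vertex sequence, hence $P^* \in \mathcal{C}_j$. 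The insertion and pruning step (Lines~4--7 of Algorithm~\ref{al:KSP-DG}) discards a path only when it is strictly longer than the current $k^{th}$ entry, so once $P^*$ is admitted at iteration $j$ it cannot be evicted at any later iteration $j+1, \ldots, i$ (its distance is strictly below $D(P_k)$, which only decreases over time). This forces $P^* \in \mathcal{L}$ and contradicts the assumption.

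The step I expect to be the main obstacle is justifying the \emph{candidateKSP} specification I rely on, namely that the set $\mathcal{C}_j$ really contains the top-$k$ paths of $G$ restricted to the boundary sequence of $P^\lambda_j(s,t)$. Algorithm~\ref{al:canKSP} keeps only the $k$ shortest partial paths between each pair of adjacent boundary vertices before joining, and also truncates the partial join to $k$ candidates at every stage; I must verify that this double truncation is safe. This reduces to an optimal-substructure argument: any path in the global top-$k$ with the prescribed boundary sequence decomposes into partial paths that each rank among the $k$ shortest between their respective endpoints, and the sorted-merge join preserves the top-$k$ prefixes inductively along the sequence. Once this lemma is in hand, the rest of the proof follows from the clean chain of inequalities above.
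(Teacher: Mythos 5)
Your proof is correct and follows essentially the same route as the paper's: a proof by contradiction that combines Lemma~\ref{lem-oriented} (the reference path lower-bounds any matching path in $G$) with the monotone enumeration of reference paths in $G_\lambda$. The only differences are presentational --- the paper derives the contradiction on distances ($D(P_k)\le D(P_f)$) whereas you derive it on membership ($P^*\in\mathcal{L}$), and you additionally make explicit two steps the paper leaves implicit, namely the completeness of \emph{candidateKSP} under its repeated $k$-truncations and the fact that a path shorter than the final $k^{th}$ entry is never evicted from $\mathcal{L}$.
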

\begin{proof}
Suppose for the sake of contradiction that there exists a path ${P_f}(s, t)$ with a smaller distance than $P_k$ and not in $\mathcal{L}$. ${P_f}(s, t)$ is thus not a candidate shortest path generated based on any reference path ${P^{\lambda}_j}(s, t)$ ($j\in[1, i]$). Therefore, there must be another reference path ${P^{\lambda}_f}(s, t)$ that matches ${P_f}(s, t)$ and $D({P^{\lambda}_f}(s, t))\geq D({P^{\lambda}_{i+1}}(s, t))$. Based on the given condition $D(P_k)\leq D({P^{\lambda}_{i+1}}(s, t))$, we can infer that $D(P_k)\leq D({P^{\lambda}_f}(s,t))$. Because $D({P^{\lambda}_f}(s, t))\leq D({P_f}(s, t))$ follows from Lemma~\ref{lem-oriented}, we have $D(P_k)\leq D({P_f}(s, t))$, which contradicts the initial assumption. 
\end{proof}

\subsection{Number of Iterations in KSP-DG}
At the very beginning when the DTLP is constructed, all weights are identical to the initial states and the lower bound distance of any two boundary vertices equals to their shortest distance within every subgraph. In this case, the first reference path ${P^{\lambda}_1}$ in $G_\lambda$ and the first shortest path ${P_1}$ in $G$ have the same sequence of boundary vertices. Suppose the other $(k-1)$ shortest paths also have the same sequence of boundary vertices as ${P^{\lambda}_1}$, then KSPs can be found in only one iteration, which is the best case. In the worst case, each of {\em k} shortest paths has a different sequence of boundary vertices, KSP-DG executes at most $k$ iterations to identify KSPs.

When the weights of edges dynamically change, it is hard to obtain an exact number of iterations based on a snapshot of $G$ and we just estimate an number of iterations. For this purpose, we assume the weights of only a portion of the edges change, and that the magnitude of the change is usually not large in road networks, which is in accordance with previous studies \cite{Bernhard2004Time}. 
Moreover, under the assumption that all roads have a similar varying trend in travel times,  the changes in the  weights of the edges will have the similar trend; namely most of them will increase or decrease similarly. When the edge weights change under the above assumptions, we find the lower bound distance between every pair of boundary vertices is still very close to their shortest distance in each subgraph. Like the initial snapshot of $G$ with unchanged weights, KSP-DG can determine KSPs with at most {\em k} iterations in most cases.  
\vspace{-1em}
\subsection{Cost Analysis of KSP-DG}\label{subsec:complexity}

\subsubsection{Communication Cost}
Assume that the number of iterations required to complete KSP-DG is $N_r$.
We consider a vertex as a unit amount of data transmitted across different nodes to measure communication cost. At each iteration of KSP-DG, the worker $u_i$ responsible for the query transmits a reference path $P^{\lambda}$ to all other workers, which incurs a communication cost of $O(|P^{\lambda}|)$. Then each worker which has received $P^{\lambda}$ computes partial {\em k} shortest paths between each pair of vertices is maintaining, and then sends those shortest paths to the worker $u_i$. This operation is expected to be evenly shared by all workers, and its communication cost is $O(|P^{\lambda}| k\cdot z/N_s)$ as each of the partial {\em k} shortest paths includes at most $z$ vertices, where $z$ is the threshold on the number of vertices in a subgraph, and ${N_s}$ denotes the number of workers in the cluster. Therefore, the communication cost for $N_r$ iterations in KSP-DG is $O(N_r|P^{\lambda}|k\cdot z/N_s)$.

\vspace{-0.5em}
\subsubsection{Computation Cost}
KSP-DG involves two major operations: (1) identifying reference paths and (2) computing the corresponding candidate KSPs for each reference path. For Operation (1), $N_r$ reference paths have to be computed in total using Yen's algorithm by the worker responsible for the query, so the time complexity of this operation is that of Yen's Algorithm,  $O(N_r{|\mathcal{V}_\lambda|}(|\mathcal{E}_\lambda|+|\mathcal{V}_\lambda|log{|\mathcal{V}_\lambda|}))$ \cite{gao2010fast}, where $|\mathcal{V}_\lambda|$ and $|\mathcal{E}_\lambda|$ are the numbers of vertices and edges in the skeleton graph $G_\lambda$ respectively. In each iteration in KSP-DG, Operation (2) is further decomposed into computing partial {\em k} shortest paths between each pair of adjacent boundary vertices in $P_\lambda$ with Yen's Algorithm, and there are ${|P_\lambda|}-1$ pairs of adjacent boundary vertices to be processed. Hence, the computation cost of this operation in each iteration is $O$($k\cdot z(N_e+{z\log_z}$)${|P_\lambda|}$), 
where $N_e$ is the maximum number of edges in a subgraph. Since this operation is almost evenly shared by $N_s$ workers and executed $N_r$ times, its computation cost is  $O$($k\cdot z\cdot(N_e+z\log_z$)$|P_{\lambda}| N_r/{N_s}$).
Therefore, the total computation cost for KSP-DG is $O(N_r(|\mathcal{V}_\lambda|(|\mathcal{E}_\lambda|+|\mathcal{V}_\lambda|log{|\mathcal{V}_\lambda|})+k\cdot z(N_e+z\log_z) |P_{\lambda}|/N_s))$.

\section{Experiments}\label{sec:exp}
\subsection{Implementation of KSP-DG on Storm}
 We implement KSP-DG on Apache Storm\cite{Storm}, a popular distributed stream processing framework, to evaluate its performance. Following the Storm paradigm, KSP-DG is designed as a "topology" of a directed acyclic graph with Spouts and Bolts as nodes of this graph. In Storm, spouts and bolts, as logical processors, are connected with streams, where each stream is an unbounded sequence of tuples  containing the data to be processed. A spout acts as a source of streams in a topology, and every bolt processes the tuples according to a user-specified logic.  
 
 In the KSP-DG topology, only one EntranceSpout is deployed on the master, which receives edge weight updates and new KSP queries. Moreover, the EntranceSpout just maintains the partitioning of subgraphs, which assists Spout to send edge weight updates to the SubgraphBolt maintaining the subgraph covering this edge. One or more SubgraphBolts are deployed on each worker to take care of subgraphs organized as adjacency lists, where each SubgraphBolt can be in charge of more than one subgraph. One or more QueryBolts are also created on each worker to process KSP queries, and each query is processed by a unique QueryBolt. 
 
\begin{figure}[htbp]
\centering
\includegraphics[width=0.45\textwidth]{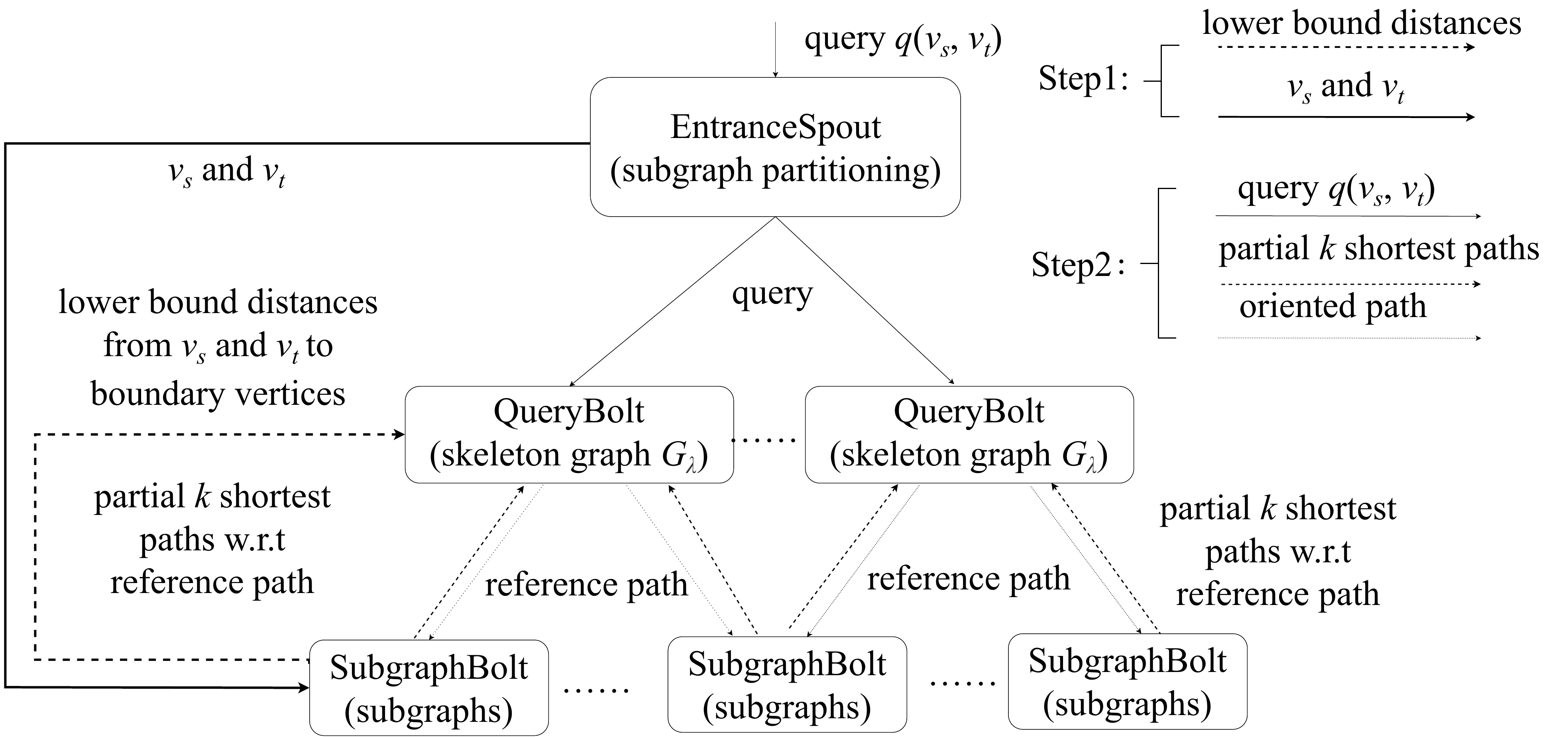}
\caption{Framework for Deploying {KSP-DG} on Storm}\label{fig:KSP-DG-Storm}
\vspace{-0.15in}
\end{figure}
 
 Figure~\ref{fig:KSP-DG-Storm} presents the procedure of processing $q(v_s,v_t)$ by KSP-DG on Storm. When the query $q$ arrives at the EntranceSpout, we first detect whether $v_s$ and $v_t$ are boundary vertices. If not, they will be processed by {\em Step 1} first; otherwise, $q$ will be directly processed by {\em Step 2}.
 
 {\em Step 1} : The EntranceSpout sends $v_s$ and $v_t$ together as a tuple to the SubgraphBolt(s) responsible for the subgraph(s) covering $v_s$ and $v_t$ respectively. Then each of the relevant SubgraphBolt(s) computes the lower bound distances from $v_s$ and $v_t$ to each boundary vertex in the subgraph(s) covering these two vertices respectively. Next, those identified lower bound distances as well as $v_s$ and $v_t$ are packaged into tuples that are delivered to every QueryBolt, which adds $v_s$ and $v_t$ into the skeleton graph $G_\lambda$ using the principle discussed in Section~~\ref{subsec:nonboundary}.
 
 {\em Step 2} : The EntranceSpout assigns $q$ to a QueryBolt $QB_i$, which computes a reference path $P^{\lambda}$ for this query. Next, the tuple consisting of  $P^{\lambda}$ and the ID of $q$ is broadcast to all SubgraphBolts. After receiving this tuple, every SubgraphBolt identifies the subgraphs within those it maintains that cover any pair of two adjacent vertices in $P^{\lambda}$, and then generates partial {\em k} shortest paths for each pair of vertices from the corresponding subgraphs. A SubgraphBolt uses a {\em map} to keep the partial {\em k} shortest paths, where a key is a pair of adjacent vertices in $P^{\lambda}$ and its value is a priority queue storing the partial {\em k} shortest paths for this pair ordered according to their distances. Afterwards, this map and query $q$ form a tuple that is returned to QueryBolt ${QB}_i$ based on the ID of $q$. Since a pair of adjacent vertices in $P^{\lambda}$ probably belongs to more than one subgraph, ${QB}_i$ has to choose {\em k} shortest ones from all received partial {\em k} shortest paths for each pair, and then joins the selected partial paths to generate candidate KSPs corresponding to $P^{\lambda}$. These candidate KSPs are used to update the list of paths obtained so far. When the identified KSPs satisfy the terminating condition, QueryBolt ${QB}_i$ outputs the final result; otherwise, it generates the next reference path and continues onto the next iteration. 

\subsection{Experiment Setup and Datasets}\label{subsec:setup}

The system is deployed on a cluster of 10 servers  from a public cloud service provider, and each server has a quadcore CPU of 2.5GHz and 32GB memory. These servers are connected via Ethernet with a bandwidth of 1Gbps. {We use as the datasets four real road networks with travel times from New York, Colorado, and Florida, and Central USA \cite{DIMACS}, which are directed graphs, denoted by {NY}, {COL}, {FLA}, and CUSA respectively.} 
The number of vertices and edges in these graphs are given in Table~\ref{expri-graph}, along with the number of subgraphs and skeleton graphs that can be obtained when $z$ (the maximum subgraph size) takes on their typical values. The numbers in parentheses are the numbers of subgraphs with more than five boundary vertices ($n_b>$5), and the last column shows the number of vertices in the skeleton graph. Notice that the skeleton graph is much smaller than the original graph.
\vspace{-0.1in}
\begin{table}[h]
\centering
\caption{Statistics on the Road Network Datasets}\label{expri-graph}
\vspace{-0.15in}
\resizebox{\linewidth}{!}{
\begin{tabular}{llllll}
\hline
road network& \#vertices & \#edges & {\em z}& \#subgraphs ($n_b>$5) & $G_\lambda$\\
\hline
NY & 264,346 & 733,846 & 200 & 4,173 (1,586)&  24461 \\
\hline
COL & 435,666 & 1,057,066 & 200 & 8,001 (2,004)& 27,665\\
\hline
FLA & 1,070,376 & 2,712,798 & 500 & 13,701 (3,682) & 52,640\\
\hline
CUSA & 14,081,816 & 34,292,496 & 1000 & 121,725 (18,251) & 514,618\\ 
\hline
\end{tabular}}
\vspace{-0.1in}
\end{table}

We use the (normalized) travel time on each road in the road networks as the edge weights in the graphs. Since only one snapshot of the travel times in each  road network is given in the dataset, we adopt a well-established model \cite{Bernhard2004Time} to dynamically vary the travel time in each road to simulate real-world traffic conditions. We use $\alpha$ to represent the percentage of edges that change weights at each snapshot, and $[-\tau, \tau]$  to denote the range of weight variation. In the following experiments, $\alpha=35\%$, and $\tau=30\%$ unless otherwise specified. {We apply identical changes to the weights of the two edges in the opposite direction between a pair of vertices to simulate varying undirected graphs; for CUSA, we also experiment with the case where the weights of the opposite edges change independently to simulate varying directed graphs.} All results shown are the average of 20 runs on the cluster of 10 servers unless otherwise specified. 

\subsection{Evaluation of DTLP}
We evaluate the influence of parameters $z$, $\xi$, $\alpha$, and $\tau$ on the performance of DTLP, {which are summarized in Table~\ref{expri-parameters}.  Due to space limitations, only the results on NY and CUSA are shown; the trends on the other two datasets are similar; those results are available in the full technical report \cite{Anonymous-technique-report}.}

\vspace{-0.5em}
\begin{table}[h]
\centering
\caption{Summary of Parameters Used in Evaluation}\label{expri-parameters}
\vspace{-0.15in}
\smallskip\noindent
\resizebox{\linewidth}{!}{
\begin{tabular}{|c|c|}
\hline
Parameters & Meaning\\
\hline
$z$ & size of subgraph\\
\hline
$\xi$ & number of bounding paths between a pair of boundary vertices\\
\hline
$\alpha$ & percentage of edges changing weights at each snapshot\\
\hline
$\tau$ & range of edge weight variation\\
\hline
\end{tabular}
}
\vspace{-0.1in}
\end{table}

\begin{figure*}[htbp]
\begin{minipage}[t]{0.235\textwidth}
\centering
\includegraphics[width=\textwidth]{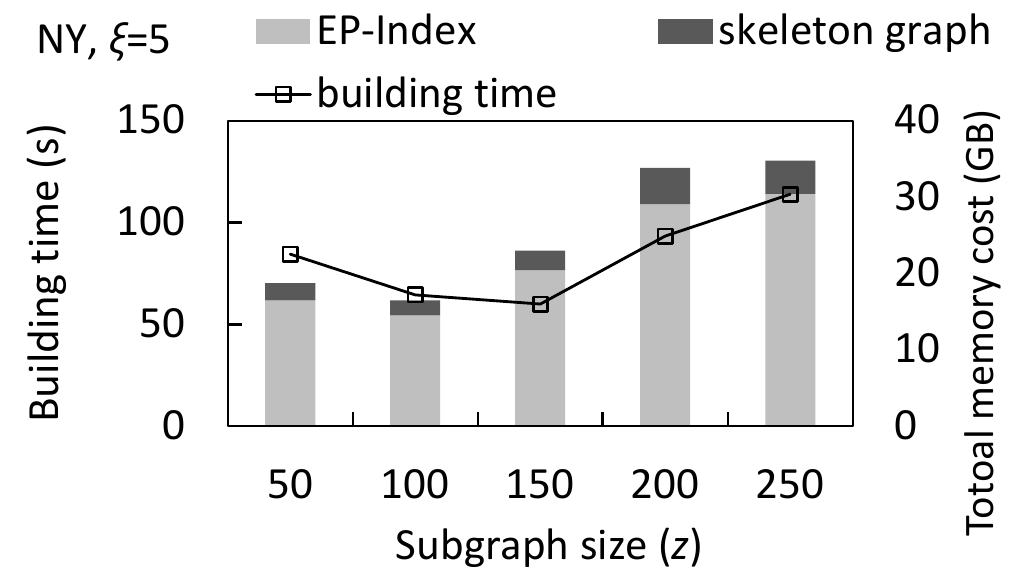}
\vspace{-0.8cm}
\caption{\scriptsize{Construction Cost (NY)}}\label{skeleton-time-NY}
\end{minipage}
\begin{minipage}[t]{0.235\textwidth}
\centering
\includegraphics[width=\textwidth]{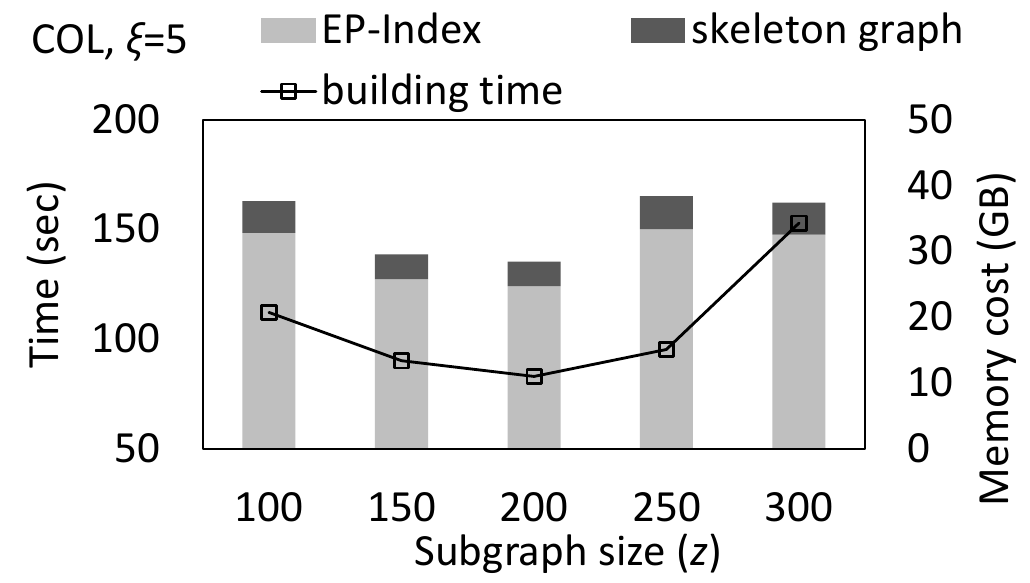}
\vspace{-0.8cm}
\caption{\scriptsize{Construction Cost (COL)}}\label{skeleton-time-COL}
\end{minipage}
\begin{minipage}[t]{0.235\textwidth}
\centering
\includegraphics[width=\textwidth]{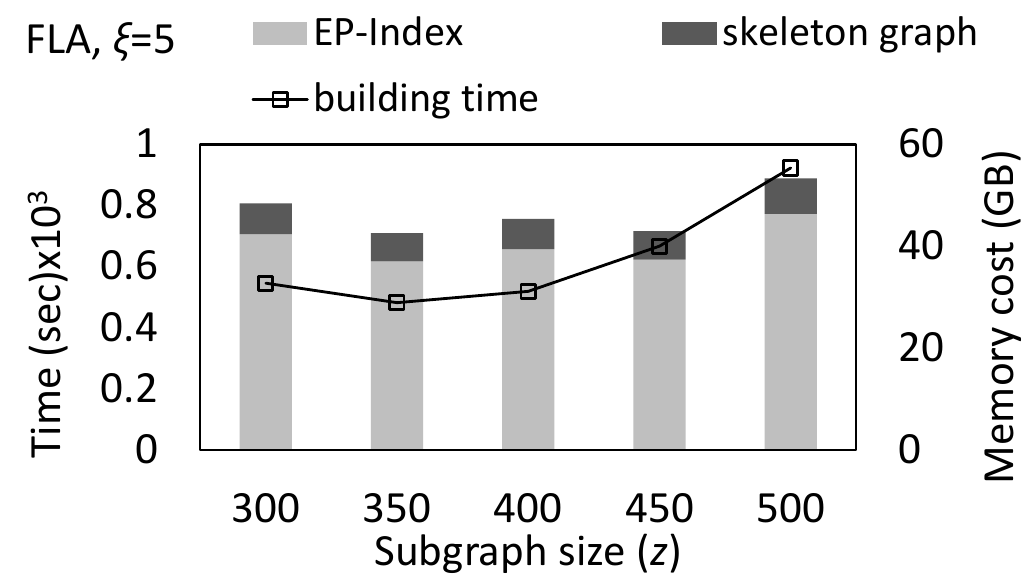}
\vspace{-0.8cm}
\caption{\scriptsize{Construction Cost (FLA)}}\label{skeleton-time-FLA}
\end{minipage}
\begin{minipage}[t]{0.235\textwidth}
\centering
\includegraphics[width=\textwidth]{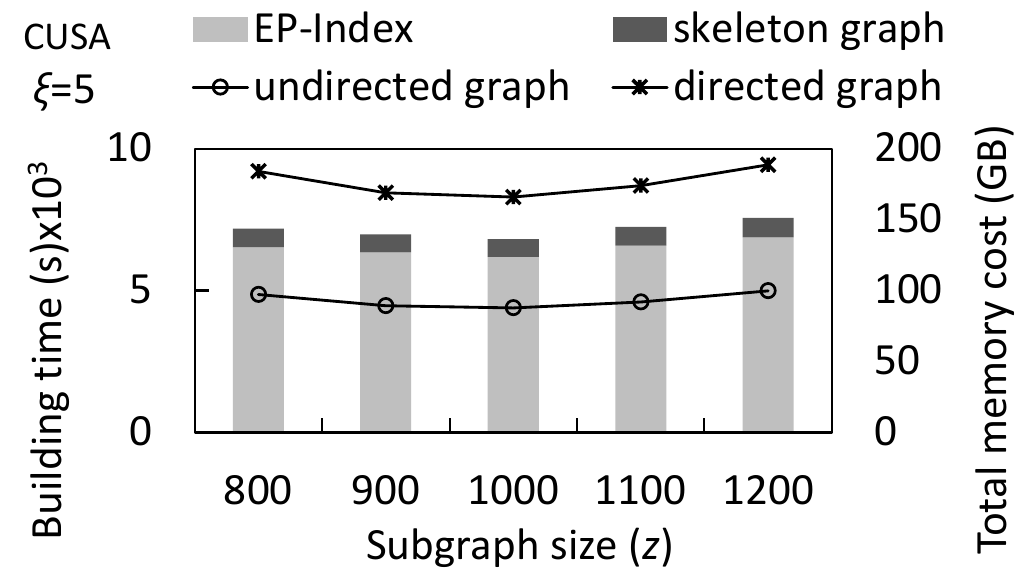}
\vspace{-0.8cm}
\caption{\scriptsize{Construction Cost (CUSA)}}\label{skeleton-time-CUSA}
\end{minipage}
\vspace{-0.15in}
\end{figure*}

\begin{figure*}[htbp]
\begin{minipage}[t]{0.235\textwidth}
\centering
\includegraphics[width=\textwidth]{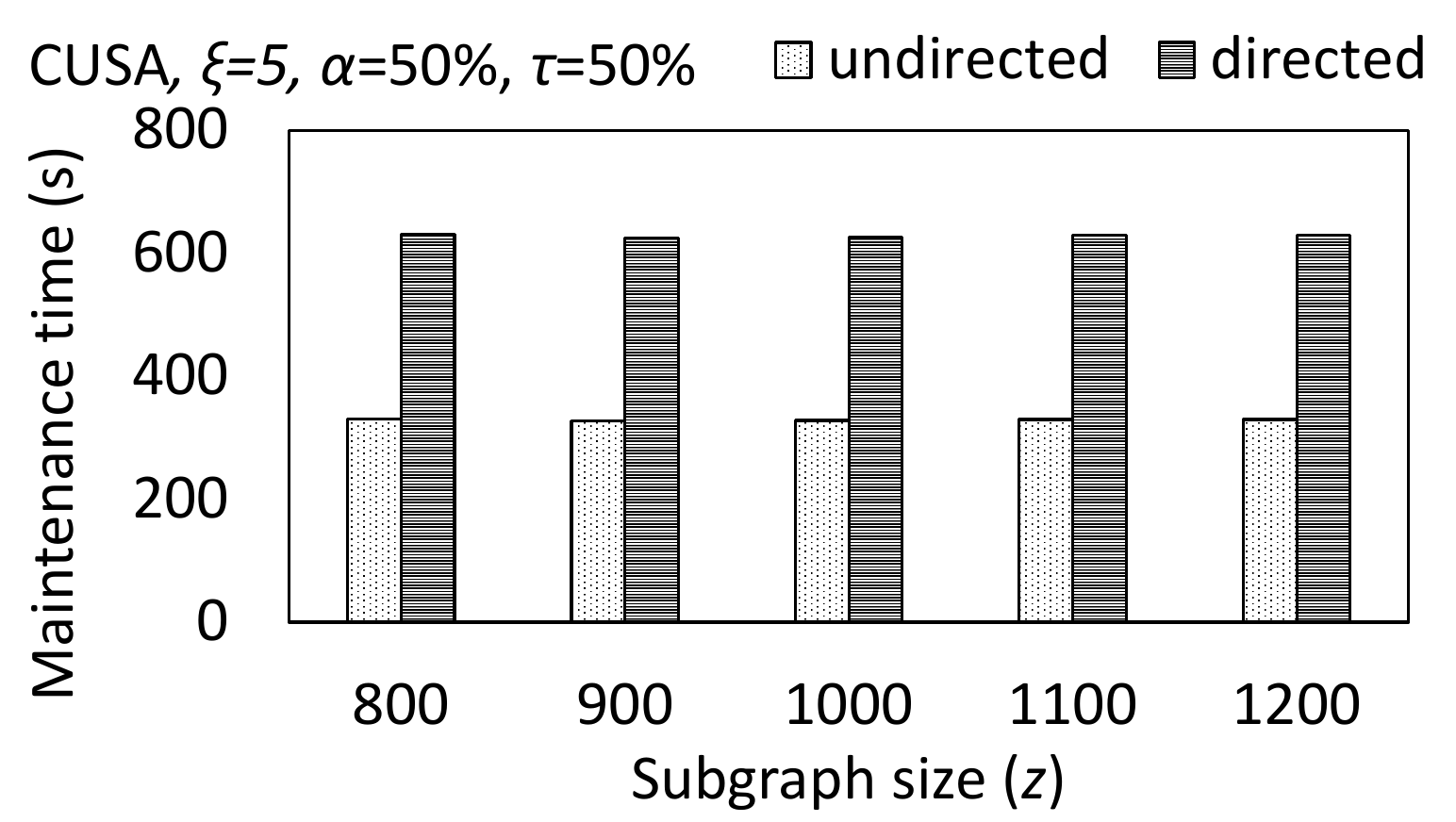}
\vspace{-0.8cm}
\caption{{Maintenance Cost (CUSA)}}\label{update-time-directed}
\end{minipage}
\begin{minipage}[t]{0.235\textwidth}
\centering
\includegraphics[width=\textwidth]{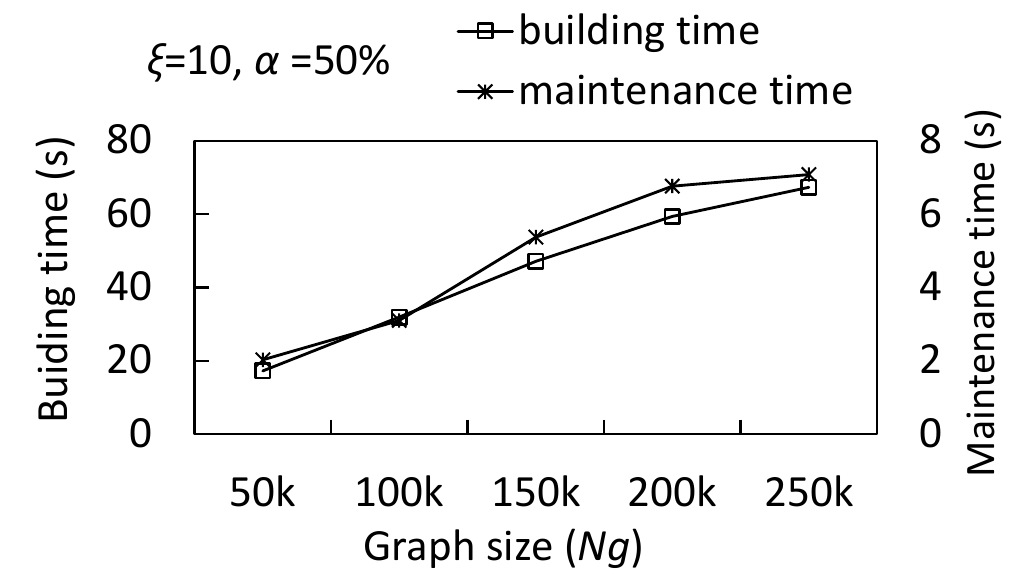}
\vspace{-0.8cm}
\caption{{Time Cost w.r.t. $N_g$}}\label{update-time-size}
\end{minipage}
\begin{minipage}[t]{0.235\textwidth}
\centering
\includegraphics[width=\textwidth]{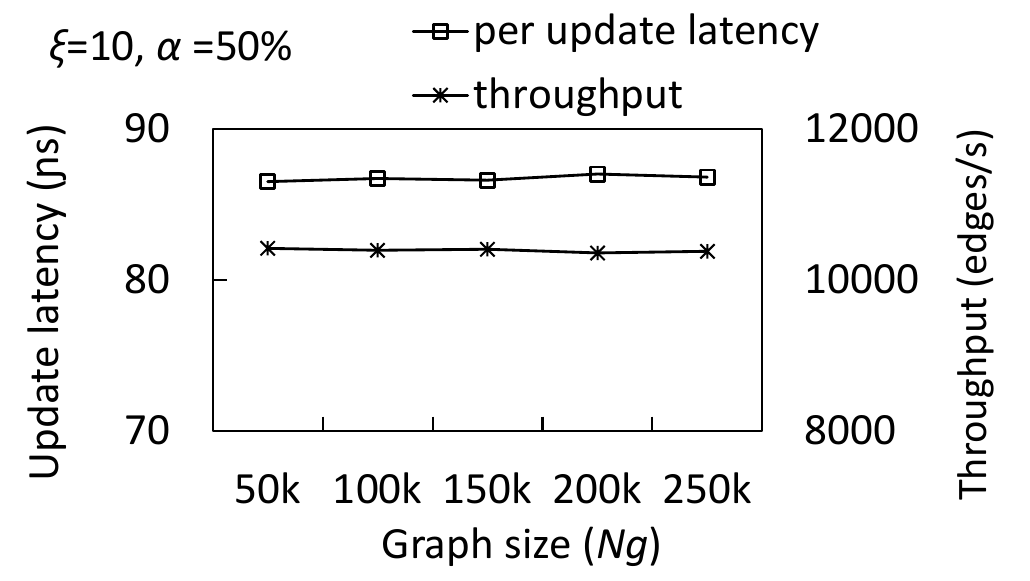}
\vspace{-0.8cm}
\caption{\scriptsize{Throughput w.r.t. $N_g$}}\label{throughput}
\end{minipage}
\begin{minipage}[t]{0.235\textwidth}
\centering
\includegraphics[width=\textwidth]{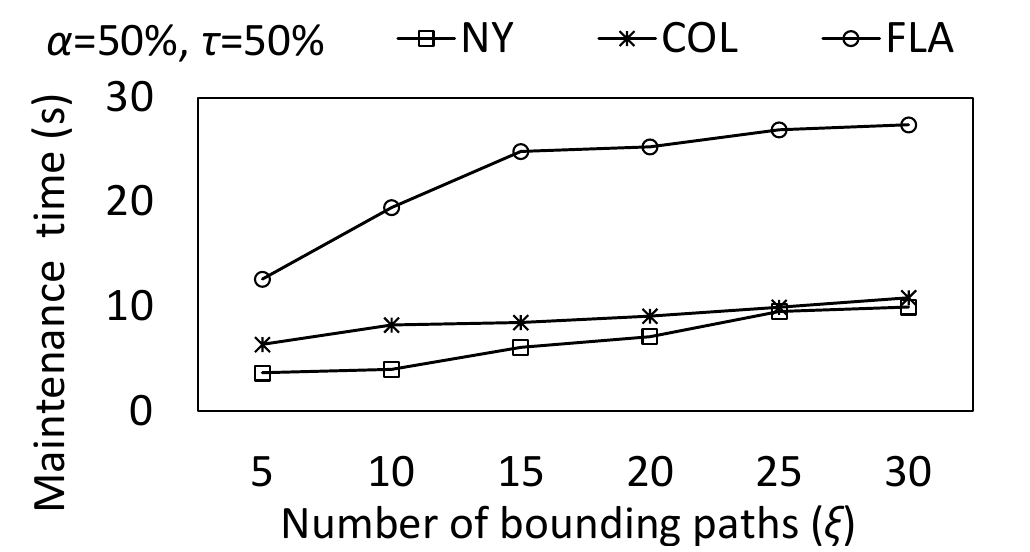}
\vspace{-0.8cm}
\caption{\scriptsize{Maintenance Cost w.r.t $\xi$}}\label{update-time-x}
\end{minipage}
\vspace{-1em}
\end{figure*}

\vspace{-0.05in}
\subsubsection{Construction Cost}
Figures~\ref{skeleton-time-NY}-\ref{skeleton-time-CUSA} depict the time and memory usage to build DTLP for NY, COL, FLA, and CUSA varying the values of $z$.  Building time first decreases and then increases as $z$ grows. This is because the number of subgraphs is reduced when the road network is partitioned into larger subgraphs, resulting in less subgraphs being assigned to each server, so the total building time declines. When $z$ grows beyond a certain threshold (e.g., $z$=100 for NY), however, this decrease is outweighed by the increase in the number, as well as, the average length of bounding paths in each subgraph. For the same reason, memory consumption caused by EP-Index and the skeleton graph shows a trend similar to that of building time. Moreover, we compare the time for building DTLP in the directed and undirected graphs using CUSA. Figure~\ref{skeleton-time-CUSA} shows that the building time for the directed graph is double that of the undirected graph, as two sets of bounding paths (in opposite directions) for each pair of boundary vertices have to be computed in directed graphs. For the same reason, the maintenance cost of DTLP for directed graphs is also almost doubled, as shown in Figure \ref{update-time-directed}.

We further evaluate the influence of the size of a graph on the building time of DTLP. For this purpose, we choose five subgraphs from COL with 50K, 100K, 150K, 200K, and 250K vertices respectively, and use $N_g$ to denote their sizes. We measure the time for building DTLP for these selected graphs, and the results are shown in Figure~\ref{update-time-size} (left vertical axis). Apparently, the construction cost of DTLP increases almost linearly with the size of the graph, as the cost of computing bounding paths is roughly proportional to $N_g$.

\vspace{-0.2cm}
\subsubsection{Maintenance Cost}
We study the trend of the maintenance cost of DTLP on graphs of varying sizes. We change the weights of half of the edges in each graph. Therefore, the number of varying weights is directly proportional to $N_g$, the size of the graph examined.  As shown in Figure~\ref{update-time-size} (right vertical axis), there is an approximately linear ascending trend in the maintenance time with the size of a graph, as the maintenance time is heavily affected by the number of the varying weights which is proportional to  $N_g$. {Additionally, the maximum throughput and per update latency in these graphs are reported in Figure \ref{throughput}. In this evaluation, we continuously apply 1000 rounds of changes to the weights of half of the edges, and measure the maximum throughput and the average per update latency. Apparently, the size of the graph does not have a significant impact on the maximum throughput and per update latency.} 

We further evaluate the time required to update DTLP with different values of $\alpha$ and $\xi$. In Figure~\ref{update-time-x}, we set $\alpha=50\%, \tau=50\%$ and vary $\xi$; in Figure~\ref{update-time-varying-range}, we let $\xi=10, \tau=50\%$, and vary $\alpha$. In these two groups of experiments, all weight updates are fed into the system as a batch, and the maintenance time of DTLP is measured as the time between receiving the new weights and finishing updating the DTLP index. Figure~\ref{update-time-x} shows an ascending trend of the maintenance cost w.r.t. $\xi$, as larger values of $\xi$ cause more bounding paths with bound distances to be updated, resulting in higher maintenance costs. However, the rate of growth slows down when $\xi$ exceeds a certain value (e.g., $\xi$=15 in FLA) because the number of bounding paths in some subgraphs stops increasing when $\xi$ is large enough. 
Figure~\ref{update-time-varying-range} shows an ascending trend with increasing $\alpha$, as more weights need to be processed. 

\vspace{-0.5em}
\subsection{Evaluation of KSP-DG}\label{subsec:ksp-test}
Our next task is to study the impact of different parameters ($z$, $\alpha$, $\xi$, and $k$) on the performance of KSP-DG. The default value of $k$ is 2, but we vary it as needed for our evaluation. 

 \begin{figure*}[!h]
 \begin{minipage}[t]{0.223\textwidth}
\centering
\includegraphics[width=\textwidth]{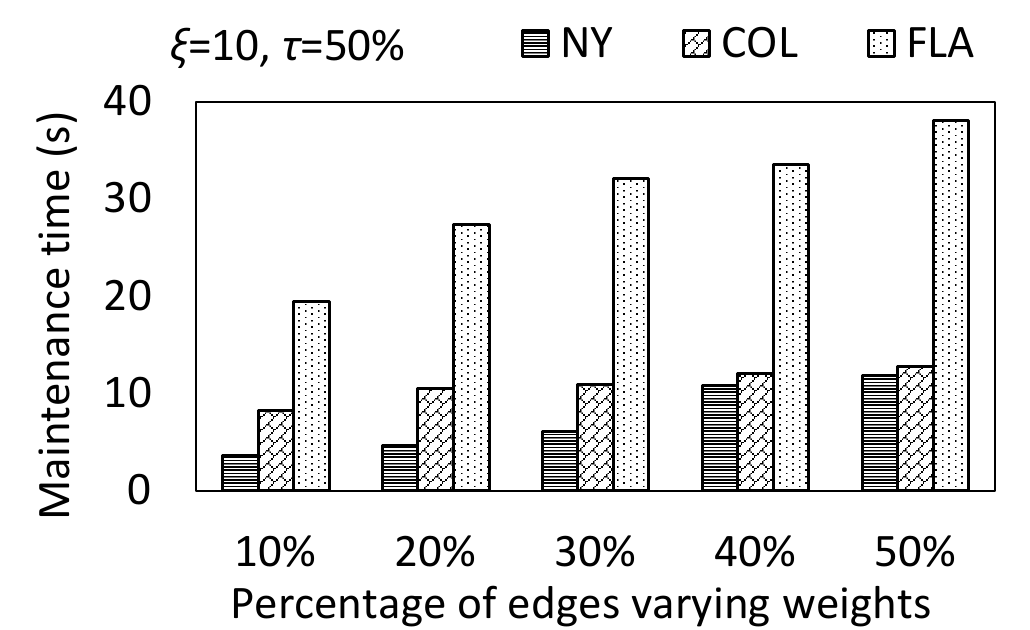}
\vspace{-0.8cm}
\caption{Maintenance Cost w.r.t $\alpha$}\label{update-time-varying-range}
\end{minipage}
\begin{minipage}[t]{0.235\textwidth}
\centering
\includegraphics[width=\textwidth]{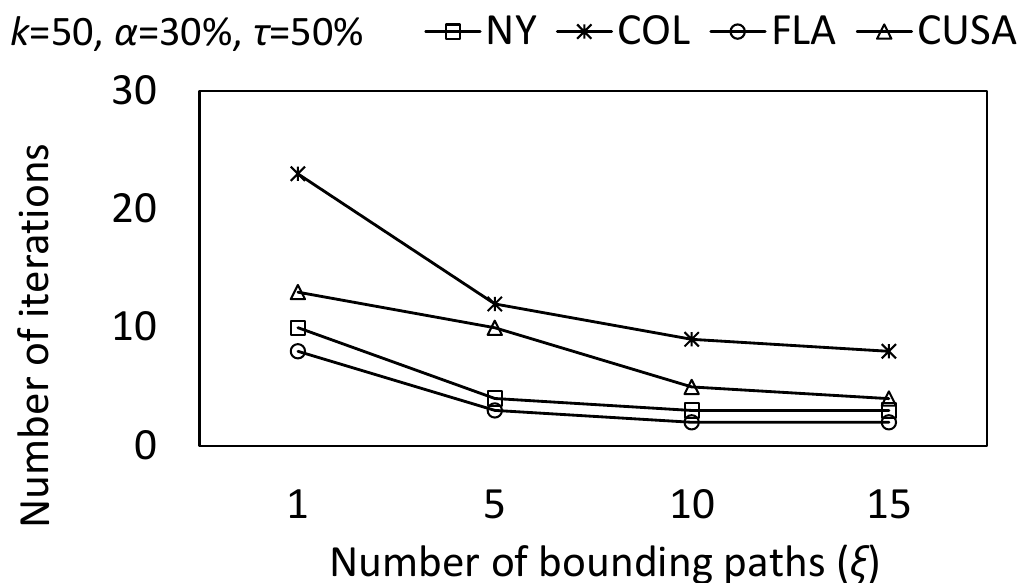}
\vspace{-0.8cm}
\caption{\#Iterations w.r.t. $\xi$}\label{iterations-xi}
\end{minipage}
\begin{minipage}[t]{0.235\textwidth}
\centering
\includegraphics[width=\textwidth]{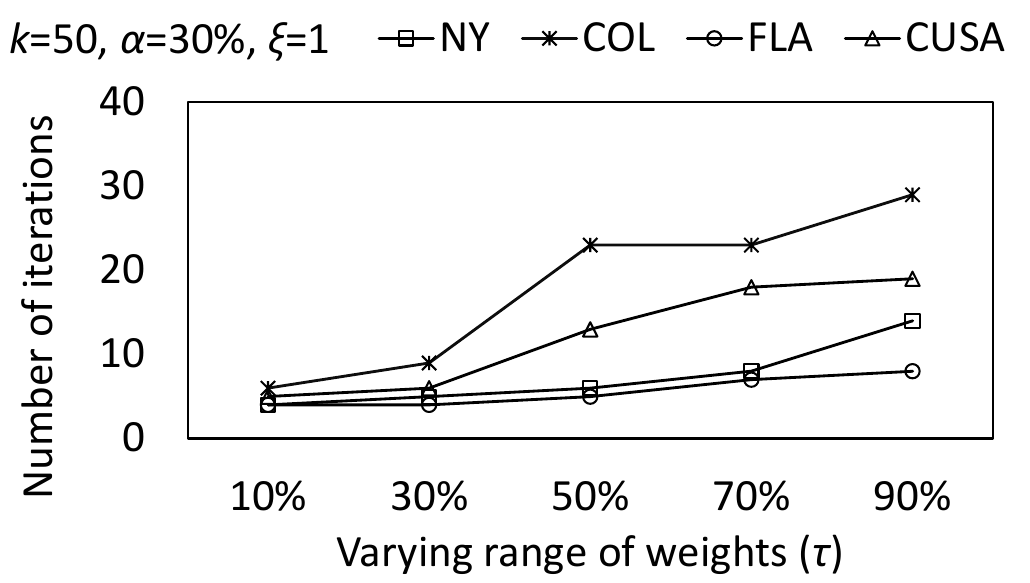}
\vspace{-0.8cm}
\caption{\#Iterations  w.r.t. $\tau$}\label{iterations-varying-rate}
\end{minipage}
\begin{minipage}[t]{0.235\textwidth}
\centering
\includegraphics[width=\textwidth]{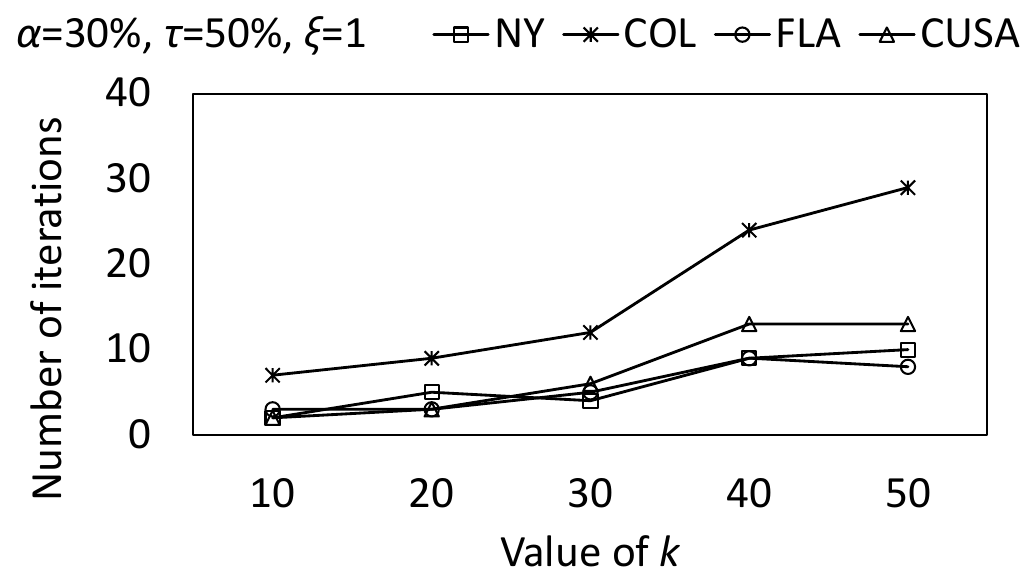}
\vspace{-0.8cm}
\caption{{\#Iterations w.r.t. $k$}}\label{iterations-k}
\end{minipage}
\vspace{-0.8em}
\end{figure*}

\begin{figure*}[htbp]
\begin{minipage}[t]{0.235\textwidth}
\centering
\includegraphics[width=\textwidth]{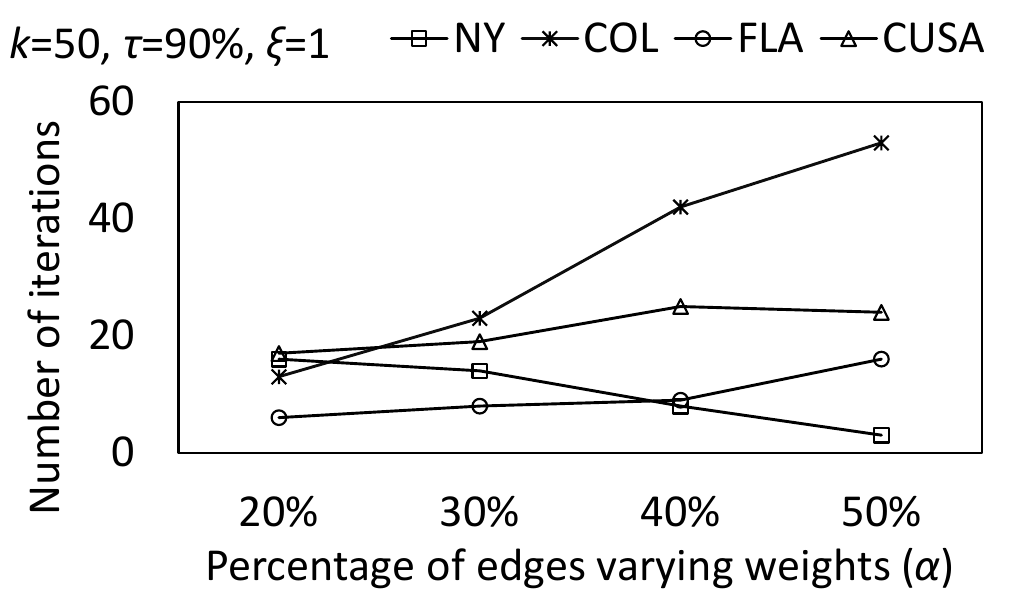}
\vspace{-0.8cm}
\caption{{\#Iterations w.r.t. $\alpha$}}\label{iterations-varying-range}
\end{minipage}
\begin{minipage}[t]{0.235\textwidth}
\centering
\includegraphics[width=\textwidth]{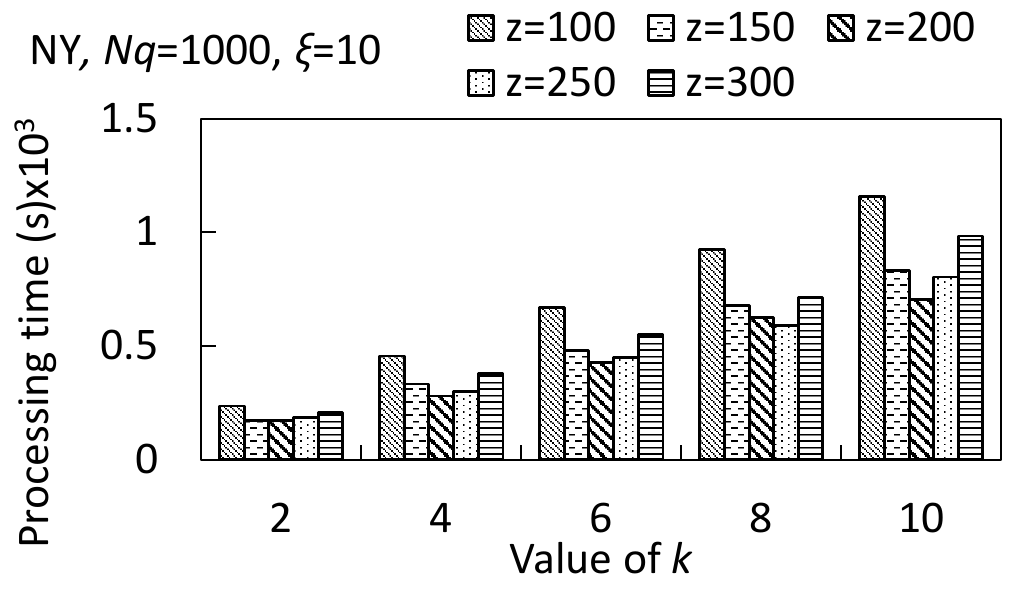}
\vspace{-0.8cm}
\caption{\scriptsize{Processing Time (NY)}}\label{NY-queryTime-k}
\end{minipage}
\begin{minipage}[t]{0.235\textwidth}
\centering
\includegraphics[width=\textwidth]{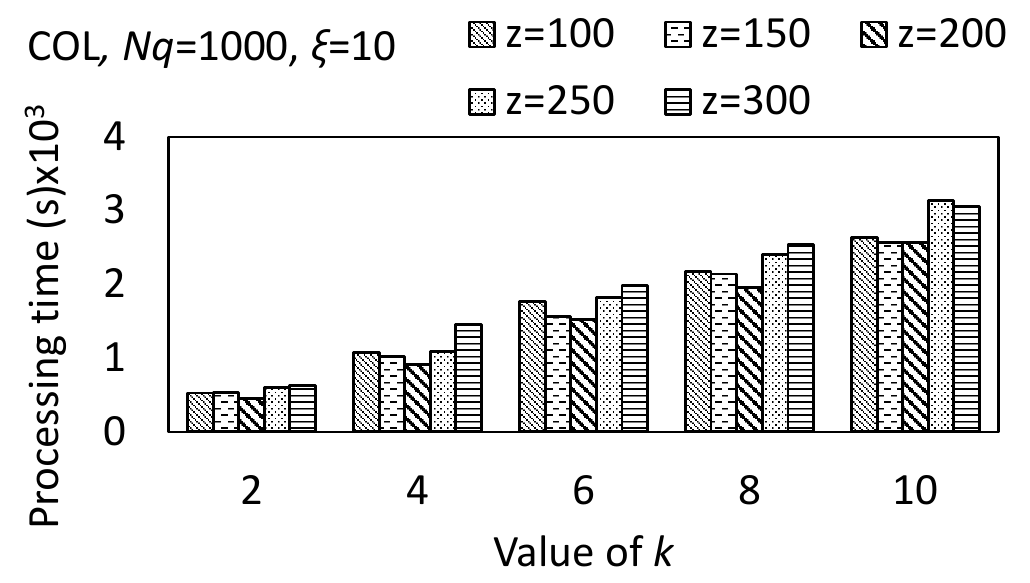}
\vspace{-0.8cm}
\caption{\scriptsize{Processing Time (COL)}}\label{COL-queryTime-k}
\end{minipage}
\begin{minipage}[t]{0.235\textwidth}
\centering
\includegraphics[width=\textwidth]{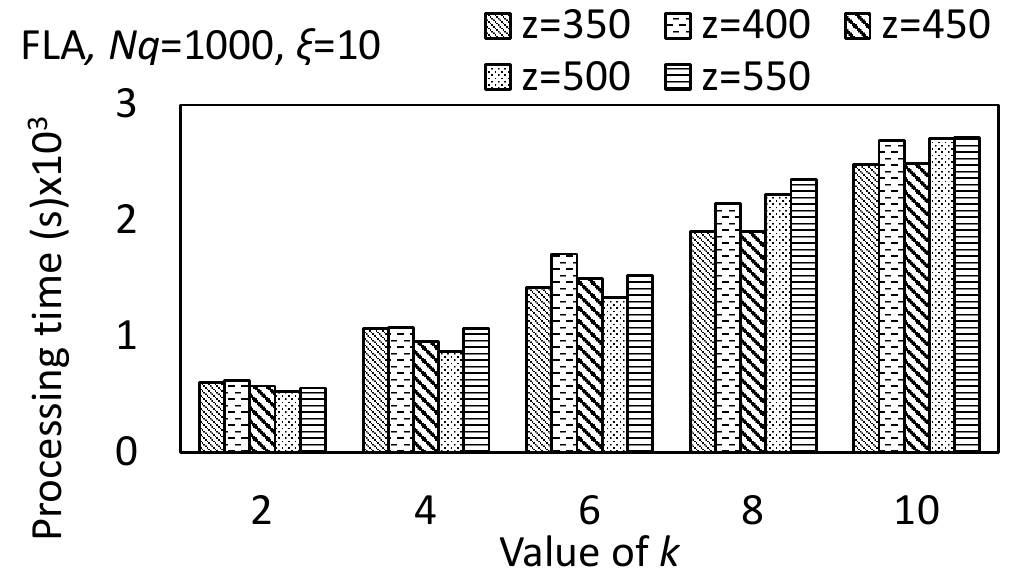}
\vspace{-0.8cm}
\caption{\scriptsize{Processing Time (FLA)}}\label{FLA-queryTime-k}
\end{minipage}

\vspace{-0.1in}
\end{figure*}

\subsubsection{Number of Iterations}
The number of iterations required by KSP-DG with varying values of $\xi$, $\alpha$, $k$, and $\tau$ are shown in Figures~\ref{iterations-xi}-\ref{iterations-varying-range}. {$k$ is set to 50 for effective measuring the the number of iterations as the variation is more apparent when $k$ takes larger values}. Figure~\ref{iterations-xi} shows that the number of iterations significantly decreases with increasing $\xi$, which is as expected, because when more bounding paths are indexed, it narrows the gap between the lower bound distance and the actual shortest distance for the given pair of vertices and thus reduces the number of iterations needed. However, as a higher construction and maintenance cost of DTLP is associated with a larger $\xi$, the value of $\xi$ has to be chosen in a way that balances the processing time of KSP-DG and the construction/maintenance time of DTLP. 

As shown in Figure~\ref{iterations-varying-rate}, the number of iterations increases with growing $\tau$ (the varying range of weights). The reason is that, greater variation in the weights would loosen the lower bound distance and thus weaken the pruning power of the skeleton graph $G_\lambda$. Moreover, greater values of {\em k} would incur more iterations in KSP-DG, as shown in Figure \ref{iterations-k}. But the good news is that the rate of increase is very small when {\em k} is not large (i.e., $k<30$), which should be sufficient for most applications.  The influence of $\alpha$, as shown in Figure \ref{iterations-varying-range}, differs from one dataset to another, implying that its effect may depend on the particular distribution of edges with varying weights. However, it is apparent that the numbers of iterations for all the datasets are small when the weights of the graph are not changing dramatically (i.e., $\alpha<30\%$).

\begin{table}[h]
\tiny
\centering
\caption{Number of Vertices in Skeleton Graph $G_\lambda$ with Varying  $z$}\label{scale-skeleton-graph}
\vspace{-0.15in}
\resizebox{.8\linewidth}{!}{
\begin{tabular}{llllll}
\hline
\makecell[cl]{NY,COL\\FLA\\CUSA}&  \makecell[cl]{{\em z}=100\\{\em z}=350\\{\em z}=800 } &  \makecell[cl]{{\em z}=150\\{\em z}=400\\{\em z}=900}  & \makecell[cl]{{\em z}=200\\{\em z}=450\\{\em z}=1000} & \makecell[cl]{{\em z}=250\\{\em z}=500\\{\em z}=1100}  & \makecell[cl]{{\em z}=300\\{\em z}=550\\{\em z}=1200 }\\
\hline
$G_\lambda$ (NY) & 32,534 & 27,668 & 24,461 & 22,604&  20,775 \\
\hline
$G_\lambda$ (COL) & 36,831 & 30,886 & 27,655 & 25,329 & 23,271\\
\hline
 $G_\lambda$ (FLA) & 60,125 & 57,085 & 54,695 & 52,640 & 50,411\\
 \hline
$G_\lambda$ (CUSA) & 60,125 & 561,085 & 514,618 & 495,606 & 480,801\\
\hline
\end{tabular}}
\end{table}

\subsubsection{Query Processing Time w.r.t. $z$ and $k$}
Figures~\ref{NY-queryTime-k}-\ref{CUSA-queryTime-k} depict the influence of parameters $z$ and $k$ on the query processing time using KSP-DG. In this group of experiments, we randomly generate 1,000 queries ($N_q=1,000$), feed them into the system simultaneously, and measure the total processing time of all the queries. As can be observed from the plots, the processing time first decreases and then increases as $z$ grows.
This is because as $z$ increases, the number of subgraphs decreases, and so does the number of boundary vertices, which in turn leads to a smaller skeleton graph $G_\lambda$ (shown in Table~\ref{scale-skeleton-graph}). Roughly speaking, a smaller $G_\lambda$ means fewer vertices in a reference path and fewer subgraphs to be explored in each iteration. Moreover, the cost of generating partial {\em k} shortest paths within each subgraph grows very slowly when the size of the subgraph is small. Therefore, the processing time decreases as $z$ increases but is still small. However, when $z$ grows beyond a certain threshold (e.g., ~200 for $k=2$ on {NY}), the cost of computing partial {\em k} shortest paths in subgraphs increases significantly and dominates the overall cost. Consequently, the processing time starts to grow as $z$ becomes sufficiently large. In the following discussion, {we set the value of $z$ to 200, 200, 500, and 1000 in NY, COL, FLA, and CUSA respectively, unless otherwise specified.} 

From each figure we also observe that the processing time of KSP-DG increases linearly with the value of $k$, which is because larger values of $k$ lead to more candidate {\em k} shortest paths being generated in each iteration. Moreover, the number of iterations also increases with {\em k}, as shown in Figure~\ref{iterations-k}. 

\vspace{-0.7em}
\subsubsection{Query Processing Time w.r.t. $N_q$}
The scalability of KSP-DG w.r.t. the number of concurrent queries is evaluated, and the results are shown in Figure~\ref{Processing-time-query-number}. We generate multiple batches of queries with different batch sizes (number of queries), and feed each batch into the system to measure the total processing time. From the curves in Figure~\ref{Processing-time-query-number}, we observe that running time of KSP-DG increases approximately linearly w.r.t. the number of queries with a low rate of growth, benefiting from its distributed paradigm. 

\begin{figure*}[!h]
\begin{minipage}[t]{0.235\textwidth}
\centering
\includegraphics[width=\textwidth]{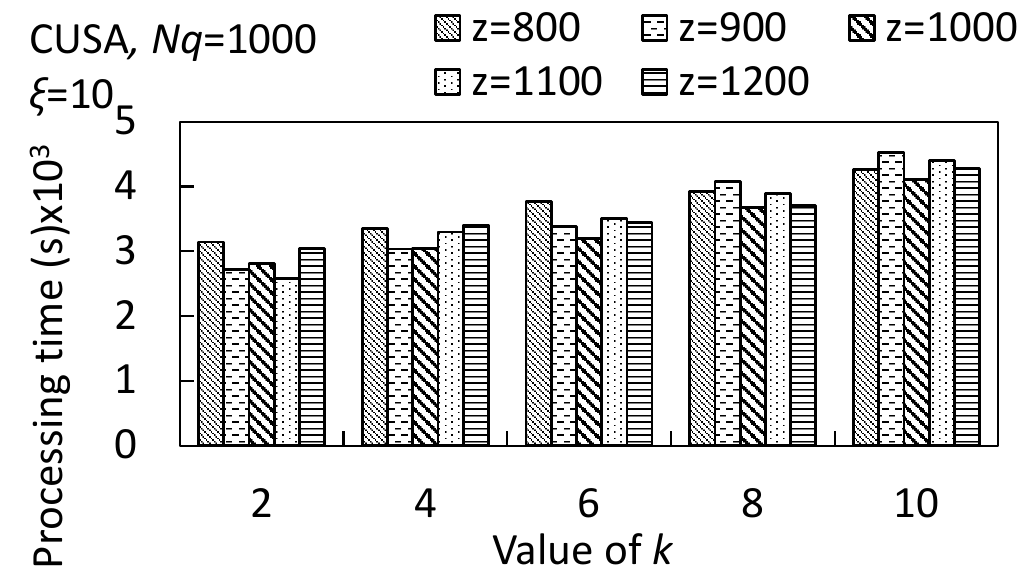}
\vspace{-0.8cm}
\caption{Processing Time (CUSA)}\label{CUSA-queryTime-k}
\end{minipage}
\begin{minipage}[t]{0.235\textwidth}
\centering
\includegraphics[width=\textwidth]{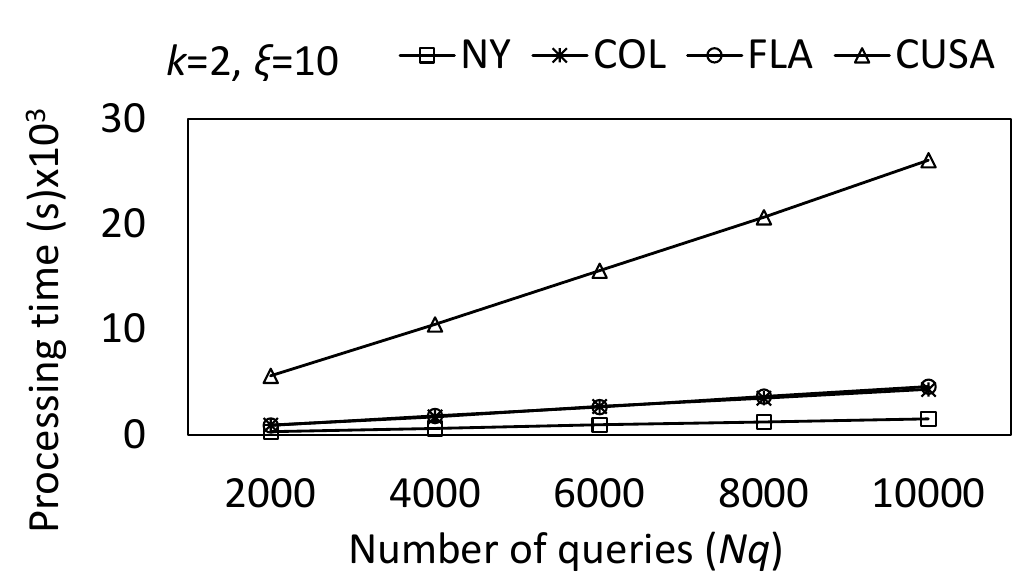}
\vspace{-0.8cm}
\caption{Processing Time w.r.t. $N_q$}\label{Processing-time-query-number}
\end{minipage}
\begin{minipage}[t]{0.235\textwidth}
\centering
\includegraphics[width=\textwidth]{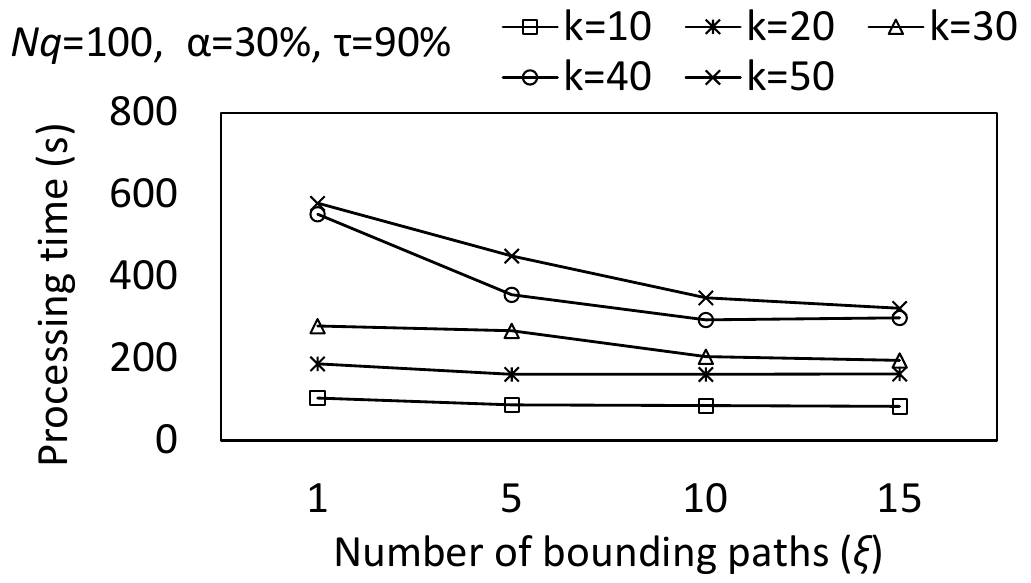}
\vspace{-0.8cm}
\caption{{Processing Time w.r.t. $\xi$}}\label{ProcessingTime-xi}
\end{minipage}
\begin{minipage}[t]{0.235\textwidth}
\centering
\includegraphics[width=\textwidth]{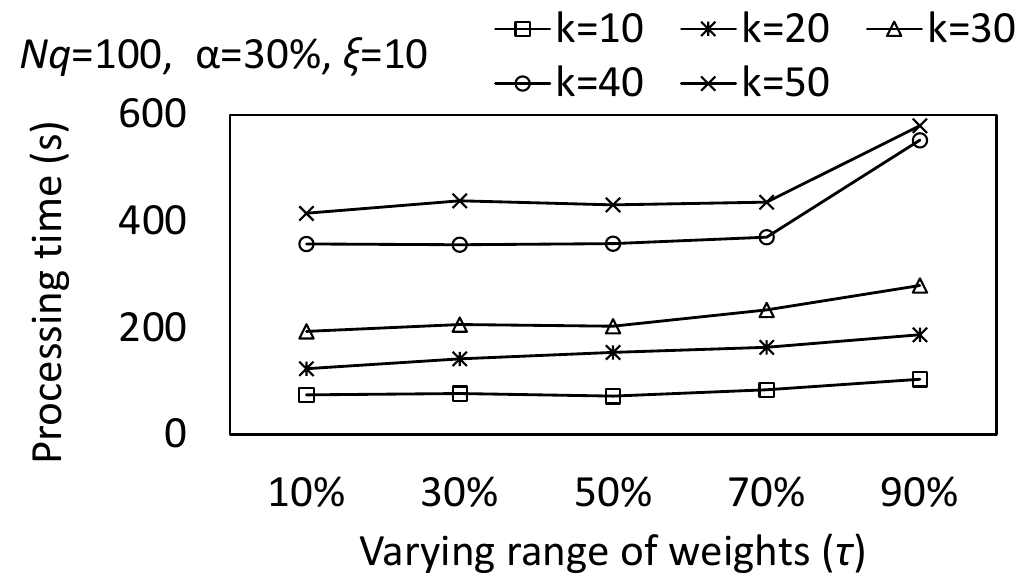}
\vspace{-0.8cm}
\caption{{Processing Time w.r.t. $\tau$}}\label{ProcessingTime-tau}
\end{minipage}
\end{figure*}

\begin{figure*}
    \begin{minipage}[t]{0.235\textwidth}
\centering
\includegraphics[width=\textwidth]{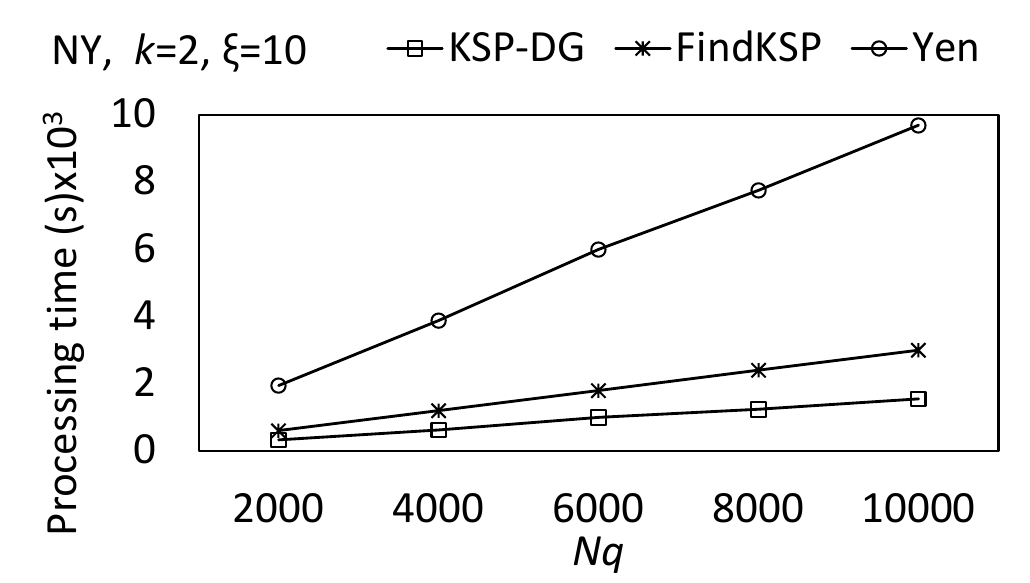}
\vspace{-0.8cm}
\caption{\scriptsize{Comparison in NY}}\label{NY-queryTime-compare}
\end{minipage}
\begin{minipage}[t]{0.235\textwidth}
\centering
\includegraphics[width=\textwidth]{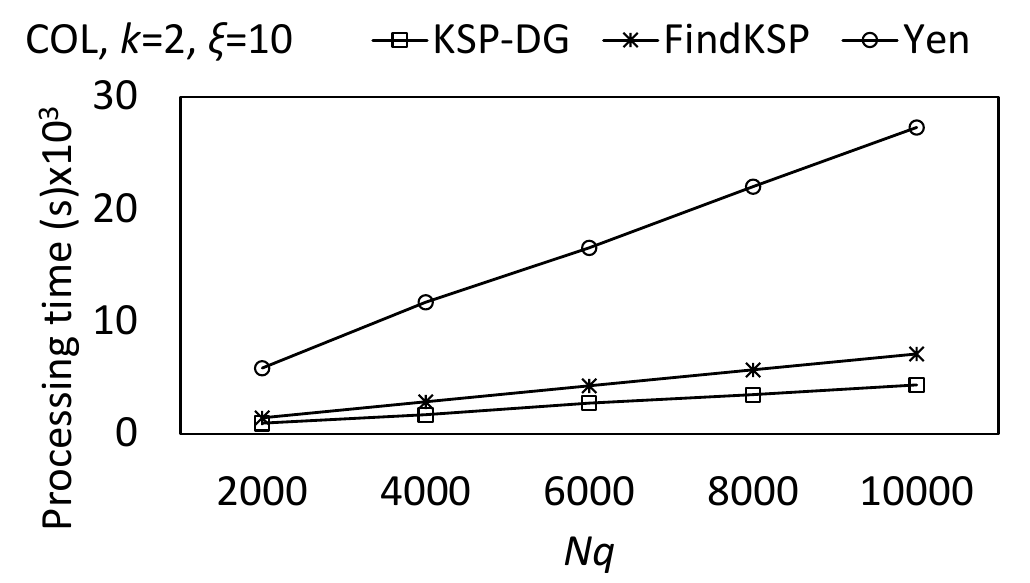}
\vspace{-0.8cm}
\caption{\scriptsize{Comparison in COL}}\label{COL-queryTime-compare}
\end{minipage}
\begin{minipage}[t]{0.235\textwidth}
\centering
\includegraphics[width=\textwidth]{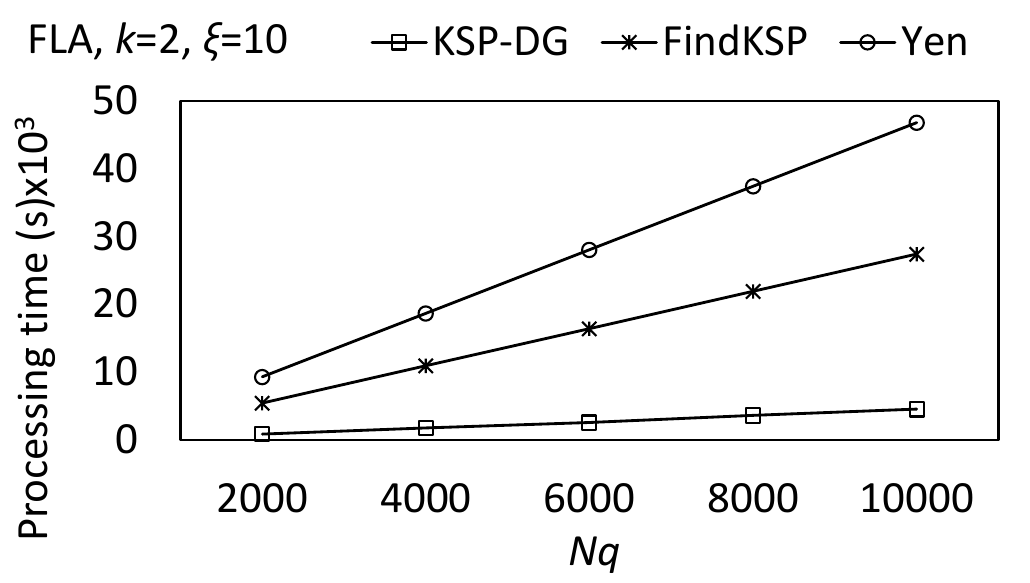}
\vspace{-0.8cm}
\caption{\scriptsize{Comparison in FLA}}\label{FLA-queryTime-compare}
\end{minipage}
\begin{minipage}[t]{0.235\textwidth}
\centering
\includegraphics[width=\textwidth]{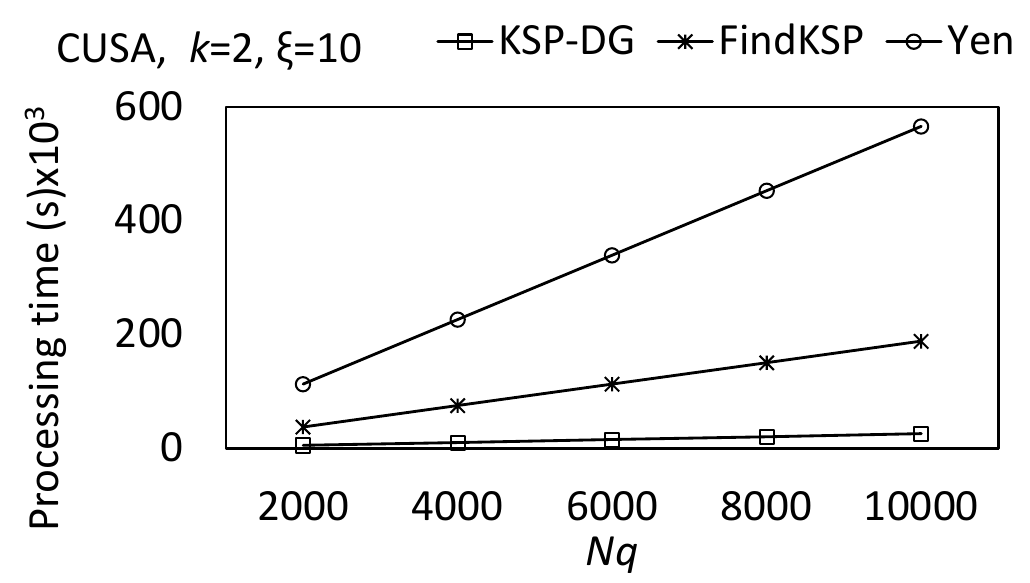}
\vspace{-0.8cm}
\caption{\scriptsize{Comparison in CUSA}}\label{CUSA-queryTime-compare}
\end{minipage}

\end{figure*}

\vspace{-0.5em}
\subsubsection{Query Processing Time w.r.t. $\xi$ and $\tau$}
Figures~\ref{ProcessingTime-xi}-\ref{ProcessingTime-tau} present the impact of $\xi$ and $\tau$ on the running time of KSP-DG. We only display the performance on {NY}; similar trends are observed  on the other datasets. Figure~\ref{ProcessingTime-xi} shows that the running time decreases with increasing $\xi$. This is because a larger $\xi$ leads to a smaller number of iterations, as demonstrated in  Figure~\ref{iterations-xi}. This trend is more apparent when {\em k} takes larger values as more iterations are required by KSP-DG when $k$ is large (as shown in Figure~\ref{iterations-k}). The relationship of $\tau$ and the processing time is evaluated in Figure~\ref{ProcessingTime-tau}, where the processing time slowly increases when $\tau$ grows, as a larger $\tau$ leads to a greater number of iterations (Figure~\ref{iterations-varying-range}).

\subsection{Comparison with Baseline Algorithms}
{We compare KSP-DG to FindKSP \cite{liu2018finding} and Yen's algorithm \cite{yen1971finding} on scalability with respect to the number of queries and $k$, as well as with CANDS \cite{yang2014cands} for the case of $k$=1.}   

Figures~\ref{NY-queryTime-compare}-\ref{CUSA-queryTime-compare} show the scalability comparison of the three algorithms when processing an increasing number of queries in each graph. As is clear from the figures,  KSP-DG significantly outperforms the other two algorithms with a lower rate of growth in processing time. The reason is that KSP-DG allows decomposing the KSP search problem into (much) smaller procedures to be carried out in parallel over the cluster, which is infeasible for FindKSP or Yen's algorithm that has to process the queries in sequence. The performance improvement of KSP-DG over the other two algorithms is more remarkable in large graphs (e.g., CUSA), which demonstrates that KSP-DG is highly suitable in dealing with large graphs and large volumes of concurrent queries.   

\begin{figure*}[!h]
\vspace{-0.05in}
\begin{minipage}[t]{0.235\textwidth}
\centering
\includegraphics[width=\textwidth]{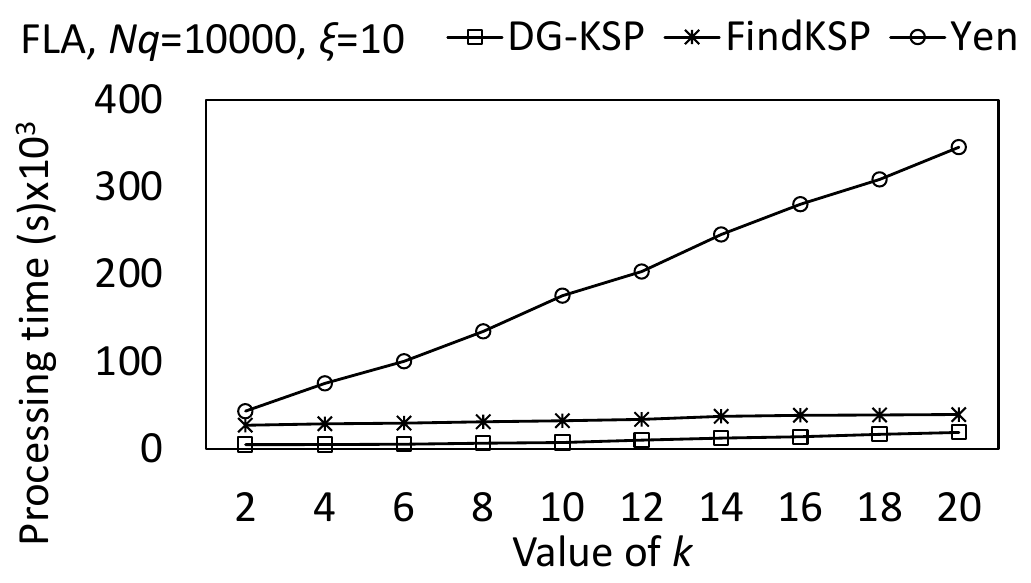}
\vspace{-0.8cm}
\caption{\scriptsize{Comparison w.r.t. {\em k}}}\label{Processing-time-comparison-k}
\end{minipage}
\begin{minipage}[t]{0.235\textwidth}
\centering
\includegraphics[width=\textwidth]{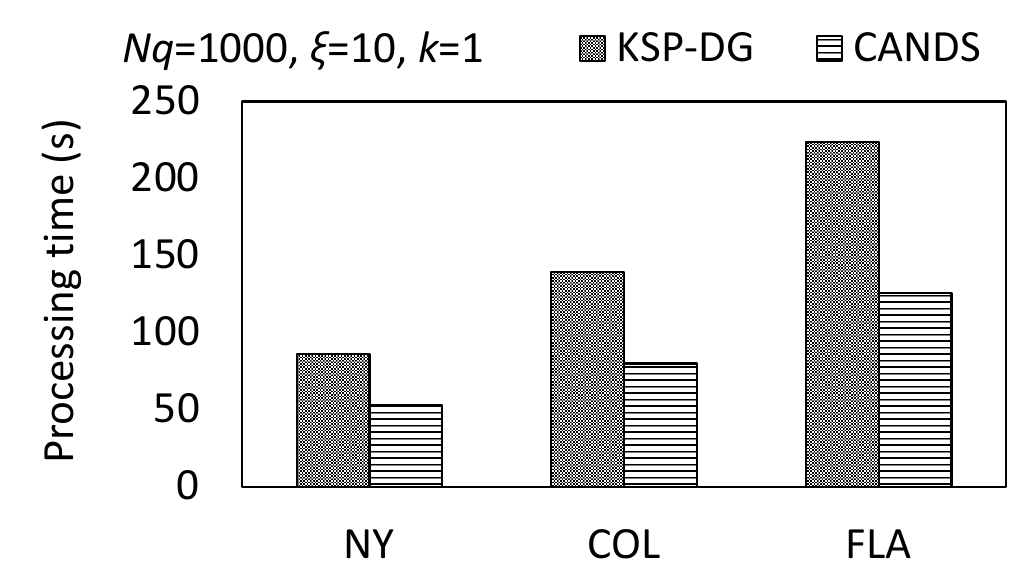}
\vspace{-0.8cm}
\caption{\scriptsize{Comparison with CANDS}}\label{processing time with CANDS}
\end{minipage}
\begin{minipage}[t]{0.23\textwidth}
\centering
\includegraphics[width=\textwidth]{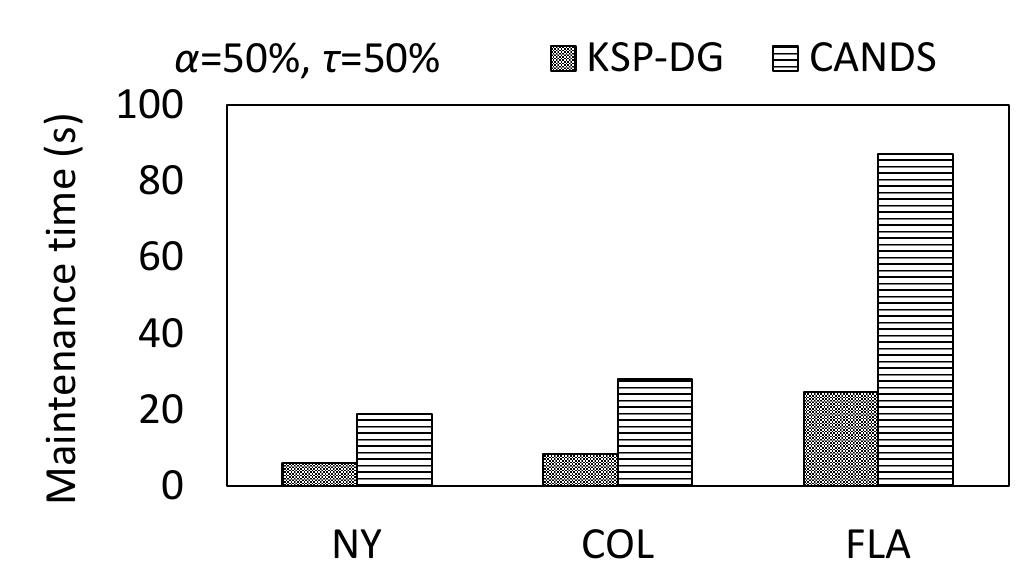}
\vspace{-0.8cm}
\caption{Maintenance Cost}\label{Maintenance-cands}
\end{minipage}
\begin{minipage}[t]{0.24\textwidth}
\centering
\includegraphics[width=\textwidth]{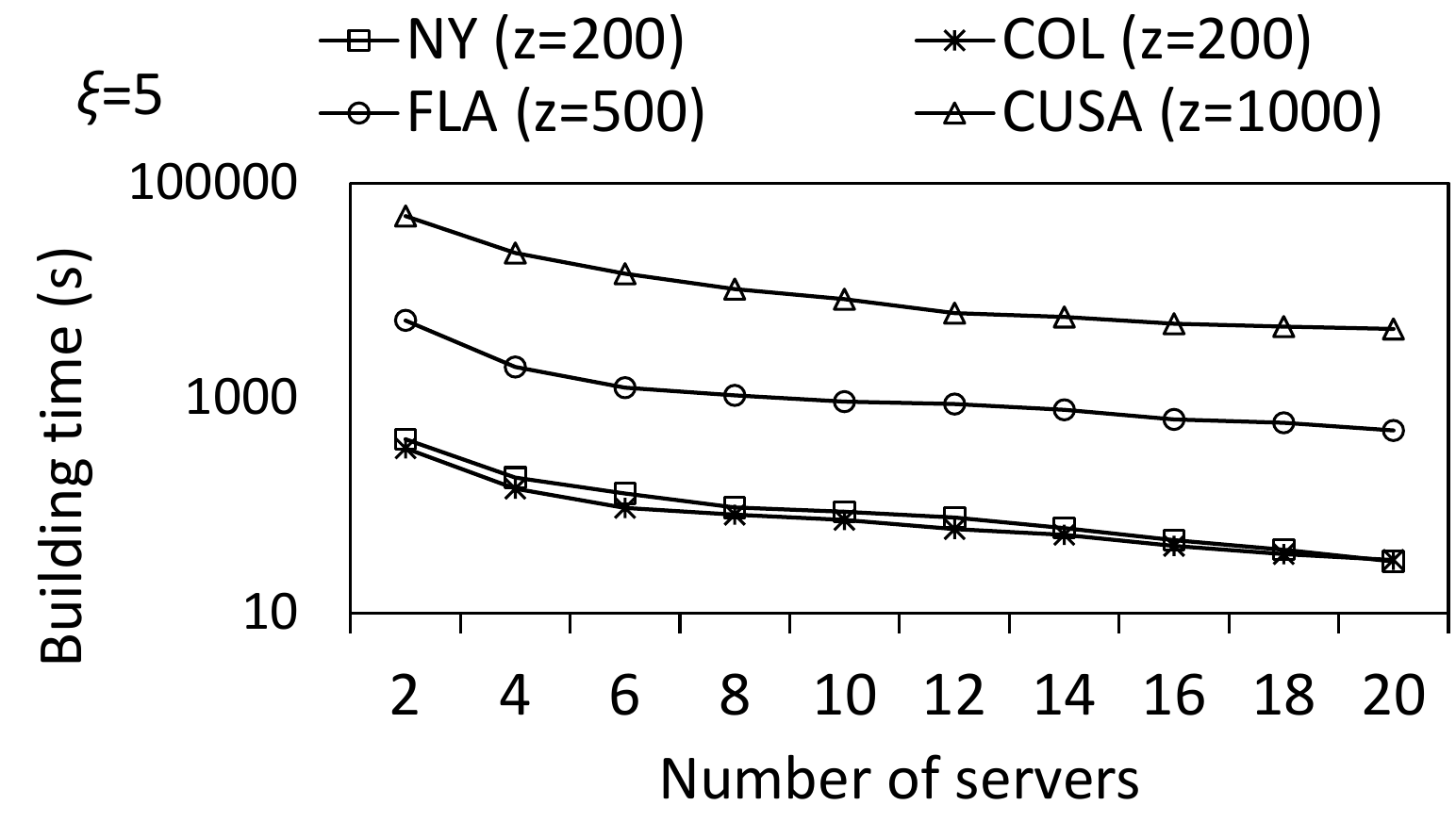}
\vspace{-0.8cm}
\caption{Building Time w.r.t. $N_s$}\label{DTLP-scalability}
\end{minipage}
\vspace{-0.5em}
\end{figure*}

\begin{figure*}[!h]
\vspace{-0.05in}
\begin{minipage}[t]{0.235\textwidth}
\centering
\includegraphics[width=\textwidth]{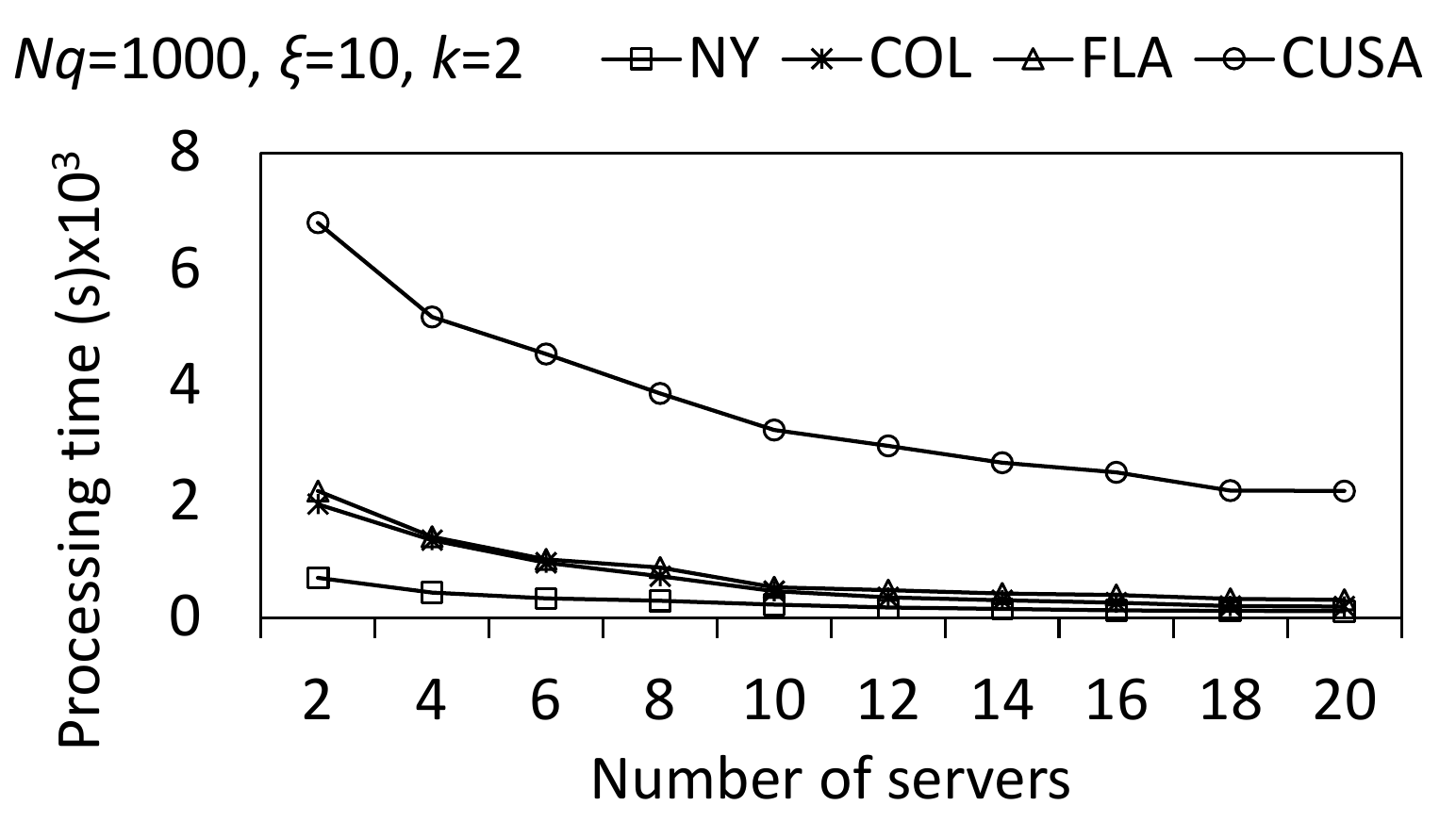}
\vspace{-0.8cm}
\caption{{Processing Time w.r.t. $N_s$}}\label{KSP-DG-Scalability}
\end{minipage}
\begin{minipage}[t]{0.235\textwidth}
\centering
\includegraphics[width=\textwidth]{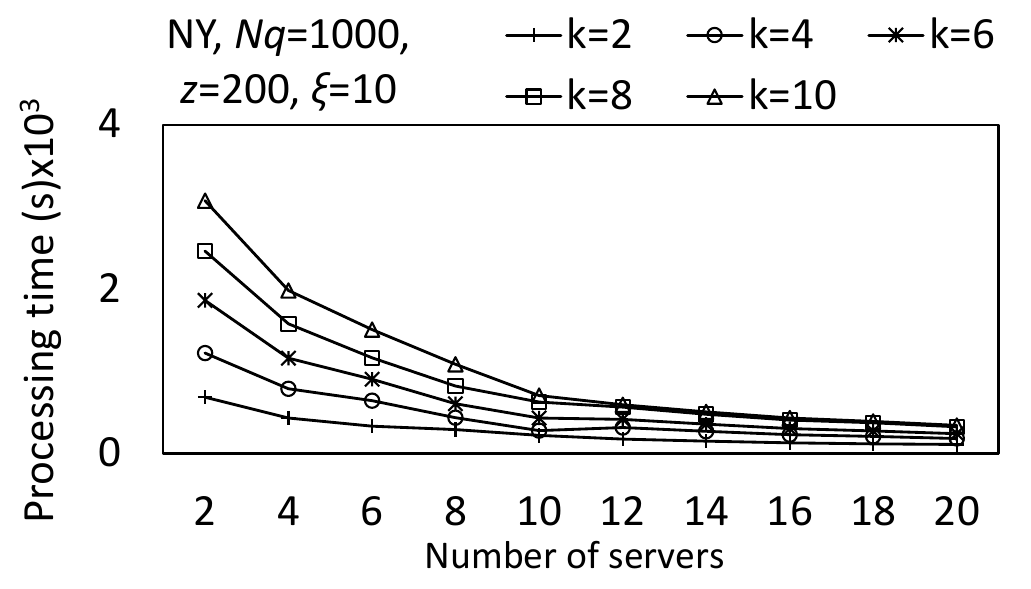}
\vspace{-0.8cm}
\caption{Processing Time w.r.t. k}\label{KSP-DG-Scalability-k}
\end{minipage}
\vspace{-0.05in}
\begin{minipage}[t]{0.235\textwidth}
\centering
\includegraphics[width=\textwidth]{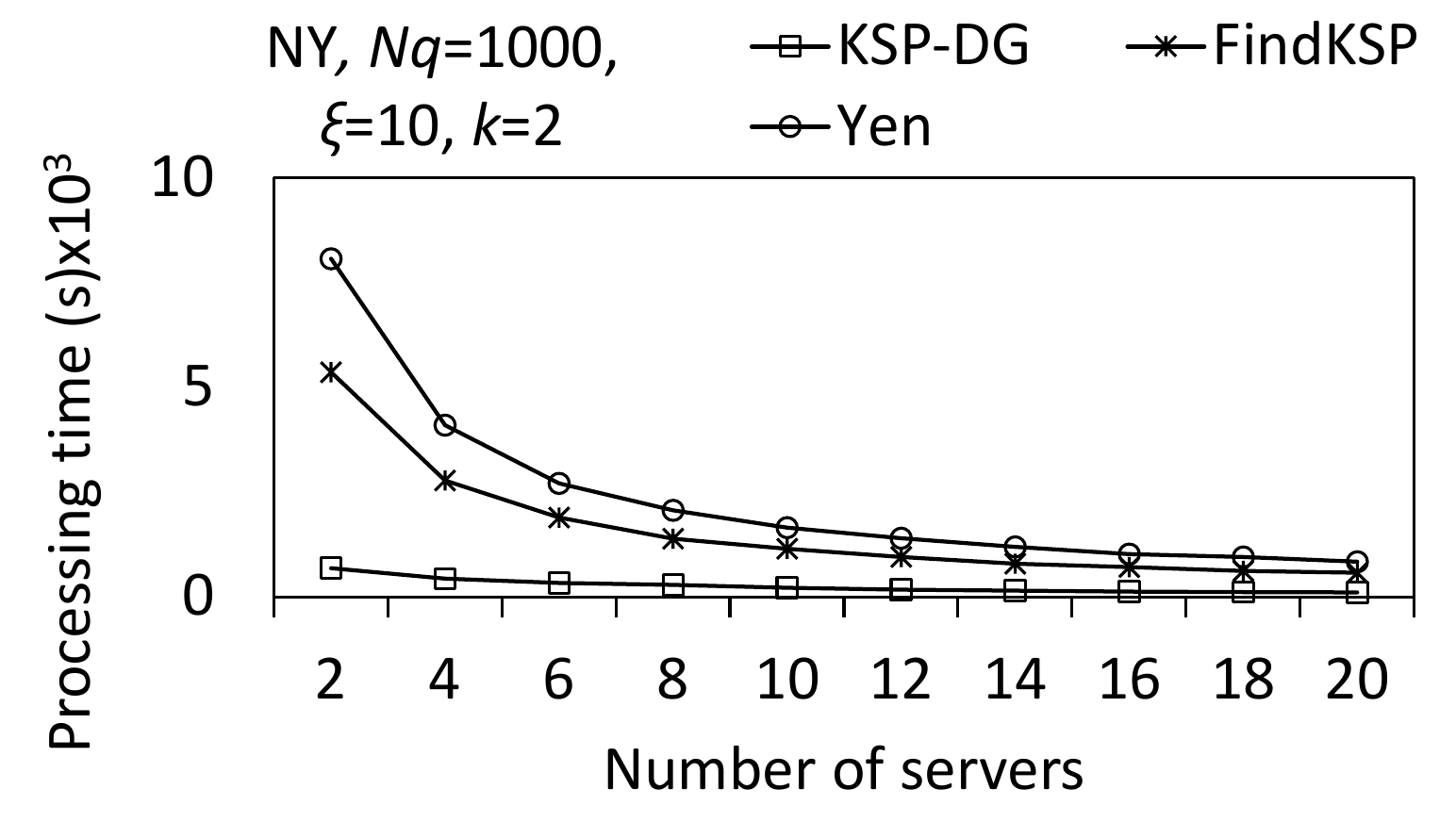}
\vspace{-0.8cm}
\caption{Scalability Comparison}\label{KSP-DG-Scalability-comparison}
\end{minipage}
\begin{minipage}[t]{0.235\textwidth}
\centering
\includegraphics[width=\textwidth]{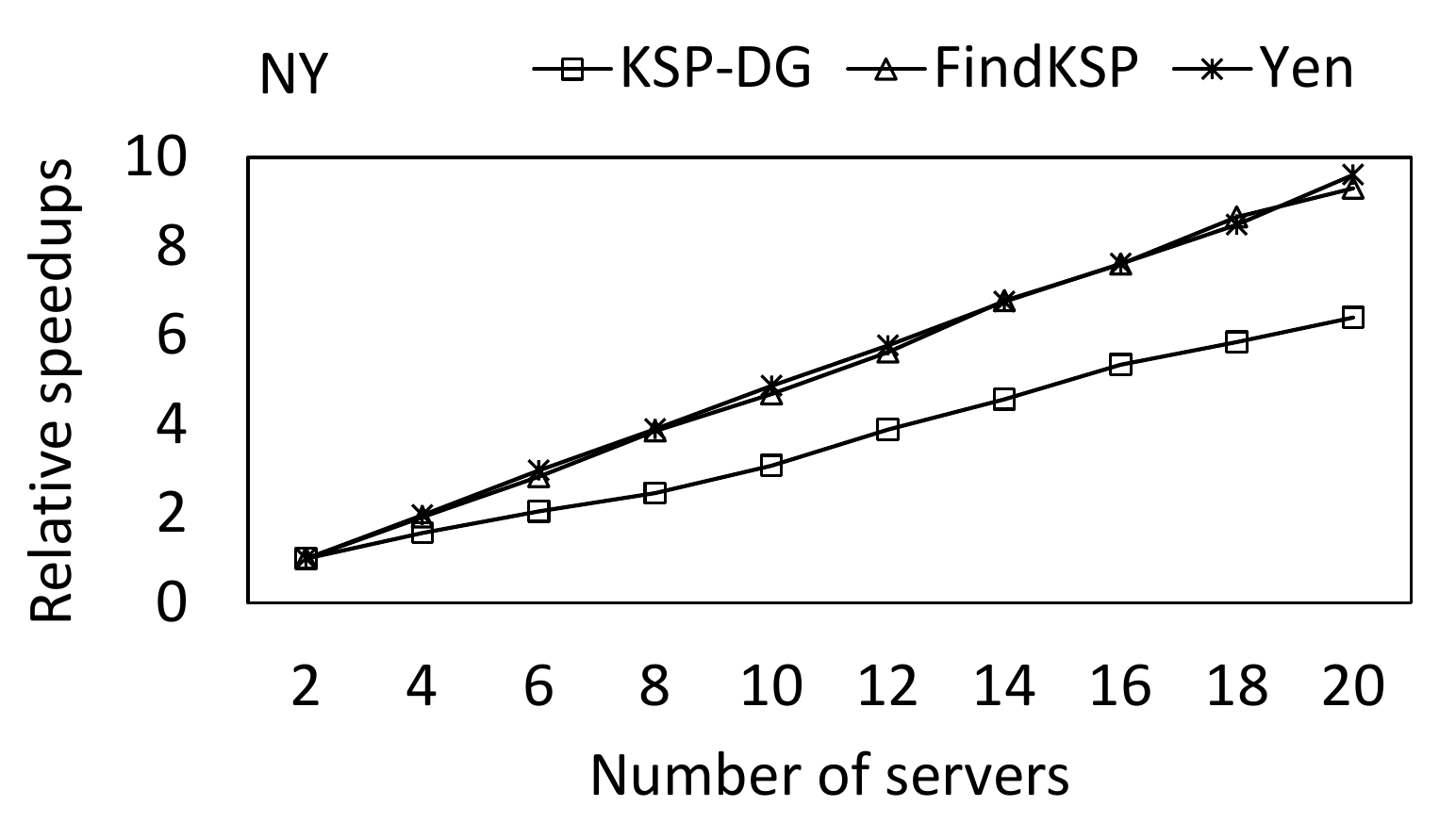}
\vspace{-0.8cm}
\caption{{Relative Speedups}}\label{relative-speedup}
\end{minipage}
\end{figure*}

The comparison of three algorithms w.r.t. $k$ is shown in Figure~\ref{Processing-time-comparison-k}. Given a batch of queries on FLA, we vary the values of {\em k} and measure the cost of the three algorithms for processing this batch of queries. As Figure~\ref{Processing-time-comparison-k} shows, the processing time of KSP-DG and FindKSP grows at a much slower rate than that of Yen as $k$ increases, and KSP-DG performs better than FindKSP. The results on the other datasets demonstrate similar trends and are omitted due to space limitations.

{Figures \ref{processing time with CANDS}-\ref{Maintenance-cands} present the comparison of KSP-DG and CANDS on the processing and maintenance cost for the same set of 1,000 queries ($k$=1) and varying edge weights. As shown in Figure \ref{processing time with CANDS}, CANDS outperforms KSP-DG for single shortest path queries, as the shortest paths between boundary vertices indexed by CANDS better facilitate identifying the single shortest path than the lower bounding paths in KSP-DG. However, the shortest paths indexes in CANDS incur a much greater maintenance cost in dynamic graphs, as shown in Figure~\ref{Maintenance-cands}. This is because almost all indexed shortest paths have to be recomputed when a significant number of edges (e.g., $\alpha=50\%$) are changing weights.}

\vspace{-0.5em}
\subsection{Scaling-out}\label{sec:scale-out}
{To further evaluate the horizontal scalability of DTLP and KSP-DG, we extend the cluster to 20 servers and the results are shown in Figures~\ref{DTLP-scalability}-\ref{relative-speedup}.} As is clear from Figure~\ref{DTLP-scalability}, the building time of DTLP decreases when more servers are introduced, as DTLP can distribute the load to all servers, and thus is able to utilize the power of a cluster.

Figure~\ref{KSP-DG-Scalability} demonstrates the performance of KSP-DG with a varying number of servers.  We feed a batch of 1,000 queries into KSP-DG with a different number of servers employed. In each figure, we observe a notable reduction of processing cost when more servers are used, which testifies to the horizontal scalability of KSP-DG. We also feed KSP-DG with 1,000 queries with different values of $k$ to further validate its scalability by varying the number of servers, and the results are illustrated in Figure~\ref{KSP-DG-Scalability-k}. It is observed that the running time of KSP-DG significantly decreases with more servers being used, regardless of the value of $k$.

Figure~\ref{KSP-DG-Scalability-comparison} depicts the scalability of the three algorithms when processing the same group of queries on a different number of servers. Since FindKSP and Yen's algorithm are centralized, we run these two algorithms on every server individually and then distribute all queries to the adopted servers randomly for fair comparison. The results show that KSP-DG always outperforms FindKSP and Yen's algorithm. {The relative speedups of the three algorithms with a varying number of servers are shown in Figure~\ref{relative-speedup}. It can be observed that the relative speedup of each algorithm grows linearly with the number of servers.}

{Finally, we measure the load on each server in terms of average CPU utilization and memory consumption when  the size of the cluster increases for building DTLP, updating DTLP, and processing queries on CUSA. The results show that the difference between the maximum and minimum CPU utilization across the cluster is consistently less than 6\%, and for memory utilization, less than 2\%, indicating load balancing can be achieved over a range of cluster sizes. }
\vspace{-0.5em}
\section{Related work}\label{sec:related-work}
 There exists a large body of work on the problem of identifying paths with certain properties in graphs \cite{Route-Planning-Hannah-2016,meng2015h,li2018resilient, li2020leveraging}. We briefly review literature relevant to at least one of the three components to our work and its proposed solutions, namely, {\em k shortest paths}, {\em dynamic graphs}, and {\em distributed processing}.  We provide a summary in the following table, where {\small $\otimes$} denotes the relevance of the work to a specific component.
\begin{table}[htbp]\scriptsize
\vspace{-0.18cm}
\centering
\begin{tabular}{|m{12em}|m{5em}|m{5em}|m{5em}|}
\hline
 &$k$ Shortest Paths& Dynamic Graphs & Distributed Processing\\
\hline
\makecell[cl]{Centralized KSP algorithms\\\cite{liu2018finding,yen1971finding,eppstein1998finding,hershberger2007finding,gao2010fast,gao2012holistic,hershberger2001vickrey,chang2015efficiently}} &$\otimes$& $-$& $-$\\
\hline
\makecell[cl]{Distributed SSP algorithms\\\cite{aridhi2015mapreduce,qiu2018parapll,chandy1982distributed, DistributedSP-Baruch-1989,Elkin2017Distributed,Ghaffari2017Improved}} &$-$& $-$& $\otimes$ \\
\hline
\makecell[cl]{Distributed SSP algorithms\\ in a dynamic graph \cite{yang2014cands}}&$-$&$\otimes$&$\otimes$\\
\hline
\end{tabular}
\vspace{-0.35cm}
\end{table}

\textbf{Centralized KSP Algorithms.}
Yen's algorithm \cite{yen1971finding} identifies KSPs based on a deviation paradigm, which first computes the shortest path, and then generates all candidate paths that deviate from this path by applying Dijkstra's algorithm repeatedly, from which the shortest is selected as the next shortest path. It repeats the above steps until KSPs have been determined. Some methods \cite{hershberger2001vickrey,hershberger2007finding,sun2009study,chang2015efficiently, eppstein1998finding, gao2010fast, gao2012holistic,liu2018finding} are proposed to further optimize the generation of candidate paths.  \cite{hershberger2007finding, hershberger2001vickrey, chang2015efficiently} partition the candidate shortest paths into equivalence classes and safely prune specific classes that are impossible to contain {\em k} shortest paths. Others adopt the Shortest Path Tree (SPT for short) \cite{eppstein1998finding,gao2010fast,gao2012holistic,chang2015efficiently,liu2018finding} to help identify candidate shortest paths, where SPT maintains the shortest paths from all vertices to the terminal vertex. 

All of the above proposals suffer from the following drawbacks if directly applied to our problem. First, most of them require access to the entire graph during their operation; as a result the only option is to replicate the entire dynamic graph on each server, allowing them to operate in a distributed fashion. This however has significant cost and scalability implications. Second, the majority of these algorithms adopt a sequential strategy that necessitates a search for the shortest paths one after the other, which limits their ability to handle many concurrent queries in a distributed setting. Finally, some of them require building a path index such as SPT \cite{eppstein1998finding,gao2010fast,gao2012holistic} for every query, which is too heavy-weight for dynamic graphs as the index often become invalid due to varying weights.

\textbf{Distributed SSP Algorithms.}
Past work focuses on identifying a Single Shortest Path (SSP) \cite{chandy1982distributed,DistributedSP-Baruch-1989,Elkin2017Distributed,Ghaffari2017Improved,Forster2017A,aridhi2015mapreduce,qiu2018parapll} over a {\em static} graph in a distributed fashion. \cite{chandy1982distributed, DistributedSP-Baruch-1989,Elkin2017Distributed,Ghaffari2017Improved} aim to determine the SSP in a communication network, which is distributed if each vertex represents a processor. 
Others investigate the distributed computation of the SSP on a cluster of servers \cite{aridhi2015mapreduce, qiu2018parapll}. Qiu et al. \cite{qiu2018parapll} identify the SSP based on the principle of pruned landmark labeling \cite{akiba2013fast}, while Aridhi et al. \cite{aridhi2015mapreduce} propose a distributed algorithm to identify an approximate shortest path in a large-scale network with MapReduce.

These distributed SSP algorithms suffer from the following problems when applied to our setting. First, they are designed for static graphs, and thus most index structures employed in those approaches such as  {\em k-shortcut hopset} \cite{Elkin2017Distributed,Ghaffari2017Improved,Forster2017A} and {\em node labels} \cite{qiu2018parapll} quickly become obsolete in a dynamic graph rendering them unusable. Second, all of them focus on the search for a single shortest path and its nontrivial to extend, to the case of finding KSPs.

\textbf{Distributed SSP in Dynamic Graphs.} Yang et al. \cite{yang2014cands} propose a distributed algorithm CANDS to identify the SSP in a dynamic graph. It partitions the entire graph into subgraphs residing on different servers, and indexes the shortest path between any pair of boundary vertices within each subgraph. Although seemingly similar to our solution, this approach would not work well when directly applied to our problem. First, the search procedure of CANDS is essentially sequential. For a given query, it starts from the subgraph covering the source vertex and iteratively expands to other subgraphs via the indexed shortest paths using Dijkstra's algorithm until reaching the subgraph containing the destination vertex. The examined subgraphs are not known initially and they have to be explored in order. Second, the indexed shortest paths in CANDS require frequent re-computation in a dynamic graph, which can be quite expensive. Finally, CANDS is  designed for identifying single shortest path, and it is non-trivial to adapt this paradigm to identify KSPs. 
\vspace{-1em}
\section{Conclusions and Future Work}\label{sec:con}
The problem of identifying KSPs over road networks is fundamental to many location-based services. Our work is the first to focus on this problem,  and proposes a suite of distributed solutions. Consisting of the bounding paths and the skeleton graph, DTLP provides reference paths assisting the identification of relevant subgraphs that need to be explored to identify KSPs. Since the bounding paths in DTLP do not change with varying weights, DTLP is light-weight for dynamic graphs with low maintenance cost. Based on DTLP, the KSP-DG algorithm is designed to run in a distributed setting. Each iteration of KSP-DG consists of filter and refine steps: the filter step first identifies the relevant subgraphs that are most likely to cover the KSPs, and the refine step then examines these subgraphs to generate candidate {\em k} shortest paths, which are used to update the list of shortest paths obtained so far. 

As future work, one may consider addressing variants of the problem studied in this paper. 
For example, a constrained version of the KSP query may require all shortest paths to pass through some designated vertices; another version may involve limiting the diversity of the shortest paths to below a certain threshold. These variants have important practical applications, and it is worthwhile to investigate solutions in a distributed environment.
\balance
\bibliographystyle{ACM-Reference-Format}
\bibliography{reference}

\end{document}